\documentclass[11pt]{article}
\usepackage{amssymb,latexsym,makeidx,epsfig}
\usepackage{amsthm}
\usepackage{dsfont}
\usepackage{amsmath}
\usepackage{lipsum}
\usepackage{enumerate}
\usepackage{mathrsfs}
\usepackage{xcolor}
\usepackage{tikz}
\usepackage{caption}
\usepackage{setspace}
\usepackage[margin=1in, letterpaper]{geometry}
\usepackage{appendix}
\usepackage{multirow}
\usepackage{subcaption}
\usepackage[colorlinks=true, linkcolor=black, anchorcolor=black, citecolor=black, urlcolor=black, CJKbookmarks=true]{hyperref}
\usepackage{microtype}
\usepackage{colonequals}
\allowdisplaybreaks

\newtheorem{theorem}{Theorem}[section]

\newtheorem{definition}[theorem]{Definition}
\newtheorem{lemma}[theorem]{Lemma}

\newtheorem{proposition}[theorem]{Proposition}
\newtheorem{corollary}[theorem]{Corollary}

\newtheorem{conjecture}[theorem]{Conjecture}
\newtheorem{claim}{Claim}[section]

\numberwithin{figure}{section}
\numberwithin{equation}{section}

\newcommand{\ex}{\operatorname{ex}}
\newcommand{\Rem}{\operatorname{Rem}}
\newcommand{\cut}{\operatorname{cut}}

\begin{document}

\title{NP-hardness of additive approximation for generalized Tur\'{a}n problems}
\author{%
\normalsize
\begin{tabular}{@{}c@{\hspace{0.02\textwidth}}c@{\hspace{0.02\textwidth}}c@{}}
    \begin{minipage}[t]{0.31\textwidth}
        \centering
        Yihan Chen \\
        School of Mathematical Sciences, \\
        University of Science and Technology of China, \\
        Hefei, China \\
        \texttt{cyh2020@mail.ustc.edu.cn}
    \end{minipage}
    &
    \begin{minipage}[t]{0.335\textwidth}
        \centering
        Jialin He \\
        School of Mathematical Sciences, \\
        Key Laboratory of MEA, \\
        Shanghai Key Laboratory of PMMP, \\
        East China Normal University, \\
        Shanghai, China \\
        \texttt{jlhe@math.ecnu.edu.cn}
    \end{minipage}
    &
    \begin{minipage}[t]{0.31\textwidth}
        \centering
        Xihe Li \\
        School of Mathematics and Statistics, \\
        Shaanxi Normal University, \\
        Xi'an, China \\
        \texttt{xiheli@snnu.edu.cn}
    \end{minipage}
    \\[1.5em]
    \multicolumn{3}{@{}c@{}}{%
        \begin{tabular}{@{}c@{\hspace{0.035\textwidth}}c@{}}
            \begin{minipage}[t]{0.48\textwidth}
                \centering
                Jie Ma \\
                School of Mathematical Sciences, \\
                University of Science and Technology of China, \\
                Hefei, China \\
                Yau Mathematical Sciences Center, \\
                Tsinghua University, \\
                Beijing, China \\
                \texttt{jiema@ustc.edu.cn}
            \end{minipage}
            &
            \begin{minipage}[t]{0.48\textwidth}
                \centering
                Tianying Xie \\
                School of Mathematics and Statistics, \\
                Fuzhou University, \\
                Fuzhou, China \\
                \texttt{xiety@fzu.edu.cn}
            \end{minipage}
        \end{tabular}%
    }
\end{tabular}%
}

\date{}
\maketitle

\begin{center}
\begin{minipage}{145mm}

\begin{center}
{\bf Abstract}
\end{center}
{
Given a graph $T$ and a nonempty family $\mathcal{F}$ of graphs, let $\ex(G, T, \mathcal{F})$ denote the maximum number of copies of $T$ in an $\mathcal{F}$-free subgraph of a host graph $G$. Alon and Shikhelman~[{\it Algorithmica}, 2022] conjectured that if $e(T)> 0$ and no member of $\mathcal{F}$ is a subgraph of a blow-up of $T$, then for every $\varepsilon>0$, approximating $\ex(G, T, \mathcal{F})$ within an additive error of $n^{v(T)-\varepsilon}$ is NP-hard for an input graph $G$ on $n$ vertices. 
Previous advances toward this conjecture include the resolution of the canonical case $T=K_2$ by Alon, Shapira, and Sudakov~[{\it Ann. of Math.}, 2009], as well as the case of $\ex(G,K_m,K_k)$ by Alon and Shikhelman.
In this paper, we confirm this conjecture for every bipartite graph $T$ and every clique $T$. 
Our proof reductions combine several techniques, including iterative leaf additions, symmetric rooted extensions, and random transversals. 
Together, they encode the optimal value of the
NP-hard Max $r$-Colorable Subgraph problem into the objectives $\ex(G, T, \mathcal{F})$ for a suitably constructed graph $G$.
\hspace{2em}

\vskip 0.2in \noindent {\bf Keywords}: \ Generalized Tur\'{a}n problems; Graph approximation algorithms; Hardness of approximation; Graph modification problems
}
\end{minipage}
\end{center}

\clearpage

\section{Introduction}
\label{sec:introduction}

Graph modification problems ask for the minimum number of modifications needed to transform an input graph $G$ into a graph satisfying some predefined property $\mathcal{P}$.
Common modifications include adding or deleting edges or vertices.
The systematic study of such problems was initiated by Yannakakis~\cite{Yan978,Yan981}, who proved that the edge-deletion problem is NP-hard for several natural graph properties, such as outerplanarity, transitive orientability, and certain forbidden-subgraph properties.
Apart from classical graph-theoretic aspects, these problems also arise in numerous practical applications, including machine learning~\cite{BaBC,FrKN,Vel}, molecular biology~\cite{GoKS}, epidemic control~\cite{GaMa,YSPJ}, numerical algebra~\cite{Rose}, communication network design~\cite{XuSa}, and circuit design~\cite{ElCo}.

In~\cite{Yan981}, Yannakakis raised the question of proving NP-hardness for general families of properties.
For a family $\mathcal{F}$ of graphs,\footnote{Throughout the paper, unless stated otherwise, a family of graphs may be finite or infinite; all graph families are nonempty, and every graph therein has at least one edge.} we say that a graph is {\it $\mathcal{F}$-free} if it contains no member of $\mathcal{F}$ as a (not necessarily induced) subgraph.
For a graph $G=(V,E)$, let $v(G)\colonequals |V(G)|$ and $e(G)\colonequals |E(G)|$ denote its numbers of vertices and edges, respectively. Let $\chi(G)$ denote the chromatic number of $G$, and let $\omega(G)$ denote its clique number, that is, the order of a maximum complete subgraph in $G$. Let $\Rem_{\mathcal{F}}(G)$ be the minimum number of edges whose deletion turns $G$ into an $\mathcal{F}$-free graph.
A major breakthrough of Alon, Shapira and Sudakov~\cite{AlSS} addresses Yannakakis's question for the property of $\mathcal{F}$-freeness.
The authors showed that for every nonempty family $\mathcal{F}$ of graphs, if $\mathcal{F}$ contains no bipartite graphs, then it is NP-hard to compute $\Rem_{\mathcal{F}}(G)$, and even to approximate it within an additive error of $v(G)^{2-\varepsilon}$ for any fixed $\varepsilon>0$.
For more related results on this problem, we refer the interested reader to \cite{AlSt,ArSS,BCKP,GiHo,GiLS24,GiLS25}.
There have also been recent developments in computational complexity concerning graph modification~\cite{ADIT}, subgraph counting~\cite{BoBB,DoMW,LiPY}, and forbidden subgraphs~\cite{ChLN,JMOPPSL,KoLo,LMPS}.

Note that if $G=K_n$ is the complete graph on $n$ vertices, then computing $\Rem_{\mathcal{F}}(K_n)$ is equivalent to determining the Tur\'{a}n number $\ex(n, \mathcal{F})$, where $\ex(n, \mathcal{F})$ denotes the maximum number of edges of an $n$-vertex $\mathcal{F}$-free graph.
Indeed, we have $\ex(n, \mathcal{F})={n\choose 2}-\Rem_{\mathcal{F}}(K_n).$
In~\cite{AlSh16}, Alon and Shikhelman initiated the systematic study of generalized Turán numbers, and in~\cite{AlSh22} they studied the corresponding optimization problem for the objective functions defined below.

\begin{definition}\label{def:ex}
{\rm
Let $G$ and $T$ be graphs and $\mathcal{F}$ be a family of graphs.
\begin{itemize}
\item[(i)] Define $\ex(G, T, \mathcal{F})$ as the maximum number of copies of $T$ in an $\mathcal{F}$-free subgraph of $G$.
\item[(ii)] Define $\ex'(G, T, \mathcal{F})$ as the maximum number of copies of $T$ in an $\mathcal{F}$-free spanning subgraph of $G$.
\end{itemize}
}
\end{definition}

\noindent Throughout the complexity statements in this paper, the graph $T$, the family $\mathcal{F}$, and the approximation parameter $\varepsilon$ are fixed; only the host graph $G$ is part of the input.
When $\mathcal{F}$ consists of a single graph $F$, we write $\ex(G, T, F)$ or $\ex'(G, T, F)$ for short.
Before proceeding, we point out some simple yet useful facts.  
First, $\ex'(G, T, \mathcal{F})\leq \ex(G, T, \mathcal{F})$ always holds. 
For every finite family $\mathcal{F}$, we have $\ex(G, T, \mathcal{F})-\ex'(G, T, \mathcal{F})$ $=O_{T,\mathcal{F}}(1)$ (see Proposition~\ref{prop:ex}~(i)).
Moreover, if no member of $\mathcal{F}$ has isolated vertices, then $\ex(G, T, \mathcal{F})=\ex'(G, T, \mathcal{F})$ for every graph $G$, even when $\mathcal{F}$ is infinite  (see Proposition~\ref{prop:ex}~(ii)).

Using this terminology, the result of Alon, Shapira and Sudakov~\cite{AlSS} can be restated as follows: for every nonempty family $\mathcal{F}$ of graphs containing no bipartite graphs, it is NP-hard to approximate $\ex'(G, K_2, \mathcal{F})$ within an additive error of $v(G)^{2-\varepsilon}$ for any fixed $\varepsilon>0$.
Their reduction also establishes hardness for $\ex(G, K_2, \mathcal{F})$ (see \cite[Theorem~6.1 and Section~8]{AlSS}).
Following the spirit of this result, Alon and Shikhelman~\cite{AlSh22} raised the following natural questions.
Given a graph $T$, a family $\mathcal{F}$ of graphs, and a constant $\varepsilon>0$, is there a polynomial-time algorithm that determines $\ex(G, T, \mathcal{F})$ within an additive error of $\varepsilon\cdot v(G)^{v(T)}$ for any input graph $G$?
If so, is the problem of finding a significantly better approximation NP-hard?
For the first question, they gave an affirmative answer for every finite family $\mathcal{F}$.
For the question of better approximation, they obtained the following results.
For a fixed graph $T$, a {\it blow-up} of $T$ is the graph obtained by replacing each vertex of $T$ by an independent set and each edge by a complete bipartite graph between the corresponding independent sets.

\begin{theorem}[\cite{AlSh22}]\label{thm:AlSh22-1}
Let $T$ be a fixed graph and let $\mathcal{F}$ be a family of graphs such that there is a graph $F\in \mathcal{F}$ which is a subgraph of a blow-up of $T$.
Then there is a fixed $\varepsilon=\varepsilon(T, \mathcal{F})>0$ such that $\ex(G, T, \mathcal{F})$ can be approximated in polynomial time within an additive error of $n^{v(T)-\varepsilon}$ for any input graph $G$ on $n$ vertices.
\end{theorem}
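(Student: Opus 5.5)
The plan is to show that under the hypothesis, $\ex(G,T,\mathcal{F})$ is itself small: at most $n^{v(T)-\varepsilon}$ for every $n$-vertex $G$. Once this holds, the algorithm that outputs $0$ (or the count of copies of $T$ in the empty graph) is within additive error $n^{v(T)-\varepsilon}$, and nothing beyond constant time is needed. So the whole proof reduces to an extremal bound: every $F$-free graph $H$ on $n$ vertices contains at most $O(n^{v(T)-\varepsilon})$ copies of $T$, where $F\in\mathcal{F}$ is a subgraph of a blow-up of $T$. Since $H\subseteq G$ is $\mathcal{F}$-free, it is in particular $F$-free, and this bound applies.

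Fix such an $F$, and let $t$ be such that $F\subseteq T[t]$, the blow-up of $T$ in which every vertex is replaced by an independent set of size $t$. It suffices to show that an $n$-vertex graph $H$ with at least $n^{v(T)-\varepsilon}$ copies of $T$ contains $T[t]$. I would do this by the standard hypergraph supersaturation argument (Erdős' theorem on complete $k$-partite $k$-graphs), with $k=v(T)$. Label $V(T)=\{1,\dots,k\}$. For each copy of $T$ in $H$, fix an embedding $\phi$, and record the ordered $k$-tuple $(\phi(1),\dots,\phi(k))$. This gives a family of at least $n^{k-\varepsilon}$ labelled $k$-tuples. Take a random partition $V(H)=V_1\cup\dots\cup V_k$ and keep only the tuples with $\phi(i)\in V_i$ for all $i$. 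A constant fraction $k^{-k}$ survives in expectation, so we may fix a partition retaining at least $k^{-k}n^{k-\varepsilon}$ tuples. These tuples form a $k$-partite $k$-uniform hypergraph $\mathcal{H}$ with parts $V_1,\dots,V_k$.

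By Erdős' theorem, a $k$-partite $k$-graph on $n$ vertices with no copy of $K^{(k)}_{t,\dots,t}$ has at most $O(n^{k-t^{1-k}})$ edges; the standard proof goes by induction on $k$ with Cauchy–Schwarz/Jensen over common neighbourhoods, and it respects the part structure. So choose $\varepsilon<t^{1-k}$. Then for $n$ large, $\mathcal{H}$ contains sets $S_i\subseteq V_i$ with $|S_i|=t$ and $S_1\times\dots\times S_k\subseteq E(\mathcal{H})$. Each such tuple is an embedding of $T$ respecting labels. Hence, for each edge $ij\in E(T)$, every pair in $S_i\times S_j$ is an edge of $H$, and the $S_i$ are disjoint. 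Therefore $H\supseteq T[t]\supseteq F$, which is a contradiction. Small $n$ is handled by adjusting constants, and since we may take $\varepsilon$ slightly smaller, the $O(\cdot)$ constant is absorbed.

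The main obstacle is making sure the hypergraph step gives a labelled blow-up rather than merely $k$ disjoint $t$-sets, each of which might host copies of $T$ in different orientations. The random partition into labelled parts, with each tuple recorded through its fixed labelling, handles exactly this. With that in place, the result follows directly from Erdős' bound, with $\varepsilon(T,\mathcal{F})$ depending only on $v(T)$ and $t$.
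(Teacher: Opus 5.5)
Your proposal is correct and is how this result follows from the paper's ingredients (the paper only cites it): the bound $\mathcal{N}(H,T)\le c\,n^{v(T)-\eta}$ for $F$-free $H$ is exactly Lemma~\ref{le:AlSh16-degen}, which is Alon--Shikhelman's Proposition~2.1 and is itself proved via Erd\H{o}s's theorem on complete $k$-partite $k$-graphs, so $\ex(G,T,\mathcal{F})\le \ex(G,T,F)$ is already below $n^{v(T)-\varepsilon}$ and outputting $0$ suffices. Your labelled random partition correctly makes the $K^{(k)}_{t,\dots,t}$ line up with the labels of $T$ (indeed any complete $k$-partite subhypergraph of a $k$-partite $k$-graph is automatically aligned with the parts), and small $n$ can simply be handled by brute force.
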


\begin{theorem}[\cite{AlSh22}]\label{thm:AlSh22-2}
Let $k>m\geq 2$ be integers.
Then for every $\varepsilon>0$, approximating $\ex(G, K_m, K_k)$ within an additive error of $n^{m-\varepsilon}$ is NP-hard for an input graph $G$ on $n$ vertices.
\end{theorem}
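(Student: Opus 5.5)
We reduce from the NP-hard problem of approximating Max $r$-Colorable Subgraph with $r=k-1$. By the PCP-based hardness results used by Alon, Shapira and Sudakov, there are constants $0<\beta<\alpha$ such that it is NP-hard to decide, for a graph $H$ with $h$ vertices and $\Theta(h^2)$ edges, whether $H$ has a $(k-1)$-colorable subgraph with at least $\alpha e(H)$ edges or every $(k-1)$-colorable subgraph has at most $\beta e(H)$ edges. Replacing $H$ by a suitable blow-up where needed, we may assume this gap is a constant fraction of $h^2$. Given $H$, we build a host graph $G$ on $n=\mathrm{poly}(h)$ vertices. We will arrange that $\ex(G,K_m,K_k)$ equals an explicit increasing function of the maximum $(k-1)$-colorable edge count of $H$, up to an error of order $o(n^m)$ that we control polynomially. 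Then an $n^{m-\varepsilon}$-additive approximation would separate the two cases.

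\textbf{Construction.} Replace each vertex $v$ of $H$ by an independent set $V_v$ of size $t$, with $t$ a large polynomial in $h$. For each edge $uv\in E(H)$, join $V_u$ and $V_v$ completely. A $(k-1)$-coloring of $H$ yields a $K_k$-free subgraph of $G$: keep exactly the edges of the blow-up lying between differently colored classes. This subgraph is $(k-1)$-partite, and its number of $K_m$'s is $t^m$ times the number of $K_m$'s of $H$ whose vertices receive distinct colors. To make this count linear in the number of properly colored edges of $H$, we use the AShS-type rigid gadget. Each edge $uv$ of $H$ is replaced by its own gadget: a copy of $K_{k}$-minus-structure, namely $k-2$ private blown-up classes joined completely to $V_u$, $V_v$ and to each other. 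In this gadget the $K_m$-count is dominated by $\Theta(t^m)$ copies living inside that edge's gadget. The blocks of a gadget then form a blown-up $K_k$. A $K_k$-free subgraph must effectively delete one ``colour slot'' per gadget, and it retains the maximum $K_m$-count, namely that of the Turán graph $T_{k-1}$ blow-up, exactly when the two endpoint classes $V_u,V_v$ are placed in different parts.

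\textbf{Analysis.} For the lower bound, a coloring $c$ of $H$ determines a $(k-1)$-partite subgraph. Each well-colored edge gadget contributes $a\,t^m$ copies and each badly colored one contributes $b\,t^m$, with constants $a>b$ depending only on $k,m$. For the upper bound we must show that any $K_k$-free subgraph $G'$ gives a coloring with at least $(\#K_m(G')-o(n^m))/((a-b)t^m)$ relevant good edges. We apply the graph removal lemma / regularity (via the Alon–Shikhelman counting framework) together with a Zykov-symmetrization argument on each gadget. Symmetrization shows that vertices within a class $V_v$ can be assumed to have identical neighborhoods up to $o(t^m)$ loss per gadget. This reduces to a weighted problem on $H$ whose optimum is a $(k-1)$-coloring; the stability of Turán-type counting for $K_m$ in $K_k$-free graphs then gives the needed dependence.

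\textbf{Main obstacle.} The main obstacle is this upper bound: a $K_k$-free subgraph need not be globally $(k-1)$-partite, and vertices of a single class $V_v$ shared across many gadgets may be treated inconsistently. We would first handle this by symmetrization. For each $v$ we replace every vertex of $V_v$ by the one maximizing its $K_m$-degree. This never decreases the count, but it can create $K_k$'s. To control that, we choose $t$ large enough that the errors from regularity are $n^{m}/\log$-small, while the gap is $\Omega(h^2 t^m)=\Omega(n^{m-\delta})$ for small $\delta$ after padding $G$ with isolated vertices. This is why $\varepsilon$ can be arbitrary.
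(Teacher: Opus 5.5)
\textbf{The gadget does not encode the colouring.} This is the main gap. Your gadget has $k-2$ private classes, so the blocks $V_u,V_v,P_1,\dots,P_{k-2}$ form a blown-up $K_k$. Every $K_k$-free subgraph must therefore destroy every gadget, whatever the colours of $u$ and $v$:
\begin{itemize}
\item if $c(u)\neq c(v)$, the $k-2$ private classes have only $k-3$ fresh colours;
\item if $c(u)=c(v)$, the pair $V_u,V_v$ is already merged and the private classes take the other $k-2$ colours.
\end{itemize}
In both cases exactly one pair of blocks shares a colour, leaving a complete $(k-1)$-partite graph with part sizes $2t,t,\dots,t$. For $m\ge 3$ every $K_m$ of the gadget uses a private vertex, so the two contributions coincide and your claimed $a>b$ fails: $a=b$.

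Changing the sizes does not rescue the design. If the private classes are larger than $t$, a direct count shows that merging $V_u$ with $V_v$ (a monochromatic edge) is strictly best. If they have size $O(t)$, the gadgets contribute only $O(e(H)t^m)$. That is swamped by the up to $\Theta(h^mt^m)$ copies of $K_m$ inside $H(t)$, and their surviving number (rainbow $K_m$'s of $H$) is not a function of the number of well-coloured edges.

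The missing idea is the one the paper uses; it recovers this statement as the case $T=K_m$, $\mathcal{F}=\{K_k\}$ of Theorem~\ref{thm:clique}, and for $m\ge3$ also of Theorem~\ref{thm:multipar}. Attach only $m-2$ private classes to each edge, of size $A$ polynomially larger than the source clusters. Then:
\begin{itemize}
\item each gadget is a blown-up $K_m$, which is itself $K_k$-free;
\item a $(k-1)$-colouring extends to it using $m-2\le k-3$ colours different from those of $u$ and $v$;
\item the dominant copies are a source edge between $C_u$ and $C_v$ plus one vertex from each private class.
\end{itemize}
A random-transversal averaging then bounds the count of a $K_k$-free $J$ by $S\,\mathbb{E}[e(R_J)]+o(S)$. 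Here $R_J\subseteq G$ is $K_k$-free, because a $K_k$ in $R_J$ lifts to a $K_k$ on the chosen source vertices. This gives the upper bound $Sz+o(S)$, matching the lower bound $Sz$.

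\textbf{The upper-bound sketch has a second gap.} Reducing to ``a weighted problem on $H$ whose optimum is a $(k-1)$-coloring'' needs $\ex(H,K_2,K_k)=\cut_{k-1}(H)$. This is false for merely dense $H$: for $k=3$, the blow-up $C_5(h/5)$ is triangle-free with $h^2/5$ edges, but its largest bipartite subgraph has $4h^2/25$ edges. The paper therefore works in $\mathcal{G}_k$, the graphs of minimum degree at least $(1-\mu)n$. There Lemma~\ref{le:AlSS-NPhard} supplies both this equality and exact NP-hardness, and no PCP gap is needed: integrality of $z$ plus a polynomial blow-up suffice (Corollary~\ref{cor:transfer}). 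Your gap instances would likewise have to lie in such a class.

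\textbf{Your symmetrization worry is unfounded.} Cloning a vertex onto the rest of its independent cluster, all of whose vertices share the same host neighbourhood, cannot create a $K_k$, since a clique meets the cluster at most once. Iterating over all clusters gives $\ex(Q(t),K_m,K_k)=t^m\ex(Q,K_m,K_k)$ exactly. So no regularity or removal-lemma error control is needed once a correct gadget graph $Q$ and the high-minimum-degree structural input are in place.
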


Since $K_k$ has no isolated vertices, Proposition~\ref{prop:ex}~(ii) implies that $\ex(G, K_m, K_k)=\ex'(G, K_m, K_k)$.
In light of these results, they proposed the following conjecture.

\begin{conjecture}[\cite{AlSh22}]\label{conj:AlSh22}
Let $T$ be a graph with $e(T)> 0$, and $\mathcal{F}$ be a nonempty family such that no $F\in \mathcal{F}$ is a subgraph of a blow-up of $T$.
Then for every $\varepsilon>0$, approximating $\ex(G, T, \mathcal{F})$ within an additive error of $n^{v(T)-\varepsilon}$ is NP-hard for an input graph $G$ on $n$ vertices.
\end{conjecture}

Before the present work, Conjecture~\ref{conj:AlSh22} had been established for the canonical case $\ex(G, K_2, \mathcal{F})$ (the aforementioned result of Alon, Shapira and Sudakov~\cite{AlSS}) and for $\ex(G, K_m, K_k)$ (Theorem~\ref{thm:AlSh22-2}).
Alon and Shikhelman also proved hardness at additive error $n^{v(T)-2-\varepsilon}$ when $T$ and all graphs in $\mathcal{F}$ are 3-connected (see \cite[Proposition 6.1]{AlSh22}).

\medskip

In this paper, we make progress on Conjecture~\ref{conj:AlSh22}, substantially extending the results of \cite{AlSS} and \cite{AlSh22} mentioned above.
Our first main result confirms this conjecture for every bipartite graph $T$.
While maximization over arbitrary and spanning subgraphs may differ when forbidden graphs have isolated vertices, our reductions apply to both conventions.

\begin{theorem}\label{thm:bipar}
Let $T$ be a bipartite graph with $e(T)> 0$, $\mathcal{F}$ be a nonempty family such that no $F\in \mathcal{F}$ is a subgraph of a blow-up of $T$, and $\varepsilon>0$.
Then for each of the two parameters $\ex(G, T, \mathcal{F})$ and $\ex'(G, T, \mathcal{F})$, approximation within an additive error of $n^{v(T)-\varepsilon}$ is NP-hard for an input graph $G$ on $n$ vertices.
\end{theorem}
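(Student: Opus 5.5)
Since $T$ is bipartite with $e(T)>0$, every blow-up of $T$ is bipartite, and every bipartite graph lies in some blow-up of $T$ (blow up the two endpoints of an edge of $T$). So the hypothesis says exactly that $\mathcal{F}$ contains no bipartite graph. Every member of $\mathcal{F}$ therefore has chromatic number at least $3$, while bipartite subgraphs of $G$ are automatically $\mathcal{F}$-free. This points to a reduction from Max Cut, i.e.\ Max $2$-Colorable Subgraph, in the spirit of Alon--Shapira--Sudakov, whose result is exactly the case $T=K_2$. The target quantity $\ex(G,T,\mathcal{F})$ should be, up to an additive $o(n^{v(T)-\varepsilon})$ error, a fixed increasing function of the max-cut value of an auxiliary graph $H$. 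Approximation of the max cut to within $\delta |V(H)|^2$ is NP-hard for some fixed $\delta>0$ (and ASS even handle bounded-degree/dense variants).

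\textbf{Construction.} Let $H$ be an instance on $h$ vertices. Pick a large constant blow-up factor and replace each vertex $x$ of $H$ by an independent set $V_x$ of size $t$, placing complete bipartite graphs $K_{t,t}$ between $V_x,V_y$ for $xy\in E(H)$. Then we need $T$-copies to be counted along cut edges. The difficulty is that $T$ copies can use edges inside a ``crossing'' pattern in many ways. My plan is to root $T$ at one edge $uv$ and view the rest of $T$ as attached by iterative leaf additions / symmetric rooted extensions. Concretely, attach to the blow-up, for each pair $V_x,V_y$, gadget vertex sets that let every bipartite subgraph supported on crossing edges carry roughly the same number of extensions to $T$. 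Then the dominant count of $T$-copies is $c_T t^{v(T)}\cdot \mathrm{cut}(H)$ plus lower-order terms. Choose $t=h^{K}$ with $K$ large depending on $\varepsilon$, so $n\approx h^{K+1}$, the gap $\delta h^2 t^{v(T)}$ exceeds $n^{v(T)-\varepsilon}$, and all error terms (copies meeting gadgets in degenerate ways, copies within few parts) are $O(h\,t^{v(T)})$ or smaller.

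\textbf{Lower bound.} Given a cut $(A,B)$ of $H$, keep only edges of $G$ consistent with a bipartition induced by $(A,B)$ (including gadget vertices placed on appropriate sides). The kept graph is bipartite, hence $\mathcal{F}$-free and spanning, so this handles $\ex'$ and $\ex$ simultaneously. Its $T$-count is then at least $c_T t^{v(T)}\mathrm{cut}(H)$ minus lower order.

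\textbf{Upper bound (main obstacle).} Given any $\mathcal{F}$-free subgraph $G'$, show its $T$-count is at most $c_T t^{v(T)}\mathrm{maxcut}(H)+o(h^2t^{v(T)})$. First fix some $F_0\in\mathcal{F}$ with $\chi(F_0)\ge 3$. By the removal lemma/supersaturation, an $F_0$-free graph is $o(t^2)$-close (per dense pair) to a graph with few copies of any blow-up of an odd cycle of length at most $v(F_0)$. I would apply the regularity lemma to $G'$ relative to the parts $V_x$. On the reduced weighted graph, the $T$-density is a polynomial in pair densities, and the absence of $F_0$ forbids dense triangles of regular pairs (and other odd structures), since these would contain a blow-up of a triangle and hence $F_0$ if $F_0$ is $3$-colorable. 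For general $\chi(F_0)\ge3$, $F_0$ embeds in a blow-up of some short odd cycle only if it is $3$-colorable, so I would instead use that $\mathcal{F}$-freeness must hold while $T$-maximization only benefits from bipartiteness. A symmetrization (Zykov-type) argument on the reduced graph should show that an optimal weighting is a cut: vertices $x$ whose parts split densities can be moved. The delicate part, and the step I expect to be hardest, is proving via the gadget/random-transversal design that the $T$-count functional is maximized, under the constraint of no dense odd structure compatible with $F_0$, by a $0/1$ cut assignment, with a loss of at most $o(h^2)$. I would first try random transversals: choose one vertex per part at random, so the $T$-count becomes an expectation over random subgraphs of $H$-like graphs. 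Then apply the Erd\H{o}s--Stone-type bound that a graph containing no $F_0$ and with many $T$-extensions has an almost-bipartite structure, and conclude by the ASS-style cut counting.
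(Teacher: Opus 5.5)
There are two genuine gaps, and together they mean the plan cannot work as written. \emph{First, the source problem and its hardness are wrong.} The relevant parameter is $\cut_{k-1}$ with $k=\min_{F\in\mathcal F}\chi(F)$, not Max Cut. When $k\ge 4$ (say $\mathcal F=\{K_4\}$), every $(k-1)$-colorable subgraph is $\mathcal F$-free. So one can retain about $t^2\cut_{k-1}(H)$ source edges, which for dense $H$ exceeds $t^2\,\mathrm{maxcut}(H)$ by a constant factor; already $\ex(H(t),K_2,K_4)\ge t^2\cut_3(H)$. Hence your upper bound $c_Tt^{v(T)}\mathrm{maxcut}(H)+o(h^2t^{v(T)})$ is false. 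Your sentence that ``$T$-maximization only benefits from bipartiteness'' is exactly where it breaks. Even for $k=3$ you need dense instances, since in general $\ex(H,K_2,K_3)$ can exceed the max cut (e.g.\ $H=C_5$). The paper uses the Alon--Shapira--Sudakov fact that on $\mathcal G_k$ one has $\cut_{k-1}(G)=\ex(G,K_2,K_k)$, while computing $\cut_{k-1}$ exactly stays NP-hard. Moreover, approximating Max Cut within additive $\delta h^2$ is \emph{not} NP-hard: it is polynomial-time for every fixed $\delta>0$ (Frieze--Kannan, Arora--Karger--Karpinski, Fern\'andez de la Vega). This is the same phenomenon behind Alon and Shikhelman's algorithm with additive error $\varepsilon n^{v(T)}$. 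So the reduction must be from \emph{exact} computation, and the total error must be below $S/2$, where $S$ is the contribution of a \emph{single} source edge. At that precision, regularity and Zykov-type symmetrization are useless, since their errors are an $\epsilon$-fraction of all copies. Your own error budget $O(h\,t^{v(T)})$ already exceeds the gap.

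\emph{Second, the gadget, which carries all the technical weight, is unspecified, and your error heuristic is backwards.} In a plain blow-up $H(t)$ with $v(T)\ge 3$, copies of $T$ spread over many parts (order $h^{v(T)}t^{v(T)}$ for dense $H$) dominate the copies sitting on a single crossing pair. You need a host in which the $T$-count is an affine function of the number of retained source edges, up to error $o(S)$. The paper builds $Z=L_{\mathbf x}(G)(B)$ by iterated leaf additions with $x_i=n^{d_i}$. It expands $\hom(T,L_{\mathbf x}(G))=\sum_Q f_Q(\mathbf x)\hom(Q,G)$ via the quotients $Q_I$. A lexicographic-maximality argument then shows that the dominant terms have terminal graph $K_2\cup qK_1$, giving $A_0+S(\mathbf x,n)e(G)+S(\mathbf x,n)O(n^{-3})$. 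For the upper bound the paper replaces regularity by a random transversal $R_J$ of the source clusters. This gives $\hom(T,J)\le B^{v(T)}\bigl(A_0+S(\mathbf x,n)\mathbb E[e(R_J)]+\cdots\bigr)$ and $e(R_J)\le z+\mathcal N(R_J,K_k)$. By the Alon--Shikhelman bound for $F_0$-free graphs with $\chi(F_0)=k$, $\mathbb E[\mathcal N(R_J,K_k)]=B^{-k}\mathcal N(J^-,K_k)\le c\,n^{k-\eta}B^{-\eta}$. This polynomial saving $\eta$, together with $B=n^{\beta}$, is what makes the error $o(S)$. Your lower-bound idea survives once ``bipartite'' is replaced by ``$(k-1)$-colorable'', since proper colorings extend through leaf additions and blow-ups.
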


We also confirm Conjecture~\ref{conj:AlSh22} for several additional classes of pairs $(T, \mathcal{F})$.
For positive integers $s_1, \ldots, s_r$, let $K_{s_1, \ldots, s_r}$ be the complete $r$-partite graph with partite sets of sizes $s_1, \ldots, s_r$.

\begin{theorem}\label{thm:multipar}
Let $T=K_{s_1, \ldots, s_r}$ with $r\geq 3$ and $s_1, \ldots, s_r\geq 1$, $\mathcal{F}$ be a nonempty family such that no $F\in \mathcal{F}$ is a subgraph of a blow-up of $T$, and $\varepsilon>0$.
Then for each of the two parameters $\ex(G, T, \mathcal{F})$ and $\ex'(G, T, \mathcal{F})$, approximation within an additive error of $n^{v(T)-\varepsilon}$ is NP-hard for an input graph $G$ on $n$ vertices.
\end{theorem}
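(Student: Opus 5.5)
We reduce from Max $r$-Colorable Subgraph, as the abstract indicates. Its NP-hard problem is this: given a graph $H$ on $N$ vertices, determine (or approximate within $\delta N^2$ for a fixed small $\delta$) the maximum number of edges of an $r$-colorable subgraph of $H$. This is hard for every $r\ge 2$, for instance by the PCP-based inapproximability of Max $r$-Cut. Throughout, $T=K_{s_1,\ldots,s_r}$ with $r\ge 3$, matching the $r$ in the reduction. The key structural fact is that every member of $\mathcal{F}$ has chromatic number at least $r+1$. Indeed, a graph with chromatic number at most $r$ embeds in a large balanced blow-up of $K_r$, and any blow-up of $K_r$ is a subgraph of a blow-up of $T$, since $T\supseteq K_r$. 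Hence every $r$-partite subgraph of $G$ is $\mathcal{F}$-free. This gives the lower bound of the reduction. For the upper bound we need the converse direction, which is the real content.

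\textbf{Construction.} Replace each vertex $x$ of $H$ by $r$ ``slot'' sets $X_x^1,\ldots,X_x^r$. Slot $j$ has size proportional to $s_j\cdot m$, where $m=N^{C}$ with $C$ large depending on $\varepsilon$. For each edge $xy\in E(H)$ we place a symmetric gadget that contains copies of $T$ across $x$ and $y$. The gadget uses complete bipartite graphs between the slots of $x$ and those of $y$, arranged so that the copies of $T$ using a given edge $xy$ correspond to choosing, for each part $i$ of $T$, a slot set at $x$ or at $y$. To make the count depend only on an $r$-coloring, I would instead follow the ``symmetric rooted extension / random transversal'' idea. Each vertex $x$ gets a single large independent set $V_x$ of size $m$. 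For every edge $xy$ we join $V_x$ and $V_y$ completely. We then add a random transversal structure so that, in $G$, each $r$-clique of $H$ blows up to a copy of a blow-up of $K_r$. Copies of $T$ in $G$ then essentially live on $r$-cliques, and possibly on edges; the bipartite theorem, Theorem~\ref{thm:bipar}, handles the $r=2$ analogue. The intended identity is as follows. If $c:V(H)\to[r]$ is a coloring, keep only edges between $V_x$ and $V_y$ with $c(x)\ne c(y)$. The result is $r$-partite, hence $\mathcal{F}$-free, and its number of copies of $T$ is $\Theta(m^{v(T)})$ times a polynomial in the counts of properly colored structures of $H$. Hardness of Max $r$-Colorable Subgraph, applied to a suitably modified $H$ (for example $H$ joined with gadgets so that properly colored $r$-cliques correspond to properly colored edges), transfers to this count with a gap of $\delta N^{2} m^{v(T)}\gg n^{v(T)-\varepsilon}$, where $n=Nm$.

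\textbf{Upper bound (main obstacle).} We must show that any $\mathcal{F}$-free subgraph $G'\subseteq G$ has at most (opt $+\,o(N^2)$)$\cdot m^{v(T)}$ copies of $T$. The plan is to take $F\in\mathcal{F}$ of minimum size, so $\chi(F)\ge r+1$, and to apply the graph removal lemma together with a Ramsey/supersaturation argument inside each clique of blocks $V_{x_1},\ldots,V_{x_r},V_{x_{r+1}}$. Suppose that in $G'$ a constant fraction of the vertex sets in many blocks carry a dense pattern that is not consistent with any single $r$-coloring. Then, because the blocks are large ($m\to\infty$), a dense ``$(r+1)$-chromatic'' pattern among the blocks would force a copy of a blow-up of $K_{r+1}$, and hence of $F$. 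The intended consequence is this: after deleting $o(n^2)$ edges, each block $V_x$ can be assigned a dominant color, and copies of $T$ not respecting that coloring are negligible. Deleted edges destroy only $o(n^{v(T)})$ copies of $T$, which gives the matching upper bound up to $o(N^2m^{v(T)})$.

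The delicate point is exactly this: forcing $F$ requires $F$ to lie in a blow-up of the pattern graph induced on a few blocks. The random transversals are what make the ``$H$-structure'' rich enough that any non-$r$-colorable dense pattern contains a large blow-up of $K_{r+1}$, rather than merely some odd configuration that $F$ might avoid. I expect verifying this, together with the $\ex'$ spanning version (handled by noting that isolated vertices of $F$ are absorbed by the $n\gg v(F)$ unused vertices, as in Proposition~\ref{prop:ex}), to be the main work.
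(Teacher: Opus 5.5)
\textbf{Gaps in the proposal.} There are genuine gaps, and they sit in the upper bound you flag as the main obstacle.

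First, the color threshold is wrong. The hypothesis gives only $k\colonequals\min_{F\in\mathcal{F}}\chi(F)\ge r+1$, not $k=r+1$. Take $T=K_3$ and $\mathcal{F}=\{K_{10}\}$: every $9$-colorable subgraph of your host graph is $\mathcal{F}$-free and has far more copies of $T$ than any $r$-colorable one. So a bound of the form $(\mathrm{opt}_r+o(N^2))m^{v(T)}$ is false, and the source problem must be Max $(k-1)$-Colorable Subgraph. Your step ``a dense non-$r$-colorable pattern forces a blow-up of $K_{r+1}$, hence $F$'' fails on both counts: such a pattern need not contain $K_{r+1}$, and $F$ need not lie in a blow-up of $K_{r+1}$.

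Second, the error accounting cannot close. Approximating Max $r$-Colorable Subgraph within $\delta N^2$ is polynomial-time for each fixed $\delta>0$ (weak regularity), so it is not a valid hard source. You must start from \emph{exact} hardness and, by integrality as in Corollary~\ref{cor:transfer}, push all errors below half the weight of one source edge. A removal-lemma argument cannot do this, and does not even reach your own claimed gap. In a blow-up of a dense graph an edge can lie in $\Theta(n^{v(T)-2})$ copies of $T$. Deleting $o(n^2)$ edges may therefore destroy $o(n^{v(T)})=o(N^{v(T)}m^{v(T)})$ copies, not $o(N^2m^{v(T)})$ as you write. Since $v(T)\ge 3$, this swamps $\delta N^2m^{v(T)}$. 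You need an error term with a \emph{polynomial} saving that a blow-up can then amplify.

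Third, the construction is never fixed. In a plain blow-up of the source graph, the count of $T$ is governed by $K_r$'s and larger complete multipartite patterns, not by edges. The ``random transversal'' is an analysis tool, not a gadget.

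\textbf{How the paper closes these gaps.} It restricts to source graphs $G$ (your $H$) with $\delta(G)\ge(1-\mu)v(G)$. There, by Lemma~\ref{le:AlSS-NPhard}, computing $z=\cut_{k-1}(G)$ exactly is still NP-hard and $z=\ex(G,K_2,K_k)$. Hence every $R\subseteq G$ satisfies $e(R)\le z+\mathcal{N}(R,K_k)$, which replaces any stability argument.

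For each edge $uv$ of $G$, the gadget consists of $r-2$ new independent sets of size $A$ that complete a blow-up of $K_r$ together with $u$ and $v$; gadgets of distinct edges are nonadjacent. A $B$-blow-up is then taken. With $q=s_3+\cdots+s_r$, copies of $T$ with exactly $q$ auxiliary vertices sit on a single edge, with exactly $S=\Theta_T(A^qB^{v(T)})$ copies per edge. All other copies number $O_T(n^{v(T)+q-1}A^{q-1}B^{v(T)})$, which is negligible for $A=n^{v(T)+q}$.

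The lower bound uses a $(k-1)$-colorable spanning subgraph of $G$ with $z$ edges; the gadgets can still be colored since $k-1\ge r$. The resulting witness is spanning, so it handles $\ex'$ directly. Proposition~\ref{prop:ex}(i) would need $\mathcal{F}$ finite.

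For the upper bound, pick a uniform representative in every source cluster and every auxiliary set, and put $uv\in E(R_J)$ iff the representatives span a $K_r$ in $J$. Then the edge-localized copies of $T$ in $J$ number at most $S\,\mathbb{E}[e(R_J)]$. Let $W$ be $K_k$ with a $K_r$ hung on each edge, and $w=v(W)$. Then $\mathbb{E}[\mathcal{N}(R_J,K_k)]=O(\mathcal{N}(J,W)/(A^{w-k}B^{w}))$. Some $F_0\in\mathcal{F}$ with $\chi(F_0)=k$ lies in a blow-up of $W$, so Lemma~\ref{le:AlSh16-degen} gives $\mathcal{N}(J,W)\le c\,v(Z)^{w-\eta}$. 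Choosing $B=n^{\beta}$ with $\beta$ large turns this polynomial saving into a total error below $S/2$.
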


Theorems~\ref{thm:bipar} and \ref{thm:multipar} together confirm Conjecture~\ref{conj:AlSh22} for every complete multipartite graph $T$, and in particular for every clique $T$.
We also confirm the conjecture whenever $\mathcal{F}=\{K_k\}$ consists of a single clique (see Theorem~\ref{thm:clique}).
These results extend Theorem~\ref{thm:AlSh22-2} on $\ex(G, K_m, K_k)$ in two directions: either relaxing $\mathcal{F}=\{K_k\}$ to a general forbidden family, or relaxing $K_m$ to a general counted graph.

Finally, we establish Conjecture~\ref{conj:AlSh22} whenever every graph in $\mathcal{F}$ has chromatic number at least $\chi(T)+2$. 
It follows from a more general form (Theorem~\ref{thm:chromatic}).

\begin{theorem}\label{thm:chromatic+2}
Let $T$ be a graph with $e(T)>0$, $\mathcal{F}$ be a nonempty family of graphs with $\min_{F\in \mathcal{F}}\chi(F)\geq \chi(T)+2$, and $\varepsilon>0$.
Then for each of the two parameters $\ex(G, T, \mathcal{F})$ and $\ex'(G, T, \mathcal{F})$, approximation within an additive error of $n^{v(T)-\varepsilon}$ is NP-hard for an input graph $G$ on $n$ vertices.
\end{theorem}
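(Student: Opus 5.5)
We reduce from Max $r$-Colorable Subgraph with $r=\chi(T)$. The relevant hardness fact is that for fixed $r\ge 2$ there is a constant $\delta>0$ such that, given a graph $H$ on $N$ vertices and a threshold, it is NP-hard to decide whether $H$ has an $r$-colorable subgraph with at least $t$ edges or whether every $r$-colorable subgraph has at most $(1-\delta)t$ edges; this gap version, with $t=\Theta(e(H))$, follows from the PCP theorem. Set $k=\chi(T)$ and $q=\min_{F\in\mathcal F}\chi(F)\ge k+2$. We will build from $H$ a graph $G$ on $n=\mathrm{poly}(N)$ vertices in which $\ex(G,T,\mathcal F)$ equals a fixed affine function of the optimum of Max $(q-1)$-Colorable Subgraph of $H$, up to an error $o(n^{v(T)})$. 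Since the gap is then of order $n^{v(T)}\cdot\Theta(1)$ relative to the main term, approximation within $n^{v(T)-\varepsilon}$ distinguishes the two cases. We use Max $(q-1)$-Colorable, which is also APX-hard for $q-1\ge 2$.

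\textbf{Construction.} Blow up each vertex of $H$ into an independent set of size $m=N^{C}$, where $C$ is a large constant depending on $\varepsilon$. For each edge $uv$ of $H$, place a complete bipartite graph between the two blobs. Then add $k-1$ ``apex'' classes of size $m$, complete to everything and to each other. Any $(q-1)$-colorable subgraph of the blown-up $H$, combined with the apex structure, is a blow-up of a graph of chromatic number $\le q-1+(k-1)$, which is too large. So the construction has to be adjusted: $G$ should be $(q-1)$-partite-friendly in the sense that the $T$-copies live in a $k$-partite structure. The adjusted plan is as follows. Take $G$ to be the blow-up of $H$ joined with a complete $(k-2)$-partite graph with parts of size $m$. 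This join is legal because $q-1\ge k+1$, so a $(q-1-(k-2))=(q-k+1)\ge 3$-colorable subgraph $S$ of blown-up $H$, joined with the $(k-2)$ parts, is $(q-1)$-colorable. Hence it is $\mathcal F$-free, since every $F$ has $\chi(F)\ge q$. Conversely, we reduce from Max $(q-k+1)$-Colorable Subgraph.

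\textbf{Counting.} Copies of $T$ in such a subgraph are dominated by copies using the apex parts and edges of $S$. Because $T$ is $k$-partite, each blown-up edge of $S$ together with the $k-2$ apex parts contributes $c_T m^{v(T)}$ copies. There are also lower-order cross terms from copies spanning several edges of $S$, and those terms do not scale linearly with $e(S)$. To control this, we first replace $H$ by a bounded-degree or sparse instance, since the hardness gap persists for bounded-degree graphs. Alternatively, we take $m$ large and $N$ fixed relative to $m$, and check that copies using $\ge 2$ distinct $H$-edges are $O(N^{v(T)}m^{v(T)})$. These are negligible relative to the gap $\delta e(H) m^{v(T)}$ only if $e(H)$ dominates, so bounded degree plus separate counting is needed. \textbf{Upper bound:} given any $\mathcal F$-free subgraph $G'$, apply the Removal Lemma, or a supersaturation/regularity cleaning, to show that $G'$ can be modified, losing only $o(m^{v(T)}N)$ copies, so that between every pair of blobs it is either empty or nearly complete. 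The resulting reduced weighted graph on $V(H)\cup\{\text{apex}\}$ must be $K_q$-free in the appropriate sense; otherwise the blow-up contains every $F$ with $\chi(F)=q$, hence contains a member of $\mathcal F$.

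\textbf{Main obstacle.} The hard step is the upper bound. The reduced structure being $F$-free for blow-up reasons yields that the reduced graph contains no $K_q$. We need more: that the achievable $T$-count is bounded by the count from a $(q-k+1)$-colorable subgraph of $H$ plus full apex parts. This is a symmetrization / Zykov-type argument showing that optimal weightings put the apex parts at full density. The edge-count objective on $H$ must then be recovered from a $K_{q-k+2}$-free reduced subgraph of $H$; since $K_{q-k+2}$-free does not imply $(q-k+1)$-colorable, we need Alon–Shapira–Sudakov-style gadgets, replacing vertices of $H$ by structures that force near-partite optima. I would try to adapt their reduction directly, with the $(k-2)$ apex parts carrying the extra chromatic budget.
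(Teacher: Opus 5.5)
There is a genuine gap, and it shows up in the counting step, before the upper bound you flag as the main obstacle.

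In your join construction the $k-2$ apex parts are shared by all edges of $H$. The blown-up $H$ has $Nm$ vertices, while each apex part has only $m$. So copies of $T$ meeting $j\ge 3$ blobs can number $\Theta(N^{j}m^{v(T)})$. This dominates the $O(N^2m^{v(T)})$ contribution from copies using one edge of $S$ plus the apex parts, and the dominant terms depend on $S$ nonlinearly. Two examples:
\begin{itemize}
\item For $T=K_3$ and $\mathcal{F}=\{K_5\}$ there is one apex part and $S$ is a $3$-colorable subgraph of the blown-up $H$. The main term of the triangle count is $m^3$ times the number of triangles in the reduced graph of $S$, which is not affine in $e(S)$.
\item For bipartite $T$ there are no apex parts. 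For $T=P_3$ the count is governed by $\sum_u d_S(u)^2$.
\end{itemize}
Passing to bounded-degree $H$ does not cure this. It makes these terms comparable to $e(S)m^{v(T)}$ rather than negligible, and they stay nonlinear. So even your lower bound is not an affine function of the Max-Colorable optimum.

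The upper bound is then simply missing. Regularity or removal-lemma cleaning loses $\epsilon n^{v(T)}$ copies, with $\epsilon$ tending to zero far more slowly than any power of $n$. You would need $\epsilon$ of order at most $N^{2-v(T)}$ with $n=\mathrm{poly}(N)$. The Zykov-type claim that an optimal $\mathcal{F}$-free subgraph keeps the apex parts complete is also unsupported.

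Finally, nothing in your construction uses $q\ge k+2$ rather than $q\ge k+1$; with $q=k+1$ you would just reduce from Max-Cut. If the approach worked, it would settle $\min_F\chi(F)=\chi(T)+1$ for every $T$, which the paper leaves open. That suggests an essential ingredient is missing.

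The paper deduces the statement from Theorem~\ref{thm:chromatic}, using $\chi(T/e)\le\chi(T)+1\le q-1$. The missing ingredient is a gadget private to each pair of source vertices. For every ordered pair $(u,v)$ of distinct vertices of $G$, one attaches a fresh $A$-blow-up of $T-\{x,y\}$ with $u,v$ in the roles of $x,y$, adds $E(G)$, and takes a $B$-blow-up.
\begin{itemize}
\item \emph{Linearity.} With $A=n^{2v(T)-2}$, copies with at least three source vertices are negligible. Hence $\mathcal{N}(Z_0+H,T)=L+Xe(H)$ up to a negligible error, with $L$ independent of $H\subseteq G(B)$.
\item \emph{Lower bound.} Attaching gadgets to all pairs, not just edges, is what keeps the lower-bound graph $(q-1)$-colorable. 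A coloring of an optimal $H$ extends via a coloring of $T$ when $u,v$ get different colors, and of $T/e$ when they get the same color. This is exactly where the $+2$ is used.
\item \emph{Upper bound.} The paper avoids regularity entirely. A random transversal of the source clusters gives $e(J^-)=B^2\,\mathbb{E}[e(R_J)]\le B^2z+B^2\,\mathbb{E}[\mathcal{N}(R_J,K_q)]$. Here $z=\ex(G,K_2,K_q)=\cut_{q-1}(G)$ on the dense class $\mathcal{G}_q$ by Alon--Shapira--Sudakov, which resolves your ``$K_{q-k+2}$-free versus colorable'' issue without new gadgets. The Alon--Shikhelman bound $\mathcal{N}(J^-,K_q)\le c(nB)^{q-\eta}$ then gives a polynomial saving once $B=n^{\beta}$.
\end{itemize}
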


The remainder of this paper is organized as follows.
In the next section, we introduce some additional terminology and provide a proof sketch of our results.
In Section~\ref{sec:pre}, we state and prove several results that will be used in our proofs.
In Section~\ref{sec:pf_bipar}, we present our proof of Theorem~\ref{thm:bipar}.
In Section~\ref{sec:pf_nonbipar}, we prove Theorems~\ref{thm:multipar} and \ref{thm:chromatic+2}.
Finally, we conclude this paper with some remarks and further problems in Section~\ref{sec:conclu}.
\vspace{0.2cm}

\noindent{\bf Note added.} After the first version of this paper was posted on arXiv, we learned that Gishboliner, Levanzov and Shapira~\cite{GiLS26} had independently verified Conjecture~\ref{conj:AlSh22} when $T$ is a clique and $\mathcal{F}$ is a finite family of graphs, using a different method.

\section{Terminology and proof sketch}
\label{sec:sketch}

In this section, we first introduce some necessary terminology.
Then we provide a proof sketch of our results in Section~\ref{subsec:sketch}.

In this paper, all graphs are finite and simple.
Given two graphs $G$ and $H$, we write $G\cup H$ for the disjoint union of $G$ and $H$, and $mG$ for the disjoint union of $m$ copies of $G$.
For any vertex $v\in V(G)$, let $N_G(v)$ be the neighborhood of $v$ in $G$, and $d_G(v)\colonequals|N_G(v)|$ be the degree of $v$.
Given a subset $U\subseteq V(G)$, let $G[U]$ be the subgraph of $G$ induced by $U$, and let $G-U$ be shorthand for $G[V(G)\setminus U]$.
We use $\delta(G)$ to denote the minimum degree of $G$.
For any graph $G'$ with $V(G')\subseteq V(G)$, let $G+G'$ be the graph with vertex set $V(G)$ and edge set $E(G)\cup E(G')$.
In the special case when $G'$ consists of a single edge $e$, we will simply write it as $G+e$.

For two graphs $H$ and $T$, we use $T\subseteq H$ to denote that $T$ is a subgraph of $H$.
A {\it copy} of $T$ in $H$ is a not necessarily induced subgraph of $H$ isomorphic to $T$.
Copies with different vertex sets or different edge sets are counted separately.
We write $\mathcal{N}(H,T)$ for their number.
Then
\begin{align*}
  &~\ex(G, T, \mathcal{F})=\max\{\mathcal{N}(H,T)\colon\, H\subseteq G, \mbox{$H$ is $\mathcal{F}$-free}\}, \\
  &~\ex'(G, T, \mathcal{F})=\max\{\mathcal{N}(H,T)\colon\, H\subseteq G, V(H)=V(G), \mbox{$H$ is $\mathcal{F}$-free}\}.
\end{align*}
We call a (not necessarily injective) map $\varphi: V(T)\to V(H)$ a {\it homomorphism} if it preserves all adjacencies, i.e., $\varphi(u)\varphi(v)\in E(H)$ for every $uv\in E(T)$.
We will also use the shorthand $\varphi: T\to H$ to denote a homomorphism from $T$ to $H$.
Denote by $\hom(T, H)$ the number of homomorphisms from $T$ to $H$.
Note that $F$ is a subgraph of a blow-up of $T$ if and only if there exists a homomorphism from $F$ to $T$.

Let $G$ be a graph and let $U\subseteq V(G)$ be a nonempty set.
By {\it identifying the vertices of $U$}, we mean replacing them by a single new vertex $u$, where a vertex $v\in V(G)\setminus U$ is adjacent to $u$ if and only if $N_G(v)\cap U\neq \varnothing$, and edges with both endpoints in $V(G)\setminus U$ are preserved.
Edges with both endpoints in $U$ are discarded, and parallel edges are replaced by a single edge, so that the resulting graph is a simple graph.

For a fixed graph $G$ and a positive integer $m$, the {\it $m$-blow-up} of $G$, denoted by $G(m)$, is the graph obtained by replacing each vertex of $G$ by an independent set of size $m$ and each edge by a complete bipartite graph between the corresponding independent sets.
We call each set of vertices obtained by blowing up a vertex of $G$ a {\it cluster}.
For each $v\in V(G)$, let $C_v$ denote the cluster corresponding to $v$.
Unless a common cluster size is specified, a blow-up may have unequal positive cluster sizes.

Throughout this paper, we may assume that $0<\varepsilon\leq 1$ when proving the hardness results.
Indeed, if $\varepsilon>1$, then an algorithm approximating $\ex(G,T,\mathcal{F})$ within an additive error of $v(G)^{v(T)-\varepsilon}$ also approximates it within an additive error of $v(G)^{v(T)-1}$.
The same observation applies to $\ex'(G,T,\mathcal{F})$.
We use the standard asymptotic notation $O$, $o$, and $\Theta$.
Subscripts specify the permitted dependence of the implicit constants.
For example, for a nonnegative function $f$, the assertion $h(n)=O_T(f(n))$ means that there are constants $C_T>0$ and $n_T$, depending only on $T$, such that $|h(n)|\leq C_Tf(n)$ for all $n\geq n_T$.
For an event $A$, let $\Pr[A]$ be the probability of $A$.
For a random variable $X$, we use $\mathbb{E}[X]$ to denote the expectation of $X$.

\subsection{Proof sketch}
\label{subsec:sketch}

In our proof, we reduce the following \textsc{Max $r$-Colorable Subgraph} problem to the problem of approximating $\ex(G, T, \mathcal{F})$ or $\ex'(G, T, \mathcal{F})$ within an additive error of $n^{v(T)-\varepsilon}$.
For a graph $G$ and an integer $r\geq 2$, let $$\cut_r(G)\colonequals \max\{e(H)\colon\, H\subseteq G, V(H)=V(G), \chi(H)\leq r\}.$$ The problem of determining $\cut_r(G)$ is NP-hard for every fixed $r\geq 2$ (see~\cite{Vit}).

Let $k\colonequals \min_{F\in \mathcal{F}}\chi(F)$.
We restrict the source instances $G$ to a dense class $\mathcal{G}_k$ such that on this class, computing $z\colonequals \cut_{k-1}(G)$ remains NP-hard and $z=\ex(G,K_2,K_k)$.
The class $\mathcal{G}_k$ is guaranteed by Lemma~\ref{le:AlSS-NPhard}.
Our goal is to construct a graph $Z$ from $G$ in polynomial time so that
\begin{equation}\label{eq:bounds-sketch}
L+Sz-a \leq \ex'(Z, T, \mathcal{F})\leq \ex(Z, T, \mathcal{F})\leq L+Sz+b,
\end{equation}
where $L=L(G)\geq 0$ and $S=S(G)>0$ are polynomial-time computable rational numbers, and $a,b$ are nonnegative errors.
For each fixed $\varepsilon>0$, we choose the construction parameters so that $a+b+v(Z)^{v(T)-\varepsilon}<S/2$ for every sufficiently large graph $G$.
Since $z=\cut_{k-1}(G)$ is an integer, an approximation to either $\ex(Z,T,\mathcal{F})$ or $\ex'(Z,T,\mathcal{F})$ with additive error at most $v(Z)^{v(T)-\varepsilon}$ determines $z$ exactly by subtracting $L$, dividing by $S$, and rounding to the nearest integer (see Corollary~\ref{cor:transfer}).

The main difficulty lies in constructing $Z$ and establishing the upper and lower bounds in~\eqref{eq:bounds-sketch}.
We use different constructions according to the structural properties of $T$ and $\mathcal{F}$.
For bipartite graphs $T$, we use \emph{iterative leaf additions} (see Figure~\ref{fig:leaf}) followed by a blow-up.
For Theorem~\ref{thm:chromatic+2}, we use \emph{symmetric rooted extensions} (see Figure~\ref{fig:root}).
For Theorem~\ref{thm:multipar}, we use a related construction adapted to complete multipartite graphs.
A common ingredient in the upper bounds is a \emph{random-transversal argument}.
We first sketch the proof for bipartite graphs and then discuss the remaining cases.

\begin{figure}[htbp]
\centering
\begin{tikzpicture}[
  core/.style={circle, draw=black, fill=black!80, inner sep=0pt, minimum size=5pt},
  firstleaf/.style={circle, draw=blue!65!black, fill=blue!15, inner sep=0pt, minimum size=4.5pt},
  secondleaf/.style={circle, draw=red!70!black, fill=red!15, inner sep=0pt, minimum size=4.5pt},
  edge/.style={draw=black!60, line width=0.5pt},
  every node/.style={font=\small}
]

\begin{scope}[yshift=4cm]
  \node[font=\small\bfseries] at (0,-0.5) {$L_4(P_3)$};

  \draw[edge] (-5,0)--(0,0)--(5,0);
  \foreach \x in {-5,0,5} {
    \begin{scope}[xshift=\x cm]
      \foreach \d in {-1.5,-0.5,0.5,1.5} {
        \draw[edge] (0,0)--(\d,1.15);
      }
    \end{scope}
  }

  \foreach \x in {-5,0,5} {
    \begin{scope}[xshift=\x cm]
      \node[core] at (0,0) {};
      \foreach \d in {-1.5,-0.5,0.5,1.5} {
        \node[firstleaf] at (\d,1.15) {};
      }
    \end{scope}
  }
\end{scope}

\begin{scope}
  \node[font=\small\bfseries] at (0,-1.45)
    {$L_2(L_4(P_3))$};

  \draw[edge] (-5,0)--(0,0)--(5,0);
  \foreach \x in {-5,0,5} {
    \begin{scope}[xshift=\x cm]
      \foreach \d in {-1.5,-0.5,0.5,1.5} {
        \draw[edge] (0,0)--(\d,1.15);
      }

      \foreach \d/\ell/\r in {
        -1.5/-1.65/-1.35,
        -0.5/-0.65/-0.35,
         0.5/ 0.35/ 0.65,
         1.5/ 1.35/ 1.65
      } {
        \draw[edge] (\d,1.15)--(\ell,2.55);
        \draw[edge] (\d,1.15)--(\r,2.55);
      }
    \end{scope}
  }

  \draw[edge] (0,0)--(-0.15,-0.95);
  \draw[edge] (0,0)--( 0.15,-0.95);

  \foreach \x in {-5,5} {
    \begin{scope}[xshift=\x cm]
      \draw[edge] (0,0)--(-0.15,-0.95);
      \draw[edge] (0,0)--( 0.15,-0.95);
    \end{scope}
  }

  \foreach \x in {-5,0,5} {
    \begin{scope}[xshift=\x cm]
      \node[core] at (0,0) {};
      \foreach \d in {-1.5,-0.5,0.5,1.5} {
        \node[firstleaf] at (\d,1.15) {};
      }

      \foreach \ell/\r in {
        -1.65/-1.35,
        -0.65/-0.35,
         0.35/ 0.65,
         1.35/ 1.65
      } {
        \node[secondleaf] at (\ell,2.55) {};
        \node[secondleaf] at (\r,2.55) {};
      }
    \end{scope}
  }

  \node[secondleaf] at (-0.15,-0.95) {};
  \node[secondleaf] at ( 0.15,-0.95) {};

  \foreach \x in {-5,5} {
    \begin{scope}[xshift=\x cm]
      \node[secondleaf] at (-0.15,-0.95) {};
      \node[secondleaf] at ( 0.15,-0.95) {};
    \end{scope}
  }
\end{scope}

\end{tikzpicture}
\caption{An illustration of iterative leaf additions with $G=P_3$, where $P_3$ denotes the path on three vertices. We attach four leaves to each vertex of $P_3$ to obtain $L_4(P_3)$, and then attach two leaves to each vertex of $L_4(P_3)$ to obtain $L_2(L_4(P_3))$.}
\label{fig:leaf}
\end{figure}

\medskip
\noindent\textbf{Bipartite graphs $T$ (Theorem~\ref{thm:bipar}).}
We complete the proof in the following three steps.

\medskip
\noindent\emph{Step 1: Iterative leaf additions and homomorphism counts.}
Given an $n$-vertex graph $G$ and a positive integer $x$, let $L_x(G)$ be the graph obtained from $G$ by attaching $x$ leaves to every vertex of $G$, where all added vertices are distinct and lie outside $V(G)$.
Let $s\colonequals v(T)-2$ and $\mathbf{x}=(x_1, \ldots, x_s)$ be a vector of $s$ positive integers.
Define $$L_{\mathbf{x}}(G)\colonequals L_{x_1}\left(L_{x_2}\left(\cdots L_{x_s}\left(G\right) \cdots\right)\right).$$
Repeated application of the following quotient operation expresses $\hom(T,L_{\mathbf{x}}(G))$ as a linear combination of the numbers $\hom(Q,G)$, with polynomial coefficients in $x_1,\ldots,x_s$; see~\eqref{eq:hom-sketch} below.
For a bipartite graph $Q$ and an independent set $I\subseteq V(Q)$, the quotient graph $Q_I$ is constructed from $Q$ as follows:
(1) Delete the vertices of $I$;
(2) For each nonisolated vertex $v\in I$, identify all vertices of $N_Q(v)$ to form a single vertex $v'$, taking the transitive closure of these identifications;
(3) For each isolated vertex $v\in I$, add one new isolated vertex $v'$.
Then let $(I_1,\ldots,I_s)$ be a sequence of vertex sets and $(T_0, T_1, \ldots, T_s)$ be a sequence of graphs obtained as follows:
\begin{equation}\label{eq:Ti-sketch}
T_0\colonequals T, \qquad
I_i\text{ is an independent set of }T_{i-1},\qquad
T_i=(T_{i-1})_{I_i}
\end{equation}
for every $i=1, \ldots, s$.
Using the quotient operation, we derive that
\begin{equation}\label{eq:hom-sketch}
\hom(T, L_{\mathbf{x}}(G))=\sum\limits_{Q\in \mathcal{Q}_T}f_Q(\mathbf{x})\hom(Q, G),
\end{equation}
where $\mathcal{Q}_T$ is the set of bipartite graphs on at most $v(T)$ vertices, and
$$f_Q(\mathbf{x})= \sum_{\substack{\text{\rm $(I_1,\ldots,I_s)$ satisfying~\eqref{eq:Ti-sketch},}\\ T_s\cong Q}}\left(\prod_{i=1}^s x_i^{|I_i|}\right).$$

For a graph $Q$, let $p(Q)$ be the number of nonisolated vertices of $Q$.
Note that we have either $p(Q)=0$ or $p(Q)\geq 2$.
Then \eqref{eq:hom-sketch} becomes
\begin{equation}\label{eq:hom-3part-sketch}
\hom(T, L_{\mathbf{x}}(G))=A_0(\mathbf{x},G)+A_2(\mathbf{x},G)+A_{\geq 3}(\mathbf{x},G),
\end{equation}
where $A_0(\mathbf{x},G)$, $A_2(\mathbf{x},G)$ and $A_{\geq 3}(\mathbf{x},G)$ denote the sums of the terms in \eqref{eq:hom-sketch} with $p(Q)=0$, $p(Q)=2$ and $p(Q)\geq 3$, respectively.
If $p(Q)=2$, then $Q$ consists of a single edge together with isolated vertices, so $A_2(\mathbf{x},G)$ is proportional to $e(G)$.
The term $A_0(\mathbf{x},G)$ depends on $G$ only through $n$ and contributes to the computable number $L$.
We choose $x_1,\ldots,x_s$ carefully so that $A_{\geq 3}(\mathbf{x},G)$ is small relative to the coefficient of $e(G)$ in $A_2(\mathbf{x},G)$.
More precisely, we have
\begin{equation}\label{eq:A-sketch}
A_0(\mathbf{x},G)=\sum_{\substack{Q\in \mathcal{Q}_T,\\ p(Q)=0}}f_Q(\mathbf{x})n^{v(Q)}, \quad A_2(\mathbf{x},G)=S(\mathbf{x},n)\cdot e(G), \quad A_{\geq 3}(\mathbf{x},G)=S(\mathbf{x},n)\cdot O_T\left(n^{-3}\right),
\end{equation}
where $S(\mathbf{x},n) = \sum\nolimits_{Q\in \mathcal{Q}_T,~p(Q)=2} 2\cdot f_Q(\mathbf{x})n^{v(Q)-2}$ is an integer greater than $1$.
The key point is that sequences with lexicographically maximal $(|I_1|,\ldots,|I_s|)$ among those with $e(T_s)>0$ satisfy $p(T_s)=2$.
Then the choice in~\eqref{eq:leaves} makes earlier exponents dominate later ones, which proves~\eqref{eq:A-sketch} (see Lemmas~\ref{le:K2} and \ref{le:leaf-hom-3}).

\medskip
\noindent\emph{Step 2: Blow-up and the random-transversal argument.}
Let $Z(G)= L_{\mathbf{x}}(G)(B)$ be the $B$-blow-up of $L_{\mathbf{x}}(G)$.
Here $B$ is carefully chosen so that we can control the errors in the following three places:
\begin{itemize}
\item[(1)] the random-transversal error in $\mathbb{E}[e(R_J)]\leq z+O(n^{k-\eta}B^{-\eta})$ (see~\eqref{eq:eRJ-sketch}),
\item[(2)] the error for noninjective homomorphisms when we pass from homomorphism counts to copies of $T$ in Step~3, and
\item[(3)] the error $v(Z)^{v(T)-\varepsilon}$ when we finally deduce $a+b+v(Z)^{v(T)-\varepsilon}<S/2$.
\end{itemize}

We first derive an upper bound for homomorphism counts.
For each $v\in V(L_{\mathbf{x}}(G))$, let $C_v$ be the cluster corresponding to $v$ in $Z$.
For any $\mathcal{F}$-free subgraph $J\subseteq Z$, define two graphs $J^-$ and $J^{+}$ as follows.
Let $J^-$ be the subgraph of $J$ induced by the vertices lying in clusters corresponding to $V(G)$.
Let $J^+$ be the spanning subgraph of $Z$ obtained from $J$ by adding every missing vertex of $Z$ and every edge of $Z$ that does not belong to the source subgraph $G(B)$.
Then $J^-\subseteq J\subseteq J^{+}\subseteq Z$, and in particular, $\hom(T, J)\leq \hom(T, J^{+})\leq \hom(T, Z)$.
In order to derive an upper bound on $\hom(T, J)$, we define a random spanning subgraph $R_J$ of $G$:
For each $v\in V(G)$, choose one vertex $v'\in C_v$ independently and uniformly.
The random graph $R_J$ is defined by
$$V(R_J)\colonequals V(G), \qquad E(R_J)\colonequals \{uv\in E(G)\colon\, u',v'\in V(J),\ u'v'\in E(J)\}.$$
Thus $R_J$ is obtained by retaining the edges of $J$ between the selected vertices and relabelling each selected vertex $v'$ as $v$.
We now use properties of $R_J$ to show that
\begin{equation}\label{eq:hom-upper-sketch}
\hom(T, J)\leq B^{v(T)}\left(A_0(\mathbf{x},G) +S(\mathbf{x},n)\cdot z +S(\mathbf{x},n)\cdot O\left(n^{k-\eta}B^{-\eta} +n^{-3}\right)\right)
\end{equation}
for some constant $\eta>0$.
The coefficient $B^{v(T)}$ follows from the fact that every homomorphism from $T$ to $Z$ projects to a unique homomorphism from $T$ to $L_{\mathbf{x}}(G)$, and each homomorphism into $L_{\mathbf{x}}(G)$ has exactly $B^{v(T)}$ lifts into $Z$.
The terms $A_0(\mathbf{x},G)$ and $S(\mathbf{x},n)\cdot O\left(n^{-3}\right)$ correspond to the terms $A_0(\mathbf{x},G)$ and $A_{\geq 3}(\mathbf{x},G)$ in~\eqref{eq:hom-3part-sketch}.
We now derive the terms $S(\mathbf{x},n)\cdot z+S(\mathbf{x},n)\cdot O\left(n^{k-\eta}B^{-\eta}\right)$ from the $A_2(\mathbf{x},G)$-part.
For this part, we in fact count homomorphisms from $T$ to $J^{+}$ since $\hom(T, J)\leq \hom(T, J^{+})$ and we only need an upper bound.
Recall that the $A_2(\mathbf{x},G)$-part corresponds to graphs consisting of a single edge together with isolated vertices.
Combining this with properties of $R_J$, we first deduce an upper bound of the form $S(\mathbf{x},n)\cdot \mathbb{E}[e(R_J)]$ (see Lemma~\ref{le:leaf-hom-4}).
Now it suffices to prove that
\begin{equation}\label{eq:eRJ-sketch}
\mathbb{E}[e(R_J)]\leq z+O(n^{k-\eta}B^{-\eta}).
\end{equation}
Since deleting one edge from each copy of $K_k$ in $R_J$ yields a $K_k$-free subgraph of $G$, we have $e(R_J)\leq \ex(G, K_2, K_k)+\mathcal{N}(R_J, K_k)=z+\mathcal{N}(R_J, K_k).$
Then it remains to derive an upper bound on $\mathbb{E}[\mathcal{N}(R_J, K_k)]$.
Fix an $F_0\in\mathcal{F}$ with $\chi(F_0)=k$.
Then $F_0$ is a subgraph of a blow-up of $K_k$.
Moreover, since $J$ is $\mathcal{F}$-free and $J^{-}\subseteq J$, we know that $J^{-}$ is $F_0$-free.
By the counting estimate of Alon and Shikhelman~\cite{AlSh16} (see Lemma~\ref{le:AlSh16-degen}), we have $\mathcal{N}(J^{-},K_k)\leq cv(J^{-})^{k-\eta}\leq c(nB)^{k-\eta}$ for some constants $c>0$ and $0<\eta<k$.
Since for each copy of $K_k$ in $J^{-}$, the probability that all its vertices are selected is $B^{-k}$, we have $\mathbb{E}[\mathcal{N}(R_J, K_k)]=B^{-k}\mathcal{N}(J^{-},K_k)\leq cn^{k-\eta}B^{-\eta}.$
This completes the derivation of~\eqref{eq:hom-upper-sketch}.

For every spanning subgraph $H\subseteq G$, we also have the following lower bound estimate for homomorphism counts:
\begin{equation}\label{eq:hom-lower-sketch}
\hom(T, L_{\mathbf{x}}(H)(B))\geq B^{v(T)}\Big(A_0(\mathbf{x}, H)+S(\mathbf{x},n)\cdot e(H)\Big).
\end{equation}
Indeed, the coefficient $B^{v(T)}$ follows from the fact that each homomorphism has exactly $B^{v(T)}$ lifts,
and the terms $A_0(\mathbf{x}, H)$ and $S(\mathbf{x},n)\cdot e(H)$ follow from~\eqref{eq:hom-3part-sketch}, \eqref{eq:A-sketch} and $A_{\geq 3}(\mathbf{x},H)\geq 0$.

\medskip
\noindent\emph{Step 3: From homomorphisms to copies and completing the proof.}
We use the following estimate to pass from homomorphism counts to copies of $T$:
\begin{equation}\label{eq:aut-sketch}
0\leq \hom(T, A)-\gamma\mathcal{N}(A, T)\leq v(T)^{2}N^{v(T)-1}
\end{equation}
for any $N$-vertex graph $A$, where $\gamma$ is the number of automorphisms of $T$.
For the lower bound in~\eqref{eq:bounds-sketch}, we choose a spanning subgraph $H$ of $G$ with $\chi(H)\leq k-1$ and $e(H)=z$.
Then $L_{\mathbf{x}}(H)(B)$ is a spanning subgraph of $Z=L_{\mathbf{x}}(G)(B)$.
Moreover, we have $\chi(L_{\mathbf{x}}(H)(B))\leq k-1$ since a proper $(k-1)$-coloring of $H$ extends to $L_{\mathbf{x}}(H)(B)$ through each leaf-addition step and then to the blow-up.
Thus $L_{\mathbf{x}}(H)(B)$ is $\mathcal{F}$-free by $\min_{F\in \mathcal{F}}\chi(F)=k>\chi(L_{\mathbf{x}}(H)(B))$.
Furthermore, since $H$ and $G$ have the same vertex set, we have $A_0(\mathbf{x},H)=A_0(\mathbf{x},G)$ by~\eqref{eq:A-sketch}.
Combining \eqref{eq:hom-3part-sketch}, \eqref{eq:A-sketch}, \eqref{eq:hom-lower-sketch} and \eqref{eq:aut-sketch}, we have
\begin{equation}\label{eq:lower-sketch}
\ex'(Z, T, \mathcal{F}) \geq \mathcal{N}(L_{\mathbf{x}}(H)(B), T) \geq \gamma^{-1}B^{v(T)}\big(A_0(\mathbf{x},G)+S(\mathbf{x},n)\cdot z\big)-\gamma^{-1}v(T)^2v(Z)^{v(T)-1}.
\end{equation}
For the upper bound in~\eqref{eq:bounds-sketch}, let $J\subseteq Z$ be an $\mathcal{F}$-free subgraph with $\mathcal{N}(J,T)=\ex(Z, T, \mathcal{F})$.
Then by \eqref{eq:hom-upper-sketch} and \eqref{eq:aut-sketch}, we have
\begin{equation}\label{eq:upper-sketch}
\ex(Z, T, \mathcal{F}) = \mathcal{N}(J,T) \leq \gamma^{-1}B^{v(T)}\Big(A_0(\mathbf{x},G) +S(\mathbf{x},n)\cdot z +S(\mathbf{x},n)C\big(n^{k-\eta}B^{-\eta} +n^{-3}\big)\Big)
\end{equation}
for some constant $C>0$.
Set $L= \gamma^{-1}B^{v(T)}A_0(\mathbf{x},G)$, $S= \gamma^{-1}B^{v(T)}S(\mathbf{x},n)$, and let $a$ and $b$ be the error terms in \eqref{eq:lower-sketch} and \eqref{eq:upper-sketch}, respectively.
Then \eqref{eq:bounds-sketch} follows from \eqref{eq:lower-sketch} and \eqref{eq:upper-sketch}.
By choosing the construction parameters carefully, we can show that there exists a constant $N^{\ast}>0$ such that $a+b+v(Z)^{v(T)-\varepsilon}<S/2$ whenever $n\geq N^{\ast}$.
As we discussed above, this implies that each of the two approximation problems is NP-hard even when the input graph is restricted to $\{Z(G)\colon\, G\in\mathcal{G}_{k}, v(G)\geq N^{\ast}\}$, and hence also for arbitrary host graphs.

\medskip
\noindent\textbf{Further cases (Theorems~\ref{thm:multipar} and \ref{thm:chromatic+2}).}
In fact, the symmetric rooted construction below yields the more general Theorem~\ref{thm:chromatic}, which implies Theorem~\ref{thm:chromatic+2}, and also establishes Theorem~\ref{thm:clique}.
For these theorems, let $A$ and $B$ be positive integers and fix an oriented edge $xy\in E(T)$.
We construct $Z=Z_{A,B}(G,T,xy)$ in the following steps (see Figure~\ref{fig:root}):
\begin{enumerate}
\item[(1)] We begin with the empty graph on $V(G)$. For every ordered pair $(u,v)$ of distinct vertices of $G$, take a copy $T_{u,v}$ of the graph $T-\{x,y\}$, where all these copies are pairwise disjoint and are disjoint from $V(G)$.
\item[(2)] For each ordered pair $(u,v)$, replace $T_{u,v}$ by its $A$-blow-up $T_{u,v}(A)$.
\item[(3)] For each ordered pair $(u,v)$, regard $u$ and $v$ as playing the roles of $x$ and $y$, respectively, and add edges as follows: for each vertex $w\in V(T-\{x,y\})$,
    \begin{itemize}
        \item if $xw\in E(T)$, add all edges between $u$ and the cluster corresponding to $w$ in $T_{u,v}(A)$;
        \item if $yw\in E(T)$, add all edges between $v$ and the cluster corresponding to $w$ in $T_{u,v}(A)$.
    \end{itemize}
    No edge between $u$ and $v$ is added at this step.
\item[(4)] Add all edges of $G$ among the original vertices of $G$.
\item[(5)] Take the $B$-blow-up of the resulting graph.
\end{enumerate}
For Theorem~\ref{thm:multipar}, we use a related but more specialized construction, attaching complete multipartite auxiliary structures to edges of $G$ to obtain a sharp leading term in the number of $T$-copies.

\begin{figure}[htbp]
\centering
\begin{tikzpicture}[
  x=.85cm,y=.85cm,
  core/.style={circle,draw=black,fill=black!80, inner sep=0pt,minimum size=5pt},
  rvertex/.style={circle,draw=blue!65!black,fill=blue!15, inner sep=0pt,minimum size=4.5pt},
  svertex/.style={circle,draw=red!70!black,fill=red!15, inner sep=0pt,minimum size=4.5pt},
  edge/.style={draw=black!60,line width=0.5pt},
  every node/.style={font=\small}
]
\coordinate (tx) at (-7.5,0.8); \coordinate (ty) at (-5.8,0.8); \coordinate (tr) at (-7.5,2.5); \coordinate (ts) at (-5.8,2.5);
\draw[edge] (tx)--(tr)--(ts)--(ty)--(tx); \draw[edge] (tr)--(ty);
\node[core] at (tx) {}; \node[core] at (ty) {};
\node[rvertex] at (tr) {}; \node[svertex] at (ts) {};
\node[below=4pt] at (tx) {$x$}; \node[below=4pt] at (ty) {$y$};
\node at (-6.65,-0.35) {$K_4^-$};

\coordinate (u1) at (-2,0); \coordinate (u2) at (0,2.4); \coordinate (u3) at (2,0);

\coordinate (r12) at (-3.8,1.0); \coordinate (s12) at (-3.2,2.0); \coordinate (r32) at (3.8,1.0); \coordinate (s32) at (3.2,2.0);

\coordinate (r21) at (-1.25,3.85); \coordinate (s21) at (-2.1,3.2); \coordinate (r23) at (1.25,3.85); \coordinate (s23) at (2.1,3.2);

\coordinate (r31) at (-2.2,-2.2); \coordinate (s31) at (-0.85,-2.2); \coordinate (r13) at (2.2,-2.2); \coordinate (s13) at (0.85,-2.2);

\draw[edge] (u1)--(r12)--(u2)--(s12)--(r12);
\draw[edge] (u3)--(r32)--(u2)--(s32)--(r32);
\draw[edge] (u2)--(r21)--(u1)--(s21)--(r21);
\draw[edge] (u2)--(r23)--(u3)--(s23)--(r23);
\draw[edge] (u3)--(r13)--(u1)--(s13)--(r13);
\draw[edge] (u1)--(r31)--(u3)--(s31)--(r31);
\draw[edge] (u1)--(u2)--(u3);

\foreach \p in {r12,r32,r21,r23,r31,r13}
  \node[rvertex] at (\p) {};
\foreach \p in {s12,s32,s21,s23,s31,s13}
  \node[svertex] at (\p) {};
\foreach \p in {u1,u2,u3}
  \node[core] at (\p) {};
\node[below left=4pt] at (u1) {$u_1$}; \node[above=5pt] at (u2) {$u_2$}; \node[below right=4pt] at (u3) {$u_3$};
\node at (0,-2.95) {$Z_{1,1}(P_3,K_4^-,xy)$};
\end{tikzpicture}
\caption{An illustration of symmetric rooted extensions with $A=B=1$, $G=P_3$ and $T=K_4^-$, where $K_4^-$ is obtained from $K_4$ by deleting one edge. For each ordered pair of distinct vertices of $P_3$, attach a copy of $T-\{x,y\}$ with the adjacencies prescribed by the construction.}
\label{fig:root}
\end{figure}

For the upper and lower bounds, we only present the differences from the proof for bipartite graphs.
Moreover, the following sketch mainly concerns Theorem~\ref{thm:chromatic+2}.
The proof of Theorem~\ref{thm:multipar} requires some suitable adjustments, and for simplicity, we omit the details.
In this proof, we estimate copies of $T$ directly, instead of estimating homomorphisms first and then passing to copies.

For the upper bound, we take an $\mathcal{F}$-free subgraph $J$ of $Z$, and define $J^{-}, J^{+}, R_J$ similarly.
We first show that $$\mathcal{N}(J,T) \leq ~\mathcal{N}(J^{+},T) \leq L+Xe(J^{-})+O_T\left(n^{2v(T)-3}A^{v(T)-3}B^{v(T)}\right)$$
for some numbers $L$ and $X$.
Then we use properties of $R_J$ to prove that $e(J^{-})\leq B^2z+cn^{k-\eta}B^{2-\eta}$ for some constants $c>0$ and $\eta>0$.
Combining these estimates, we have
\begin{equation}\label{eq:clique-upper-sketch}
\ex(Z, T, \mathcal{F})\leq L+XB^2z +O_{T, \mathcal{F}}\left(Xn^{k-\eta}B^{2-\eta} + n^{2v(T)-3}A^{v(T)-3}B^{v(T)}\right).
\end{equation}

For the lower bound, we choose a spanning subgraph $H\subseteq G$ with $\chi(H)\leq k-1$ and $e(H)=z$.
Let $Z_0$ be the spanning subgraph of $Z$ obtained by deleting all edges of $G(B)$, and $Z_0+H(B)$ be the graph obtained from $Z_0$ by adding edges of $H(B)$.
Then $Z_0+H(B)$ is a spanning subgraph of $Z$.
If we can show that $Z_0+H(B)$ is $\mathcal{F}$-free, then we have
\begin{equation}\label{eq:clique-lower-sketch}
\ex'(Z, T, \mathcal{F})\geq \mathcal{N}(Z_0+H(B),T)\geq L+Xe(H(B)) = L+XB^2z.
\end{equation}
Hence, it remains to prove the $\mathcal{F}$-freeness.
We use different methods to prove this for different theorems.
This is also where the assumptions on $T$ and $\mathcal{F}$ in the relevant theorem come from.
For Theorem~\ref{thm:chromatic+2}, we show that $\chi(Z_0+H(B))\leq k-1$ by extending a proper $(k-1)$-coloring of $H$ to a proper $(k-1)$-coloring of $Z_0+H(B)$, so $Z_0+H(B)$ is $\mathcal{F}$-free since $\min_{F\in \mathcal{F}}\chi(F)=k$.

Finally, set $S=XB^2$, $a=0$, and let $b$ be the error term in~\eqref{eq:clique-upper-sketch}.
Then~\eqref{eq:bounds-sketch} follows from~\eqref{eq:clique-upper-sketch} and~\eqref{eq:clique-lower-sketch}.
By choosing $A, B$ carefully so that $a+b+v(Z)^{v(T)-\varepsilon}<S/2$ for every sufficiently large $n$, we complete the proof.

\medskip
\noindent\textbf{Comparison of the constructions.}
Iterative leaf additions preserve the colorability needed to verify the $\mathcal{F}$-freeness of the lower-bound graph, while the counting part for both lower and upper bounds relies on the bipartiteness.
In the proof using symmetric rooted extensions, we can count copies of $T$ directly instead of counting homomorphisms first, but the feasibility of the lower-bound graph requires additional structural hypotheses on $T$ and $\mathcal{F}$.
The two approaches are complementary, each being applicable to different settings.

\section{Preliminaries}
\label{sec:pre}

We first establish a relation between $\ex(G, T, \mathcal{F})$ and $\ex'(G, T, \mathcal{F})$.

\begin{proposition}\label{prop:ex}
Let $T$ be a fixed graph with $e(T)>0$, and let $\mathcal{F}$ be a fixed nonempty family of graphs with $e(F)>0$ for every $F\in \mathcal{F}$.
Then the following statements hold.
\begin{itemize}
\item[{\rm (i)}] If $\mathcal{F}$ is finite, then $0\leq \ex(G, T, \mathcal{F})-\ex'(G, T, \mathcal{F})=O_{T,\mathcal{F}}(1)$ for every sufficiently large graph $G$.
\item[{\rm (ii)}] If no member of $\mathcal{F}$ has isolated vertices, then $\ex(G, T, \mathcal{F})=\ex'(G, T, \mathcal{F})$ for every graph $G$.
\end{itemize}
\end{proposition}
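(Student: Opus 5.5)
The plan is to compare an optimal $\mathcal{F}$-free subgraph $H\subseteq G$ with the spanning subgraph $H'$ obtained from $H$ by adding all missing vertices of $G$ as isolated vertices. The lower bound $\ex'(G,T,\mathcal{F})\le \ex(G,T,\mathcal{F})$ is immediate, since every spanning subgraph is a subgraph. Moreover, $\mathcal{N}(H',T)=\mathcal{N}(H,T)$, because copies of $T$ live on edges and vertices of $H'$. Copies of $T$ using added isolated vertices exist only if $T$ has isolated vertices, and adding vertices can only increase $\mathcal{N}$, so in fact $\mathcal{N}(H',T)\ge \mathcal{N}(H,T)$. The whole issue is therefore whether $H'$ remains $\mathcal{F}$-free.

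For (ii), suppose $H'$ contains some $F\in\mathcal{F}$. Since $F$ has no isolated vertices, every vertex of the copy of $F$ is incident to an edge of the copy. All edges of $H'$ are edges of $H$, so every vertex of the copy lies in $V(H)$ and the copy already sits in $H$, a contradiction. Hence $H'$ is an $\mathcal{F}$-free spanning subgraph with $\mathcal{N}(H',T)\ge\mathcal{N}(H,T)$, which gives $\ex'\ge\ex$ and hence equality.

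For (i), $H'$ may contain some $F\in\mathcal{F}$ whose isolated vertices are supplied by the new vertices. I would instead modify $H$ by deleting a bounded number of edges before padding with isolated vertices. Let $F^\circ$ denote $F$ with its isolated vertices removed, let $m$ be the maximum number of vertices over members of $\mathcal{F}$ (finite, since $\mathcal{F}$ is finite), and assume $v(G)\ge m$ ("sufficiently large").

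\emph{Key step.} I need to control the graphs $F^\circ$ that actually occur in $H$. If $H$ contains a copy of some $F^\circ$ and $v(H)\ge v(F)$, then $H$ contains $F$ itself, because the copy of $F^\circ$ can be padded with other vertices of $H$. So for every $F\in\mathcal{F}$, either $H$ is $F^\circ$-free, or $v(H)<v(F)\le m$. In the second case $H$ has fewer than $m$ vertices, so $\mathcal{N}(H,T)=O_{T,\mathcal{F}}(1)$, and the empty spanning graph already shows $\ex-\ex'=O(1)$; note the empty graph is $\mathcal{F}$-free because every $F$ has an edge. In the first case, $H$ is $F^\circ$-free for every $F\in\mathcal{F}$. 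Since $F^\circ$ has no isolated vertices, the argument of (ii) shows $H'$ is $F^\circ$-free, hence $F$-free for all $F$, and $\ex'=\ex$.

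So the difference is either $0$ or bounded by $\max_{v(H)<m}\mathcal{N}(H,T)\le m^{v(T)}\cdot v(T)!$, which is $O_{T,\mathcal{F}}(1)$. I expect this dichotomy to be the only subtle point: one must check that the case $v(H)<m$ is handled by the bounded count rather than by any structural argument. Finiteness of $\mathcal{F}$ is used exactly there, to bound $m$.
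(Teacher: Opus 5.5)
Your proof is correct and follows essentially the same route as the paper: take an optimal $\mathcal{F}$-free $H\subseteq G$, bound the difference by a constant when $v(H)<\max_{F\in\mathcal{F}}v(F)$, and otherwise show that padding $H$ with isolated vertices keeps it $\mathcal{F}$-free. Your detour through $F^\circ$-freeness only repackages the paper's direct padding argument, and your opening remark about deleting a bounded number of edges is never used and can be dropped.
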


\noindent {\bf Proof.}
(i) By definition, we have $\ex(G, T, \mathcal{F})-\ex'(G, T, \mathcal{F})\geq 0.$
Set $M=\max_{F\in \mathcal{F}}v(F)$.
We show that $$\ex(G, T, \mathcal{F})-\ex'(G, T, \mathcal{F})\leq \mathcal{N}(K_{M-1}, T)$$ for every graph $G$, which implies the conclusion since $\mathcal{N}(K_{M-1}, T)$ depends only on $T$ and $\mathcal{F}$.
Let $H$ be an $\mathcal{F}$-free subgraph of $G$ with $\mathcal{N}(H,T)=\ex(G, T, \mathcal{F})$.
If $v(H)\leq M-1$, then $\ex(G, T, \mathcal{F})-\ex'(G, T, \mathcal{F})\leq \ex(G, T, \mathcal{F})=\mathcal{N}(H,T)\leq \mathcal{N}(K_{M-1}, T)$.
Now assume $v(H)\geq M$.
Let $H'$ be the spanning subgraph of $G$ obtained from $H$ by adding the missing vertices as isolated vertices.
We claim that $H'$ is also $\mathcal{F}$-free.
Indeed, if $H'$ contains a copy of some $F\in \mathcal{F}$, then every nonisolated vertex of $F$ must be mapped into $H$.
Since $v(H)\geq M\geq v(F)$, the isolated vertices of $F$ can also be mapped to the unused vertices of $H$.
This implies that $H$ contains a copy of $F$, a contradiction.
Thus $\ex'(G, T, \mathcal{F})\geq \mathcal{N}(H',T)\geq \mathcal{N}(H,T)=\ex(G, T, \mathcal{F})$, so $\ex(G, T, \mathcal{F})=\ex'(G, T, \mathcal{F})$ in this case.
The result follows.

(ii) It suffices to show that $\ex(G, T, \mathcal{F})\leq \ex'(G, T, \mathcal{F})$ if no member of $\mathcal{F}$ has isolated vertices.
Let $H$ be an $\mathcal{F}$-free subgraph of $G$ with $\mathcal{N}(H,T)=\ex(G, T, \mathcal{F})$.
Extend $H$ to a spanning subgraph $H'$ of $G$ by adding the missing vertices as isolated vertices.
Since no $F\in \mathcal{F}$ contains isolated vertices, $H'$ is also $\mathcal{F}$-free.
Thus $\ex'(G, T, \mathcal{F})\geq \mathcal{N}(H',T)\geq \mathcal{N}(H,T)=\ex(G, T, \mathcal{F})$.
This completes the proof.
\hfill$\blacksquare$

\vspace{0.2cm}

The following reduction result converts estimates for the extremal parameters into approximation hardness.

\begin{theorem}\label{thm:transfer}
Let $T$ be a fixed graph with $e(T)>0$, $\mathcal{F}$ be a nonempty family of graphs, and $\varepsilon>0$.
Let $\mathcal{G}$ be a class of input graphs,
and suppose that $z=z(G)$ is a graph parameter that is NP-hard to approximate within an additive error of $g(n)$ for $G\in \mathcal{G}$.
Suppose that for every $n$-vertex graph $G\in \mathcal{G}$, one can construct in polynomial time a graph $Z=Z(G)$ so that
$$L+Sz-a \leq \ex'(Z, T, \mathcal{F})\leq \ex(Z, T, \mathcal{F})\leq L+Sz+b,$$
where $L=L(G)\geq 0$ and $S=S(G)>0$ are polynomial-time computable rational numbers, and $a,b$ are nonnegative errors.
If $$a+b+v(Z)^{v(T)-\varepsilon}\leq g(n)S,$$ then it is NP-hard to approximate $\ex(Z, T, \mathcal{F})$ or $\ex'(Z, T, \mathcal{F})$ within an additive error of $v(Z)^{v(T)-\varepsilon}$ for $Z\in \{Z(G)\colon\, G\in\mathcal{G}\}$.
\end{theorem}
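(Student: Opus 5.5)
This is a standard polynomial-time reduction, and the plan is to run it directly. Assume, for contradiction, that there is a polynomial-time algorithm $\mathcal{A}$ which, given $Z\in\{Z(G)\colon G\in\mathcal{G}\}$, outputs a number $Y$ with $|Y-\ex(Z,T,\mathcal{F})|\leq v(Z)^{v(T)-\varepsilon}$. The same argument will work when $\mathcal{A}$ approximates $\ex'(Z,T,\mathcal{F})$ instead. From $\mathcal{A}$ we build a polynomial-time algorithm that approximates $z(G)$ within additive error $g(n)$ for every $G\in\mathcal{G}$. This contradicts the assumed NP-hardness, in the usual sense that an NP-hard problem reduces to our problem via a polynomial-time (Turing) reduction.

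The algorithm proceeds as follows on input $G\in\mathcal{G}$ with $n$ vertices.
\begin{enumerate}
\item Construct $Z=Z(G)$ in polynomial time. In particular $v(Z)$ is polynomial in $n$, so running $\mathcal{A}$ on $Z$ takes time polynomial in $n$.
\item Run $\mathcal{A}$ on $Z$ to obtain $Y$.
\item Compute the rationals $L$ and $S$ in polynomial time.
\item Output $\hat z\colonequals (Y-L)/S$.
\end{enumerate}

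For correctness in the $\ex$ case, the hypothesis $L+Sz-a\leq \ex'\leq\ex\leq L+Sz+b$ gives
$$L+Sz-a-v(Z)^{v(T)-\varepsilon}\leq Y\leq L+Sz+b+v(Z)^{v(T)-\varepsilon}.$$
Since $S>0$, this yields
$$|\hat z-z|\leq \frac{\max(a,b)+v(Z)^{v(T)-\varepsilon}}{S}\leq \frac{a+b+v(Z)^{v(T)-\varepsilon}}{S}\leq g(n).$$
The $\ex'$ case is identical, because $\ex'$ lies in the same interval $[L+Sz-a,\,L+Sz+b]$.

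The only delicate point is bookkeeping rather than mathematics. The model requires $L$, $S$ and $Y$ to have polynomial bit-length, so that the arithmetic in step 4 runs in polynomial time. This holds because $L$ and $S$ are polynomial-time computable and $\mathcal{A}$ outputs a number of polynomial size.

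We also need to check the hypothesis direction: here $Z$ is a graph on $v(Z)$ vertices, and the error in the conclusion is measured in $v(Z)$ while $g$ is measured in $n$. This is fine because the assumed inequality is stated exactly in these mixed terms.

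In the intended application (Corollary~\ref{cor:transfer}), $z$ is integer-valued and exact computation is NP-hard, so one takes $g(n)<1/2$. The rounding of $\hat z$ to the nearest integer then recovers $z$, matching the sketch in Section~\ref{subsec:sketch}.
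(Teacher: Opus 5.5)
Your proposal is correct and takes the same approach as the paper. Both run the assumed approximation algorithm on $Z(G)$, output $(Y-L)/S$, and use the sandwich inequality to bound the error by $(a+b+v(Z)^{v(T)-\varepsilon})/S\le g(n)$; your intermediate $\max(a,b)$ bound is just a slightly sharper form of the paper's triangle-inequality step.
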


\noindent {\bf Proof.}
It suffices to consider $\ex(Z, T, \mathcal{F})$, since the proof for $\ex'(Z, T, \mathcal{F})$ is essentially the same.
Suppose that there is a polynomial-time algorithm which returns a number $x$ satisfying $|x-\ex(Z, T, \mathcal{F})|\leq v(Z)^{v(T)-\varepsilon}$.
Let $z'=\frac{x-L}{S}$.
Since $L+Sz-a \leq \ex(Z, T, \mathcal{F})\leq L+Sz+b$ and $a+b+v(Z)^{v(T)-\varepsilon}\leq g(n)S$, we have
\begin{align*}
  |z'-z| = &~\left|\frac{x-L}{S}-z\right| = \frac{1}{S}\left|x-(L+Sz)\right| \leq \frac{1}{S}\big(\left|x-\ex(Z, T, \mathcal{F})\right|+\left|\ex(Z, T, \mathcal{F})-(L+Sz)\right|\big)\\
   \leq &~\frac{1}{S}\left(v(Z)^{v(T)-\varepsilon}+a+b\right) \leq g(n),
\end{align*}
where the first inequality follows from the triangle inequality.
This gives a polynomial-time algorithm for approximating $z$ within an additive error of $g(n)$, although this approximation problem is assumed to be NP-hard.
Therefore, it is NP-hard to approximate $\ex(Z, T, \mathcal{F})$ within an additive error of $v(Z)^{v(T)-\varepsilon}$ for $Z\in \{Z(G)\colon\, G\in\mathcal{G}\}$.
The proof of Theorem~\ref{thm:transfer} is complete.
\hfill$\blacksquare$


\begin{corollary}\label{cor:transfer}
Let $T$ be a fixed graph with $e(T)>0$, $\mathcal{F}$ be a nonempty family of graphs with $k=\min_{F\in \mathcal{F}}\chi(F)\geq 3$, and $\varepsilon>0$.
Let $\mathcal{G}$ be a class of input graphs such that computing $z=\cut_{k-1}(G)$ is NP-hard for inputs $G\in \mathcal{G}$.
Suppose that for every $n$-vertex graph $G\in \mathcal{G}$, one can construct in polynomial time a graph $Z=Z(G)$ so that
$$L+Sz-a \leq \ex'(Z, T, \mathcal{F})\leq \ex(Z, T, \mathcal{F})\leq L+Sz+b,$$
where $L=L(G)\geq 0$ and $S=S(G)>0$ are polynomial-time computable rational numbers, and $a,b$ are nonnegative errors.
If $$a+b+v(Z)^{v(T)-\varepsilon}< \frac{S}{2},$$ then it is NP-hard to approximate $\ex(Z, T, \mathcal{F})$ or $\ex'(Z, T, \mathcal{F})$ within an additive error of $v(Z)^{v(T)-\varepsilon}$ for $Z\in \{Z(G)\colon\, G\in\mathcal{G}\}$.
\end{corollary}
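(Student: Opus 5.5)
This follows from Theorem~\ref{thm:transfer} applied with $g(n)=1/2$ (or any constant strictly below $1/2$ chosen to fit the strict inequality). The one observation needed is that $z=\cut_{k-1}(G)$ is integer-valued. Hence exact computation of $z$ reduces in polynomial time to approximating $z$ within additive error strictly less than $1/2$: given $z'$ with $|z'-z|<1/2$, rounding $z'$ to the nearest integer returns $z$. Since computing $z$ is NP-hard on $\mathcal{G}$, approximating $z$ within additive error less than $1/2$ is also NP-hard on $\mathcal{G}$.

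To keep the strict inequality, I would rerun the short argument in the proof of Theorem~\ref{thm:transfer} directly instead of invoking it as a black box. Suppose an algorithm outputs $x$ with $|x-\ex(Z,T,\mathcal{F})|\le v(Z)^{v(T)-\varepsilon}$, and set $z'=(x-L)/S$. This is computable in polynomial time because $Z$ is constructed in polynomial time and $L,S$ are polynomial-time computable rationals. The triangle inequality together with the sandwich bounds gives
\[
|z'-z|\le \frac{1}{S}\bigl(v(Z)^{v(T)-\varepsilon}+a+b\bigr)<\frac12 .
\]
Rounding $z'$ therefore recovers $z$ exactly, which contradicts the NP-hardness of computing $\cut_{k-1}(G)$ on $\mathcal{G}$ unless P $=$ NP. The same argument applies verbatim to $\ex'(Z,T,\mathcal{F})$, since $\ex'$ satisfies the same two-sided bound.

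There is no real obstacle. The only point needing care is the strict versus non-strict inequality and the rounding tie at exactly $1/2$, which the strict hypothesis $a+b+v(Z)^{v(T)-\varepsilon}<S/2$ rules out. The assumption $k\ge 3$ only ensures that $\cut_{k-1}$ with $k-1\ge 2$ is the NP-hard Max $(k-1)$-Colorable Subgraph problem; the argument itself does not use it.
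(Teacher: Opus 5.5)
Your proposal is correct and is essentially the paper's argument: both recover $z=\cut_{k-1}(G)$ exactly by rounding $(x-L)/S$. The only difference is that the paper keeps Theorem~\ref{thm:transfer} as a black box by taking $g(n)=\max_{G\in\mathcal{G},\,v(G)=n} S^{-1}\bigl(a+b+v(Z)^{v(T)-\varepsilon}\bigr)$, which is strictly below $1/2$ as a maximum over finitely many $n$-vertex instances, whereas you inline the estimate; this handles the strict inequality just as well, and you were right not to rely on $g(n)=1/2$ or on an unjustified uniform constant below $1/2$.
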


\noindent {\bf Proof.}
For each $G\in \mathcal{G}$, define $g(G)\colonequals S^{-1}\left(a+b+v(Z)^{v(T)-\varepsilon}\right)$.
Let $\mathcal{G}(n)\colonequals \{G\in \mathcal{G}\colon\, v(G)=n\}$ and $g(n)\colonequals \max_{G\in \mathcal{G}(n)}g(G)$, with $g(n)\colonequals 0$ if $\mathcal{G}(n)=\varnothing$.
By the assumption, we have $0\leq g(n)<1/2$ for every $n$.
Since $z=\cut_{k-1}(G)$ is an integer, if we can approximate $z$ within an additive error of $g(n)<1/2$, then we can determine $z$ exactly by rounding to the nearest integer.
Hence, approximating $z$ within an additive error of $g(n)$ is as hard as determining $z$ exactly, and thus NP-hard.
Now the result follows from Theorem~\ref{thm:transfer}.
\hfill$\blacksquare$

\vspace{0.2cm}

We will use the following result of Alon, Shapira and Sudakov, which follows from \cite[Theorem~6.1~(i)]{AlSS} and the argument in \cite[Section~8]{AlSS}.
Lemma~\ref{le:AlSS-NPhard} below serves as a convenient reformulation of their results.
For completeness, we also supply a proof sketch of the lemma.

\begin{lemma}[\cite{AlSS}]\label{le:AlSS-NPhard}
For every integer $k\geq 3$, there exist constants $\mu=\mu(k)>0$ and $n_0=n_0(k)$ such that every $n$-vertex graph $G$ with $n\geq n_0$ and $\delta(G)\geq (1-\mu)n$ satisfies $\cut_{k-1}(G)=\ex(G, K_2, K_k)$.
Moreover, computing $\cut_{k-1}(G)$ exactly is NP-hard when restricted to these graphs.
\end{lemma}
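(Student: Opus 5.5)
The lemma has two parts: the identity $\cut_{k-1}(G)=\ex(G,K_2,K_k)$ on very dense graphs, and NP-hardness of $\cut_{k-1}$ restricted to that class. I will treat them separately, following \cite[Theorem~6.1 and Section~8]{AlSS}.

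\emph{Identity.} The inequality $\cut_{k-1}(G)\le \ex(G,K_2,K_k)$ is trivial, since a $(k-1)$-colorable spanning subgraph is $K_k$-free. For the reverse inequality, let $H\subseteq G$ be a $K_k$-free subgraph with the maximum number of edges.

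\begin{enumerate}
\item Since $(k-1)$-partite subgraphs of $G$ are $K_k$-free, a max-$(k-1)$-cut of $G$ gives $e(H)\ge \cut_{k-1}(G)\ge (1-\frac{1}{k-1}-O(\mu))\binom{n}{2}$.
\item By the Erd\H{o}s--Simonovits stability theorem, $H$ is $o(n^2)$-close to a complete balanced $(k-1)$-partite graph. Here "$o(n^2)$" means "as small as we like once $\mu$ is small and $n$ is large."
\item Take a max-$(k-1)$-cut partition $V_1,\dots,V_{k-1}$ of $H$. The number of edges of $H$ inside the parts is small, and each $V_i$ has size close to $n/(k-1)$.
\item Run the standard cleaning argument. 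Suppose some vertex $v$ has many $H$-neighbours in every part. Then $v$ lies in many copies of $K_k$, using the fact that $G$ is nearly complete, so almost all cross pairs among $v$'s neighbours are $H$-edges. Hence every vertex has few neighbours in its own part.
\item Now let $uv$ be an $H$-edge inside some $V_i$. Since $\delta(G)\ge(1-\mu)n$ and $H$ is nearly complete across parts, $u$ and $v$ have many common $H$-neighbours in each other part. This forces a $K_k$, a contradiction unless the in-part edges can be traded for cross edges.
\item Perform that trade by a counting or augmentation argument. Deleting the in-part edges of $H$ and adding the missing cross edges of $G$ does not decrease the edge count, because each vertex loses few in-part edges but can regain cross edges. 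This yields $e(H)\le \cut_{k-1}(G)$.
\end{enumerate}

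The delicate point is step 6: proving that an extremal $H$ is exactly $(k-1)$-partite, not merely nearly so. I would argue by minimal counterexample. Choose the optimal $H$ that minimises the number of in-part edges. Then exhibit a local modification, either moving a vertex to another part or swapping edges, that strictly decreases this number while not decreasing $e(H)$. The minimum-degree condition on $G$ supplies the many common neighbourhoods this modification needs.

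\emph{Hardness.} Reduce from Max-$(k-1)$-Cut on general graphs, which is NP-hard \cite{Vit}. Given a graph $G_0$ on $m$ vertices, form $G$ by taking a large balanced complete $(k-1)$-partite graph $K$ on $N=\mathrm{poly}(m)$ vertices. Join every vertex of $G_0$ to every vertex of $K$, and add $G_0$'s own edges inside. With $N\gg m/\mu$, every vertex has degree at least $(1-\mu)v(G)$ after also making $K$ nearly complete. This is the point where care is needed: $K$ itself should be replaced by a nearly complete graph whose max $(k-1)$-cut is rigid and known, such as a complete graph or a blow-up with known structure.

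One then argues that an optimal $(k-1)$-partition of $G$ is forced to be determined, up to symmetry, on the $K$-part. Consequently $\cut_{k-1}(G)=c(N,m)+\cut_{k-1}(G_0)$ for an explicit, polynomial-time computable quantity $c(N,m)$. The forcing again comes from a stability argument, since any deviation on $K$ costs on the order of $N$ edges while $G_0$ can gain at most $m^2\ll N$. This transfers NP-hardness of computing $\cut_{k-1}$ to the dense class, which together with the identity proves the lemma.
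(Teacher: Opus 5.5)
Your outline for the identity is sound, but the hardness half has a genuine gap, exactly where you flag it. Joining $G_0$ completely to a nearly complete $K$ does not in general remember $G_0$. Take $K=K_N$, one of your suggested choices. Put all of $V(G_0)$ into one class and use the freely movable vertices of $K_N$ to rebalance. Then every cross pair is an edge, so $\cut_{k-1}(G)$ equals the number of edges of the balanced complete $(k-1)$-partite graph on $N+m$ vertices, whatever $G_0$ is.

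Your heuristic (deviations on $K$ cost order $N$, while $G_0$ gains at most $m^2$) does not address this. The danger is not that the partition of $K$ deviates, but that $K$ absorbs the imbalance created when $G_0$ collapses into one class. You need a dense gadget whose optimal partitions are balanced in coarse, indivisible blocks. For example, let $K=K_{b,\ldots,b}$ have $t$ parts, with $(k-1)\mid t$, $t\ge 1/\mu$ and $b>m^2/2+1$; then $\delta(G)\ge(1-\mu)v(G)$ holds automatically.

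For a partition $(P_1,\ldots,P_{k-1})$, define:
\begin{itemize}
\item $d_i$: the number of $K$-vertices in $P_i$ minus $N/(k-1)$;
\item $m_i=|P_i\cap V(G_0)|$;
\item $D$: the number of pairs inside parts of $K$ that are separated.
\end{itemize}
The number of uncut edges is then a constant plus $\tfrac12\sum_i d_i^2+\sum_i d_im_i+D+\sum_i e(G_0[P_i\cap V(G_0)])$. We have $\sum_i d_im_i\ge -m\|d\|$. If some part of $K$ is split, then $D\ge b-1$. If no part is split but $d\ne 0$, then $\|d\|^2\ge 2b^2$. In either case the first three terms sum to a positive number, and they vanish only when $d=0$ and $D=0$. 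The last term is always at least $e(G_0)-\cut_{k-1}(G_0)$, with equality attainable when $d=0$ and $D=0$. This gives $\cut_{k-1}(G)=c(N,m)+\cut_{k-1}(G_0)$.

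This rigidity computation is the missing content. The paper does not reprove it; it cites the reduction of \cite[Section~8]{AlSS} (Claim~8.3 plus a rounding argument), which produces graphs containing a spanning $(1-\mu)v(G)$-regular graph.

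For the identity, step 5 is not justified as written. Global $o(n^2)$-closeness of $H$ to a complete $(k-1)$-partite graph says nothing about the particular endpoints of an in-part edge. The minimal-counterexample idea in step 6 is also unnecessary; double counting finishes cleanly. Suppose $u,v\in V_i$ had at least $\alpha n$ common $H$-neighbours in every other part. Closeness would then give a $K_{k-2}$ among them, hence a $K_k$. So for some $j\ne i$, the vertices $u,v$ together miss at least $|V_j|-\alpha n-2\mu n$ cross edges of $G$. By your step 4, every vertex has fewer than $\gamma n$ in-part $H$-neighbours. Summing over in-part edges then shows that the missing cross edges of $G$ are at least as many as the in-part edges of $H$. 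Hence the $(k-1)$-partite graph of all cross edges of $G$ has at least $e(H)$ edges. With this, your identity argument becomes a self-contained substitute for \cite[Theorem~6.1(i)]{AlSS}, which the paper simply cites.
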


\noindent {\bf Proof sketch.}
Applying \cite[Theorem~6.1(i)]{AlSS} with $H=K_k$, we obtain constants $\mu>0$ and $n_0$ such that $\cut_{k-1}(G)=\ex(G,K_2,K_k)$ whenever $v(G)\geq n_0$ and $\delta(G)\ge(1-\mu)v(G)$.
With the same choice of $\mu$, the reduction in \cite[Section~8]{AlSS} produces a graph $G$ containing a spanning regular graph $G_0$ of degree $(1-\mu)v(G)$, so $\delta(G)\ge(1-\mu)v(G)$.
Claim~8.3 and the subsequent rounding argument in that section show that computing the minimum number of edges whose deletion makes $G$ $(k-1)$-colorable is NP-hard even on these graphs.
Since this number equals $e(G)-\cut_{k-1}(G)$, the hardness assertion follows.
\hfill$\blacksquare$

\vspace{0.2cm}

For each integer $k\geq 3$, fix $\mu=\mu(k)>0$ and $n_0=n_0(k)$ as in the preceding lemma, and define $$\mathcal{G}_k\colonequals \big\{G\colon\, v(G)\geq n_0, \delta(G)\geq (1-\mu)v(G)\big\}.$$
Thus $\cut_{k-1}(G)$ remains NP-hard to compute on $\mathcal{G}_k$, while on this class it equals $\ex(G,K_2,K_k)$.

We will also use the following result of Alon and Shikhelman~\cite{AlSh16} on generalized Tur\'{a}n problems.

\begin{lemma}[{\cite[Proposition~2.1]{AlSh16}}]\label{le:AlSh16-degen}
Let $T$ and $F$ be two graphs with $v(T)>0$ and $v(F)>0$, and suppose that $F$ is a subgraph of a blow-up of $T$.
Then there exist constants $c=c(T,F)>0$ and $\eta=\eta(T,F)\in (0,v(T))$ such that every $n$-vertex $F$-free graph $H$ satisfies $\mathcal{N}(H,T)\leq cn^{v(T)-\eta}$.
\end{lemma}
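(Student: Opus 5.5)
The plan is to reduce the count of copies of $T$ to the number of edges of an auxiliary $t$-partite $t$-uniform hypergraph, where $t\colonequals v(T)$, and then apply Erdős's theorem on the Turán number of complete $t$-partite $t$-uniform hypergraphs. Write $V(T)=\{1,\ldots,t\}$ and $m\colonequals v(F)$. Since $F$ is a subgraph of a blow-up of $T$, it is a subgraph of the $m$-blow-up $T(m)$; indeed every cluster used by $F$ has at most $m$ vertices. So it suffices to show that $T(m)$-free graphs have few copies of $T$.

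\emph{Degenerate cases.} If $t=1$, then $T(m)$ is an independent set of size $m$. Hence an $F$-free graph has fewer than $m$ vertices, and the bound holds with $\eta=1/2$ and $c$ large. Similarly, if $m=1$, no nonempty graph is $F$-free. So assume $t\ge 2$ and $m\ge 2$.

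\emph{Step 1: the auxiliary hypergraph.} Let $H$ be an $n$-vertex $F$-free graph and fix a partition $V(H)=V_1\cup\cdots\cup V_t$. Define a $t$-uniform, $t$-partite hypergraph $\mathcal{H}$ with parts $V_1,\ldots,V_t$. Its edges are the sets $\{v_1,\ldots,v_t\}$ with $v_i\in V_i$ such that $v_iv_j\in E(H)$ for every $ij\in E(T)$; that is, the map $i\mapsto v_i$ is an (injective) homomorphism.

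Every copy of $T$ in $H$ is the image of some injective homomorphism $\varphi\colon T\to H$. If the partition is chosen uniformly at random, then $\varphi(i)\in V_i$ for all $i$ with probability $t^{-t}$, and in that event $\varphi(V(T))$ is an edge of $\mathcal{H}$. Distinct copies with such a "well-placed" $\varphi$ give edges of $\mathcal{H}$, and each edge arises from at most $\hom(T,T)\le t^t$ copies, since the labelled tuple determines the edge set of the copy. Averaging, we fix a partition with
$$e(\mathcal{H})\ \ge\ t^{-2t}\,\mathcal{N}(H,T).$$

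\emph{Step 2: no complete $t$-partite subhypergraph.} I claim $\mathcal{H}$ contains no complete $t$-partite hypergraph $K^{(t)}(m,\ldots,m)$. Suppose it contains one with parts $U_i\subseteq V_i$. Take $ij\in E(T)$, $x\in U_i$ and $y\in U_j$. By completeness there is an edge of $\mathcal{H}$ containing both $x$ and $y$, and the definition of $\mathcal{H}$ forces $xy\in E(H)$. The sets $U_i$ are pairwise disjoint, so $H\supseteq T(m)\supseteq F$, a contradiction.

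\emph{Step 3: Erdős's bound.} Erdős's theorem gives $e(\mathcal{H})\le n^{t-m^{1-t}}$ for $n$ large (with a constant absorbing small $n$). Therefore
$$\mathcal{N}(H,T)\le c\,n^{t-\eta},\qquad \eta=m^{1-t}\in(0,1]\subseteq (0,t).$$

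The only delicate point I expect is Step 1: making sure copies (as opposed to homomorphisms) are mapped to hyperedges with bounded multiplicity, and that the hyperedge condition encodes exactly the adjacencies of $T$ so that Step 2 recovers the blow-up. Both are handled by the labelled-placement argument above.
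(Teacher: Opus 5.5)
Your argument is correct. The paper gives no proof of this lemma and simply cites Alon--Shikhelman, and your route is the standard proof of their proposition: you pass to $F\subseteq T(m)$, take a random $t$-partition to obtain a $t$-partite $t$-graph of well-placed labelled copies, show that a $K^{(t)}(m,\ldots,m)$ in it forces $T(m)\subseteq H$, and apply Erd\H{o}s's bound $n^{t-m^{1-t}}$. Two points you leave implicit are easy to check. Each hyperedge comes from exactly one well-placed copy, so your factor $t^{t}$ is merely loose. Also, the parts of any $K^{(t)}(m,\ldots,m)$ inside a $t$-partite $t$-graph automatically align with $V_1,\ldots,V_t$.
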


\section{Proof of Theorem~\ref{thm:bipar}}
\label{sec:pf_bipar}

In this section, we confirm Conjecture~\ref{conj:AlSh22} for every bipartite graph $T$ with at least one edge.
We divide the proof into three steps.
In Section~\ref{subsec:leaf}, we introduce a construction $L_{\mathbf{x}}(G)$ based on iterative leaf additions, and derive an expansion for the number of homomorphisms from $T$ to $L_{\mathbf{x}}(G)$.
In Section~\ref{subsec:blow-up}, we construct a graph $Z$ from $L_{\mathbf{x}}(G)$ via a blow-up operation, and bound the number of homomorphisms from $T$ to its subgraphs using random transversals.
In Section~\ref{subsec:proof_bipar}, we complete the proof of Theorem~\ref{thm:bipar} by applying Corollary~\ref{cor:transfer}.

Throughout this section, we fix a bipartite graph $T$ with $e(T)>0$, a nonempty family $\mathcal{F}$ such that no $F\in \mathcal{F}$ is a subgraph of a blow-up of $T$, and $0<\varepsilon\le1$.
Let $G$ be an $n$-vertex graph, where $n\geq 1$.
If $v(T)=2$, then $T=K_2$, and the conclusion follows from the work of Alon, Shapira and Sudakov~\cite{AlSS} as explained in Section~\ref{sec:introduction}.
Henceforth, we assume that $v(T)\geq 3$ throughout this section.

Note that the condition on $\mathcal{F}$ is equivalent to $\min_{F\in \mathcal{F}}\chi(F)\geq 3$.
Indeed, if some $F\in \mathcal{F}$ satisfies $\chi(F)\leq 2$, then $F$ is a subgraph of a sufficiently large blow-up of an edge of $T$.
Conversely, if every $F\in \mathcal{F}$ satisfies $\chi(F)\geq 3$, then no $F\in \mathcal{F}$ is a subgraph of a blow-up of $T$ since every blow-up of a bipartite graph is bipartite.

\subsection{Iterative leaf additions}
\label{subsec:leaf}

In a graph, a {\it leaf} is a vertex of degree 1.
Given a graph $G$ and a positive integer $x$, let $L_x(G)$ be the graph obtained from $G$ by attaching $x$ leaves to every vertex of $G$, where all added vertices are distinct and lie outside $V(G)$.
See Figure~\ref{fig:leaf} for an illustration.
For each vertex $v\in V(G)$, we call $v$ the {\it anchor} of the $x$ leaves attached to $v$.
In order to study homomorphisms from a bipartite graph $Q$ to $L_x(G)$, we introduce the following {\it quotient graph} $Q_I$ of $Q$ with respect to an independent set $I$.
For an independent set $I\subseteq V(Q)$, we construct the graph $Q_I$ from $Q$ as follows:
\begin{itemize}
\item Delete the vertices of $I$.
\item For each nonisolated vertex $v\in I$, identify all vertices of $N_Q(v)$ to form a single vertex $v'$, taking the transitive closure of these identifications. The vertices $v'$ are not required to be distinct: if the transitive closure merges the identifications corresponding to two distinct vertices $v_1, v_2\in I$, then $v'_1$ and $v'_2$ denote the same vertex.
\item For each isolated vertex $v\in I$, add one new isolated vertex $v'$.
\end{itemize}

Note that if $e(Q_I)>0$, then every edge of $Q_I$ can be traced back to an edge of $Q$.
Indeed, deleting the vertices of $I$ and adding isolated vertices cannot create an edge.
For a fixed identification in which a set of vertices is identified as a single vertex $u$, an edge $uw$ is present in $Q_I$ only if $w$ was adjacent to at least one of the vertices identified as $u$;
all edges not incident to $u$ are inherited unchanged.
Thus every edge of $Q_I$ after an identification comes from an edge of $Q$ before that identification.
Therefore, tracing an edge of $Q_I$ backwards through all the identifications gives an edge of $Q$.
For an edge $e\in E(Q_I)$, we call the corresponding edge in $Q$ a {\it preimage edge} of $e$.
Notice that for each edge of $Q_I$, its preimage edge is not necessarily unique.
In particular, if $e(Q_I)>0$, then $e(Q)>0$.

We now provide an expansion for $\hom(Q, L_x(G))$ in terms of $\hom(Q_I, G)$.

\begin{lemma}\label{le:leaf-hom-1}
For any graph $G$, bipartite graph $Q$ and positive integer $x$, we have
$$\hom(Q, L_x(G))=\sum\limits_{\substack{I\subseteq V(Q),  \\ \text{\rm $I$ is independent}}}x^{|I|}\hom(Q_I, G).$$
\end{lemma}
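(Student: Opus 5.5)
The plan is to classify the homomorphisms $\varphi\colon Q\to L_x(G)$ by the set $I_\varphi\colonequals \varphi^{-1}(V(L_x(G))\setminus V(G))$ of vertices of $Q$ sent to leaves. For each independent $I\subseteq V(Q)$, I will show that the homomorphisms with $I_\varphi=I$ are in bijection with pairs consisting of a homomorphism $\psi\colon Q_I\to G$ and a choice of one of $x$ leaves for each $v\in I$. Summing over $I$ then gives the identity.

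\emph{Step 1: $I_\varphi$ is independent.} No two leaves are adjacent in $L_x(G)$, so $I_\varphi$ is independent in $Q$. For $v\in I_\varphi$ with $\varphi(v)=\ell$, every neighbour $w$ of $v$ must be sent to the unique neighbour of $\ell$, namely its anchor $a(\ell)\in V(G)$. Hence $\varphi$ is constant on each $N_Q(v)$ with $v\in I_\varphi$. By transitivity, $\varphi$ is also constant on each class of the transitive closure used to form $Q_I$. Moreover, every vertex outside $I_\varphi$ is sent into $V(G)$, and every edge of $Q$ avoiding $I_\varphi$ is sent to an edge of $G$.

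\emph{Step 2: no edges are lost in the quotient (this is where bipartiteness is used).} Fix a bipartition $(X,Y)$ of $Q$. For each $v\in I$, the set $N_Q(v)$ lies entirely in the part not containing $v$. Two neighbourhoods are merged only if they share a vertex, so they lie in the same part. Hence every identification class lies inside $X$ or inside $Y$, and no class contains an edge of $Q$. Therefore the quotient map $\pi\colon V(Q)\setminus I\to V(Q_I)$ discards no edge, and $uw\in E(Q)$ with $u,w\notin I$ implies $\pi(u)\pi(w)\in E(Q_I)$. This is the step I expect to be the main obstacle. Without it, an edge inside a class would be discarded as a loop, and there would be homomorphisms $\psi$ of $Q_I$ that do not lift to homomorphisms of $Q$.

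\emph{Step 3: the bijection.} Given $\varphi$ with $I_\varphi=I$, define $\psi\colon V(Q_I)\to V(G)$ as follows.
\begin{itemize}
\item On $\pi(u)$, set $\psi(\pi(u))=\varphi(u)$; this is well defined by Step~1.
\item On $v'$ for nonisolated $v\in I$, set $\psi(v')=a(\varphi(v))$, which agrees with the value on $N_Q(v)$.
\item On the new vertex $v'$ for isolated $v\in I$, set $\psi(v')=a(\varphi(v))$.
\end{itemize}
Every edge of $Q_I$ has a preimage edge in $Q$, so $\psi$ is a homomorphism. Together with $\psi$, record for each $v\in I$ which of the $x$ leaves at $\psi(v')$ equals $\varphi(v)$.

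Conversely, given $\psi\in\hom(Q_I,G)$ and a leaf choice for each $v\in I$, define $\varphi=\psi\circ\pi$ on $V(Q)\setminus I$ and send each $v\in I$ to the chosen leaf at $\psi(v')$. Edges inside $V(Q)\setminus I$ are preserved by Step~2. Each edge $vw$ with $v\in I$ goes from a leaf to its anchor, since $\psi(v')=\psi(\pi(w))$. The two maps are mutually inverse, so the number of homomorphisms $\varphi$ with $I_\varphi=I$ equals $x^{|I|}\hom(Q_I,G)$. Summing over all independent sets $I$ proves Lemma~\ref{le:leaf-hom-1}.
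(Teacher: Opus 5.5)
Your proof is correct and follows the paper's argument: you classify homomorphisms by the set of vertices sent to leaves, and you biject those with a fixed independent set $I$ with pairs consisting of an element of $\hom(Q_I,G)$ and $x^{|I|}$ leaf choices. Your Step~2 is a useful addition. It uses the bipartition to show that no identification class contains an edge of $Q$, which is the one place bipartiteness is needed; the paper's proof leaves this implicit in the phrase ``This is a bijection.''
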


\noindent {\bf Proof.}
Fix a homomorphism $\varphi$ from $Q$ to $L_x(G)$.
Let $I\subseteq V(Q)$ be the set of vertices mapped to the set $V(L_x(G))\setminus V(G)$ of newly added leaves.
Then $I$ is an independent set of $Q$ since $V(L_x(G))\setminus V(G)$ is an independent set of $L_x(G)$.
For any nonisolated vertex $v\in I$, all vertices of $N_Q(v)$ must map to the anchor of $\varphi(v)$.
This gives precisely the identifications in the definition of $Q_I$.
For any isolated vertex $v\in I$, the anchor of $\varphi(v)$ can be viewed as $\psi(v')$ for some homomorphism $\psi$ from $Q_I$ to $G$.
Once we fix a homomorphism $\psi$ from $Q_I$ to $G$, for each $v\in I$ we can choose $\varphi(v)$ independently from the $x$ leaves attached to $\psi(v')$.
This is a bijection, so the number of homomorphisms with this fixed set $I$ is $x^{|I|}\hom(Q_I,G)$.
Summing over independent sets $I$ proves the identity.
\hfill$\blacksquare$

\vspace{0.2cm}

The following properties of $Q_I$ will be used in our proof of Lemma~\ref{le:leaf-hom-2} below.

\begin{lemma}\label{le:QI}
Let $Q$ be a bipartite graph and $I$ be an independent set of $Q$.
Then
\begin{itemize}
\item[{\rm (i)}] $v(Q_I)\leq v(Q)$,
\item[{\rm (ii)}] $Q_I$ is bipartite.
\end{itemize}
\end{lemma}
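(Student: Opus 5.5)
Part (i) follows from a simple vertex count, and part (ii) from propagating a proper 2-coloring of $Q$ to $Q_I$.

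\textbf{Part (i).} Deleting $I$ removes exactly $|I|$ vertices. We then add at most $|I|$ vertices. Each nonisolated $v\in I$ produces at most one $v'$, obtained by merging vertices already in $V(Q)\setminus I$; this merging never increases the count and may decrease it. Each isolated $v\in I$ produces exactly one new vertex. Hence
\[
v(Q_I)\leq \bigl(v(Q)-|I|\bigr)+|I| = v(Q).
\]

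\textbf{Part (ii).} Fix a proper 2-coloring $c\colon V(Q)\to\{1,2\}$. For a nonisolated $v\in I$, every vertex of $N_Q(v)$ is adjacent to $v$, so they all receive the color opposite to $c(v)$. Consequently, each identification step merges only vertices of the same color, and this remains true under transitive closure, because a chain of overlapping monochromatic classes is monochromatic. Therefore $c$ descends to a well-defined coloring of the merged vertices. We color each new isolated vertex arbitrarily and leave the untouched vertices of $V(Q)\setminus I$ with their original colors.

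It remains to check that this coloring is proper on $Q_I$. As noted before Lemma~\ref{le:leaf-hom-1}, every edge of $Q_I$ has a preimage edge $ab\in E(Q)$ with $a,b\notin I$ (the endpoints lie outside $I$ since $I$ is deleted). Its endpoints in $Q_I$ are the classes of $a$ and $b$, whose colors are $c(a)\neq c(b)$.

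The main subtlety is well-definedness under the transitive closure. In particular, we must rule out that an edge becomes a loop, that is, that $a$ and $b$ end up in the same class. This cannot happen, because a class is monochromatic while $c(a)\neq c(b)$. Thus $Q_I$ is a simple bipartite graph.
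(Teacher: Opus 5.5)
Your proof is correct and follows the paper's argument. For (ii) the paper likewise notes that each $N_Q(v)$ lies in a single part of the bipartition, so the identifications and their transitive closure respect the parts; your monochromatic-class and preimage-edge checks just spell this out. For (i) it uses the same vertex count, sharpened to $v(Q_I)\le v(Q)-|I'|$ with $I'$ the nonisolated vertices of $I$.
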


\noindent {\bf Proof.}
(i) In the construction of $Q_I$, each isolated vertex of $I$ is replaced by one isolated vertex, while every nonisolated vertex of $I$ is deleted.
Let $I'$ be the set of nonisolated vertices of $I$.
Then $v(Q_I)\leq v(Q)-|I'|\leq v(Q)$.

(ii) Let $A, B$ be the partite sets of $Q$.
For any nonisolated vertex $v\in I\cap A$, we have $N_Q(v)\subseteq B$, so after identifying $N_Q(v)$ we still obtain a bipartite graph.
Similarly, for any nonisolated vertex $v\in I\cap B$, we have $N_Q(v)\subseteq A$, so after identifying $N_Q(v)$ we still obtain a bipartite graph.
Therefore, even after taking the transitive closure of all these identifications, the resulting graph is bipartite.
Moreover, the new vertices corresponding to isolated vertices can be placed in either part.
Hence, $Q_I$ is a bipartite graph.
\hfill$\blacksquare$

\vspace{0.2cm}

Let $s\colonequals v(T)-2$ and $\mathbf{x}=(x_1, \ldots, x_s)$ be a vector of $s$ positive integers, so $\mathbf{x} \in \mathbb{Z}_{>0}^s$.
Note that $s\geq 1$ since $v(T)\geq 3$.
Define $$L_{\mathbf{x}}(G)\colonequals L_{x_1}\left(L_{x_2}\left(\cdots L_{x_s}\left(G\right) \cdots\right)\right).$$
We define a sequence $(I_1,\ldots,I_s)$ of vertex sets and a sequence $(T_0, T_1, \ldots, T_s)$ of graphs as follows:
\begin{equation}\label{eq:Ti}
T_0\colonequals T, \qquad
I_i\text{ is an independent set of }T_{i-1},\qquad
T_i=(T_{i-1})_{I_i}
\end{equation}
for every $i=1, \ldots, s$.

Applying Lemma~\ref{le:leaf-hom-1} iteratively, we obtain the following expansion result.

\begin{lemma}\label{le:leaf-hom-2}
For any graph $G$, bipartite graph $T$, and vector $\mathbf{x}=(x_1, \ldots, x_s) \in \mathbb{Z}_{>0}^s$, we have
$$\hom(T,L_{\mathbf{x}}(G)) = \sum_{\text{\rm $(I_1,\ldots,I_s)$ satisfying~\eqref{eq:Ti}}}\left(\prod_{i=1}^s x_i^{|I_i|}\right)\hom(T_s,G).$$
\end{lemma}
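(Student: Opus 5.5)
The plan is to argue by induction on $s$, peeling off the outermost leaf-addition layer and applying Lemma~\ref{le:leaf-hom-1} once per layer. The key observation is $L_{\mathbf{x}}(G)=L_{x_1}(G')$ with $G'\colonequals L_{(x_2,\ldots,x_s)}(G)$, where $L_{(x_2,\ldots,x_s)}(G)$ is defined in the same way as $L_{\mathbf{x}}$ with one fewer layer. Here $G'$ is an arbitrary graph from the point of view of Lemma~\ref{le:leaf-hom-1}, since that lemma holds for any host graph. To make the induction clean, I will prove the slightly more general statement: for every $t\geq 0$, every bipartite graph $Q$ and every $(y_1,\ldots,y_t)\in\mathbb{Z}_{>0}^t$,
$$\hom\bigl(Q,L_{y_1}(\cdots L_{y_t}(G)\cdots)\bigr)=\sum_{(I_1,\ldots,I_t)}\Bigl(\prod_{i=1}^t y_i^{|I_i|}\Bigr)\hom(Q_t,G),$$
where $Q_0=Q$, $I_i$ ranges over independent sets of $Q_{i-1}$, and $Q_i=(Q_{i-1})_{I_i}$. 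Taking $Q=T$ and $t=s$ then gives the lemma.

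The base case $t=0$ is trivial: the sum has a single empty term, equal to $\hom(Q,G)$. For the inductive step, write the host as $L_{y_1}(G')$ with $G'=L_{y_2}(\cdots L_{y_t}(G)\cdots)$. Since $Q$ is bipartite, Lemma~\ref{le:leaf-hom-1} gives $\hom(Q,L_{y_1}(G'))=\sum_{I_1}y_1^{|I_1|}\hom(Q_{I_1},G')$, where $I_1$ ranges over independent sets of $Q$. By Lemma~\ref{le:QI}(ii), each $Q_{I_1}$ is again bipartite, so the induction hypothesis applies to $Q_{I_1}$ with the shorter vector $(y_2,\ldots,y_t)$. Substituting this expansion and merging the nested sums into a single sum over sequences $(I_1,\ldots,I_t)$ satisfying the recursive condition \eqref{eq:Ti}, with the weights multiplying together, gives the claimed formula.

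There is no substantive obstacle. The only points needing care are bookkeeping. First, the bipartiteness hypothesis of Lemma~\ref{le:leaf-hom-1} must be checked at every stage, which is exactly what Lemma~\ref{le:QI}(ii) supplies; this is why the induction is phrased for general bipartite $Q$ rather than for $T$ alone. Second, the merged indexing must match \eqref{eq:Ti}, in particular that $I_i$ is an independent set of $T_{i-1}$, the previously formed quotient, and not of $T$. Each quotient $Q_I$ is a fixed graph determined by $Q$ and $I$, so the sum is well defined, and summands corresponding to distinct sequences are counted separately, consistent with the statement.
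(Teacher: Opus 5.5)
Your proposal is correct and matches the paper's proof: both write $L_{\mathbf{x}}(G)$ as $L_{x_1}$ applied to the remaining layers, apply Lemma~\ref{le:leaf-hom-1} once per layer, and use Lemma~\ref{le:QI}(ii) to keep each quotient bipartite. Your induction on $t$ over arbitrary bipartite $Q$ is just a rigorous version of the paper's ``continuing in this way'' step.
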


\noindent {\bf Proof.}
By Lemma~\ref{le:QI}, each $T_i$ defined in \eqref{eq:Ti} is bipartite and $v(T_s)\leq v(T_{s-1})\leq \cdots \leq v(T_0)=v(T)$.
Set $G_{s+1}=G$, and define $G_j=L_{x_j}(G_{j+1})$ successively for $j=s,s-1,\ldots,1$.
In particular, $G_1=L_{\mathbf{x}}(G)$.
Applying Lemma~\ref{le:leaf-hom-1}, we have
$$\hom(T,L_{\mathbf{x}}(G)) = \sum_{\substack{I_1\subseteq V(T_0), \\ I_1\text{ is independent}}}x_1^{|I_1|}\hom(T_1, G_2).$$
When $s\geq 2$, applying Lemma~\ref{le:leaf-hom-1} again to every term on the right-hand side, we have
$$\begin{aligned}
\hom(T,L_{\mathbf{x}}(G))= &~\sum_{\substack{I_1\subseteq V(T_0),\\ I_1\text{ is independent}}}x_1^{|I_1|} \sum_{\substack{I_2\subseteq V(T_1),\\  I_2\text{ is independent}}} x_2^{|I_2|} \hom(T_2,
G_3) \\
= &~\sum_{\substack{I_1\subseteq V(T_0),\\ I_1\text{ is independent}}} \sum_{\substack{I_2\subseteq V(T_1),\\  I_2\text{ is independent}}} x_1^{|I_1|}x_2^{|I_2|} \hom(T_2, G_3).
\end{aligned}$$
Continuing in this way, after $s$ applications of Lemma~\ref{le:leaf-hom-1}, we obtain
$$\hom(T,L_{\mathbf{x}}(G)) = \sum_{\text{\rm $(I_1,\ldots,I_s)$ satisfying~\eqref{eq:Ti}}}\left(\prod_{i=1}^s x_i^{|I_i|}\right)\hom(T_s,G).$$
The proof of Lemma~\ref{le:leaf-hom-2} is complete.
\hfill$\blacksquare$

\vspace{0.2cm}

For a graph $Q$, let $p(Q)\colonequals \left|\{v\in V(Q)\colon\, d_Q(v)>0\}\right|$ be the number of nonisolated vertices of $Q$.
The following properties of $p(Q)$ will be used in our proof of Lemma~\ref{le:K2} below.

\begin{lemma}\label{le:pQ}
Let $Q$ be a bipartite graph.
\begin{itemize}
\item[{\rm (i)}] For any independent set $I$, we have $p(Q_I)\leq p(Q)$.
\item[{\rm (ii)}] Suppose that $p(Q)\geq3$. If $I$ is an independent set of maximum cardinality subject to $e(Q_I)>0$, then $p(Q_I)<p(Q)$.
\end{itemize}
\end{lemma}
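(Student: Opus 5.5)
We prove (i) with an injection from the nonisolated vertices of $Q_I$ into those of $Q$ lying outside $I$. For (ii) we show that a maximum $I$ with $e(Q_I)>0$ must contain a nonisolated vertex of $Q$; this vertex is then lost by the injection, which gives strictness.

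\textbf{Part (i).} By construction, every vertex of $Q_I$ is one of two kinds. Either it is a class of vertices of $V(Q)\setminus I$ obtained by the (transitively closed) identifications, or it is a new isolated vertex $v'$ for an isolated $v\in I$. Distinct vertices of the first kind correspond to disjoint classes. Now let $u$ be a nonisolated vertex of $Q_I$ and take an edge $uw\in E(Q_I)$. By the preimage-edge discussion before Lemma~\ref{le:leaf-hom-1}, $uw$ traces back to an edge of $Q$. That edge has an endpoint in the class of $u$, and the endpoint does not lie in $I$, since vertices of $I$ are deleted. So the class of $u$ contains a nonisolated vertex of $Q$ outside $I$. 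Choosing one such vertex for each nonisolated $u$ gives an injection, because the classes are disjoint. Hence
\[
p(Q_I)\leq p(Q)-|I'|,
\]
where $I'$ is the set of nonisolated vertices of $Q$ in $I$ (these lie in $I$, so they are never in the image). This proves (i), with the sharper bound above.

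\textbf{Part (ii).} By the sharper bound, it suffices to show that a maximum independent $I$ with $e(Q_I)>0$ satisfies $I'\neq\varnothing$. Suppose instead that $I$ consists only of isolated vertices of $Q$. We look for a nonisolated vertex $v$ of $Q$ such that $Q-v$ still has an edge.
\begin{itemize}
\item Since $p(Q)\geq 3$, $Q$ has an edge $ab$. If some edge avoids $a$, take $v=a$.
\item Otherwise every edge contains $a$, so $Q$ restricted to its nonisolated vertices is a star centred at $a$. This star has at least two leaves $b,c$ because $p(Q)\geq 3$. Take $v=b$; then the edge $ac$ survives in $Q-v$.
\end{itemize}

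Next, $I\cup\{v\}$ is independent, because $I$ consists of isolated vertices. In forming $Q_{I\cup\{v\}}$, we delete $v$, identify $N_Q(v)$, and turn the isolated vertices of $I$ into new isolated vertices. The identified set $N_Q(v)$ lies entirely in one partite class of $Q$. Therefore an edge $e$ of $Q-v$, whose endpoints lie in different classes, keeps two distinct endpoints after the identification and gives an edge of $Q_{I\cup\{v\}}$.

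So $e(Q_{I\cup\{v\}})>0$ while $|I\cup\{v\}|>|I|$, contradicting the maximality of $I$. Hence $I'\neq\varnothing$, and $p(Q_I)\leq p(Q)-|I'|<p(Q)$.

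The main point needing care is the bipartiteness argument that identifying $N_Q(v)$ cannot collapse the surviving edge into a loop. This holds because $N_Q(v)$ lies within a single part, as in the proof of Lemma~\ref{le:QI}(ii).
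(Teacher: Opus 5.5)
Your proof is correct and follows the paper's argument. Part (ii) is the same augmentation of $I$ by a nonisolated vertex, with bipartiteness ensuring that identifying its neighbourhood cannot collapse a surviving edge. Part (i) differs only in presentation: your injection makes explicit the bound $p(Q_I)\le p(Q)-|I'|$, which the paper uses tacitly when it concludes strictness in (ii). Your case analysis for choosing $v$ is unnecessary, since you can fix any edge $ab$ and take $v$ to be any nonisolated vertex outside $\{a,b\}$, which exists because $p(Q)\ge 3$; this is what the paper does.
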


\noindent {\bf Proof.}
(i) In the construction of $Q_I$, all nonisolated vertices of $I$ (if any) are deleted.
At each identification step, if the resulting vertex $v'$ is nonisolated, then at least one of the vertices being identified was nonisolated immediately before that step.
Thus identification cannot increase the number of nonisolated vertices.
Moreover, adding isolated vertices does not affect this number.
Hence, we have $p(Q_I)\leq p(Q)$.

(ii) Assume that $p(Q)\geq3$.
Since $e(Q)>0$ and $Q_{\varnothing}=Q$, at least one independent set satisfies $e(Q_I)>0$.
Choose such a set $I$ of maximum cardinality.
We claim that $I$ contains a nonisolated vertex.
Suppose for a contradiction that all vertices in $I$ are isolated vertices.
Then all edges of $Q$ are contained in $Q-I$.
Choose an edge $uv\in E(Q)$.
Since $p(Q)\ge3$, there exists a nonisolated vertex $w\notin\{u,v\}$.
Since every vertex of $I$ is isolated, we have $w\notin I$, and $I\cup\{w\}$ is independent.
Note that $uv\in E(Q-(I\cup \{w\}))$.
Since $Q$ is bipartite, we have $|N_Q(w)\cap \{u,v\}|\leq 1$, so the image of $uv$ remains an edge of $Q_{I\cup\{w\}}$.
Then $e(Q_{I\cup \{w\}})>0$, and this contradicts the maximality of $I$.
Hence, $I$ contains a nonisolated vertex.
Since all nonisolated vertices of $I$ are deleted in the construction of $Q_I$, we have $p(Q_I)<p(Q)$.
\hfill$\blacksquare$

\vspace{0.2cm}

For any sequence $(I_1,\ldots,I_s)$ defined by \eqref{eq:Ti}, we call $(|I_1|,\ldots,|I_s|)$ the {\it exponent vector} of the monomial $\prod_{i=1}^s x_i^{|I_i|}$.
We order exponent vectors lexicographically, beginning with the first coordinate. That is, $(a_1,\ldots,a_s)>(b_1,\ldots,b_s)$ if for the smallest $i$ with $a_i\neq b_i$, we have $a_i>b_i$.
\vspace{0.2cm}

\begin{lemma}\label{le:K2}
Let $T$ be a bipartite graph with $e(T)>0$.
Among all sequences $(I_1,\ldots,I_s)$ defined by \eqref{eq:Ti} for which $e(T_s)>0$, choose one with the lexicographically largest exponent vector.
Then $p(T_s)=2$, and consequently, $T_s\cong K_2\cup qK_1$ for some integer $q\geq 0$.
\end{lemma}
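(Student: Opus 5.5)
We argue by tracking $p(T_i)$ along the chosen lexicographically maximal sequence. First, $e(T_s)>0$ and the observation that edges of a quotient trace back to preimage edges together give $e(T_i)>0$ for every $0\le i\le s$. Hence $p(T_i)\ge 2$ for all $i$. By Lemma~\ref{le:pQ}(i), the sequence $p(T_0)\ge p(T_1)\ge\cdots\ge p(T_s)\ge 2$ is nonincreasing. We also have $p(T_0)=p(T)\le v(T)=s+2$. So it suffices to show that at every step $i$ with $p(T_{i-1})\ge 3$ we have $p(T_i)<p(T_{i-1})$. Then $p$ strictly decreases until it reaches $2$; since there are $s$ steps and at most $s$ decrements are needed to go from $s+2$ to $2$, we get $p(T_s)=2$. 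A bipartite graph with exactly two nonisolated vertices is a single edge plus isolated vertices, so $T_s\cong K_2\cup qK_1$.

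The key step is to show that lexicographic maximality forces, at each step $i$, that $I_i$ has maximum cardinality among independent sets $I$ of $T_{i-1}$ with $e((T_{i-1})_I)>0$. Suppose some independent set $I'$ of $T_{i-1}$ has $|I'|>|I_i|$ and $e((T_{i-1})_{I'})>0$. Keep $I_1,\dots,I_{i-1}$, replace $I_i$ by $I'$, and choose $I'_{i+1}=\cdots=I'_s=\varnothing$. Since $Q_\varnothing=Q$, every later graph equals $(T_{i-1})_{I'}$, which has an edge. So this is an admissible sequence with $e(T'_s)>0$ and a lexicographically larger exponent vector, which is a contradiction. Here we must check that $Q_\varnothing=Q$ holds literally, which it does by the definition: nothing is deleted, identified, or added. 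Note the argument only uses agreement on the first $i-1$ coordinates, so it applies at every $i$ simultaneously.

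With this maximality in hand, Lemma~\ref{le:pQ}(ii), applied to $Q=T_{i-1}$ (bipartite by Lemma~\ref{le:QI}(ii)), gives $p(T_i)<p(T_{i-1})$ whenever $p(T_{i-1})\ge 3$. This completes the counting argument above. The only delicate point is the replacement argument, specifically that truncating with empty sets is legitimate and preserves $e>0$. I expect no real obstacle beyond confirming that $v(T_i)$, and hence $p$, never grows (Lemma~\ref{le:QI}(i)), so the initial bound $p(T_0)\le s+2$ suffices.
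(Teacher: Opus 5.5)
Your proposal is correct and follows essentially the same route as the paper. Both use the replacement argument (truncating with empty sets) to show that each $I_i$ is a maximum-cardinality independent set with $e((T_{i-1})_{I_i})>0$. Then Lemma~\ref{le:pQ} forces $p$ to drop by at least one at each of the $s$ steps while it is at least $3$, starting from $p(T_0)\le v(T)=s+2$. The paper phrases this final count as a contradiction after assuming $p(T_s)\ge 3$, which is the same argument.
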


\noindent {\bf Proof.}
First, there exists at least one sequence $(I_1,\ldots,I_s)$ satisfying $e(T_s)>0$.
Indeed, we may take $I_1=\cdots=I_s=\varnothing$.
Since $Q_{\varnothing}=Q$ for every graph $Q$, we have $T_0=T_1=\cdots=T_s=T$, and hence $e(T_s)=e(T)>0$.
Since there are only finitely many possible sequences, there exists one with the lexicographically largest exponent vector.
Fix such a sequence $(I_1,\ldots,I_s)$.
Since $e(T_s)>0$, we have $e(T_i)>0$ for each $i\in \{0, \ldots, s\}$.

\begin{claim}\label{cl:K2}
For each $i\in \{1, \ldots, s\}$, the set $I_i$ has maximum cardinality among all independent sets $I\subseteq V(T_{i-1})$ for which $e((T_{i-1})_I)>0$.
\end{claim}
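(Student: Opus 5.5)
Fix $i\in\{1,\ldots,s\}$. I will argue by contradiction with the lexicographic maximality of the exponent vector $(|I_1|,\ldots,|I_s|)$. Suppose some independent set $I'\subseteq V(T_{i-1})$ satisfies $e((T_{i-1})_{I'})>0$ and $|I'|>|I_i|$. The plan is to build a new sequence $(I_1,\ldots,I_{i-1},I',I'_{i+1},\ldots,I'_s)$ that satisfies \eqref{eq:Ti}, still ends in a graph with at least one edge, and has a lexicographically larger exponent vector.

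\textbf{Step 1: agree on the first $i-1$ coordinates.} Keep $I_1,\ldots,I_{i-1}$ unchanged. Then the graphs $T_0,\ldots,T_{i-1}$ are the same as before, so $I'$ is a legitimate choice at step $i$. Set $T'_i\colonequals (T_{i-1})_{I'}$; by assumption $e(T'_i)>0$.

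\textbf{Step 2: complete the sequence.} For $j=i+1,\ldots,s$ take $I'_j=\varnothing$, which is independent in any graph. Since $Q_\varnothing=Q$ for every graph $Q$, this gives $T'_j=T'_i$ for all $j\ge i$, hence $e(T'_s)=e(T'_i)>0$. By Lemma~\ref{le:QI}(ii), every graph in the new sequence is bipartite, so \eqref{eq:Ti} is satisfied throughout.

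\textbf{Step 3: compare exponent vectors.} The new vector agrees with $(|I_1|,\ldots,|I_s|)$ in the first $i-1$ coordinates and is strictly larger in coordinate $i$. Lexicographic order is decided at the first differing coordinate, so the later coordinates (now all $0$) are irrelevant. This contradicts the choice of $(I_1,\ldots,I_s)$, so $I_i$ has maximum cardinality among independent sets $I\subseteq V(T_{i-1})$ with $e((T_{i-1})_I)>0$. Note that $I_i$ itself belongs to this class, because $e(T_s)>0$ forces $e(T_i)>0$, as observed just before the claim.

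I do not expect a serious obstacle here. The only point needing care is that truncating with empty sets keeps an edge; this follows from $Q_\varnothing=Q$ rather than from any structural property.

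For the remainder of Lemma~\ref{le:K2}, the claim combines with Lemma~\ref{le:pQ}. By part (i), $p(T_i)\le p(T_{i-1})$ at every step. By part (ii), $p$ strictly drops whenever $p(T_{i-1})\ge 3$. Since $p(T_0)\le v(T)=s+2$, after $s$ steps we get $p(T_s)\le 2$. Since $e(T_s)>0$, we also have $p(T_s)\ge 2$. Hence $p(T_s)=2$, so $T_s\cong K_2\cup qK_1$.
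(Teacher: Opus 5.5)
Your proof is correct and is essentially the same as the paper's: you keep $I_1,\ldots,I_{i-1}$, replace $I_i$ by the larger set, pad with empty sets so that $Q_\varnothing=Q$ preserves an edge in the terminal graph, and contradict the lexicographic maximality of the exponent vector. Your extra observation that $I_i$ itself lies in the competing class, via $e(T_i)>0$, matches the remark the paper makes just before the claim. Your appeal to bipartiteness is harmless but not needed for the claim.
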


\begin{proof}
For a contradiction, suppose that for some $j$ there is an independent set $J\subseteq V(T_{j-1})$ such that $e((T_{j-1})_J)>0$ and $|J|>|I_j|$.
Let $I'_i\colonequals I_i$ for $1\leq i\leq j-1$, $I'_j\colonequals J$, and $I'_i\colonequals \varnothing$ for $j+1\leq i\leq s$.
Write $T_0', \ldots, T_s'$ for the graphs produced by $(I_1', \ldots, I_s')$.
Then $e(T'_s)=e(T'_j)=e((T'_{j-1})_J)=e((T_{j-1})_J)>0$.
However, the exponent vector of $(I'_1,\ldots,I'_s)$ is lexicographically larger than that of $(I_1,\ldots,I_s)$, a contradiction.
\end{proof}

We now show that $p(T_s)=2$.
Since $e(T_s)>0$, we have $p(T_s)\geq 2$.
Suppose for a contradiction that $p(T_s)\geq 3$.
Then $p(T_0)\geq p(T_1)\geq \cdots \geq p(T_s)\geq 3$ by Lemma~\ref{le:pQ}~(i).
By Lemma~\ref{le:pQ}~(ii) and Claim~\ref{cl:K2}, we have $p(T_i)<p(T_{i-1})$ whenever $p(T_{i-1})\geq 3$.
Since these quantities are integers, we have $p(T_i)\leq p(T_{i-1})-1$ whenever $p(T_{i-1})\geq 3$.
Then $p(T_s)\leq p(T_0)-s \leq v(T)-(v(T)-2)=2$, a contradiction.
Hence, we have $p(T_s)=2$.
Since $T_s$ is a simple graph with exactly two nonisolated vertices and $e(T_s)>0$, these two vertices are adjacent and all other vertices are isolated.
Therefore, $T_s\cong K_2\cup qK_1$ for some integer $q\geq 0$.
The proof of Lemma~\ref{le:K2} is complete.
\hfill$\blacksquare$

\vspace{0.2cm}

By Lemma~\ref{le:QI}, every terminal graph $T_s$ in the expansion of Lemma~\ref{le:leaf-hom-2} is bipartite and has at most $v(T)$ vertices.
Let $\mathcal{Q}_T$ consist of exactly one representative of each isomorphism class of bipartite graphs on at most $v(T)$ vertices.
Thus $\mathcal{Q}_T$ is finite and depends only on $T$.
For each $Q\in \mathcal{Q}_T$, define
$$f_Q(\mathbf{x})\colonequals \sum_{\substack{\text{\rm $(I_1,\ldots,I_s)$ satisfying~\eqref{eq:Ti},}\\ T_s\cong Q}}\left(\prod_{i=1}^s x_i^{|I_i|}\right),$$
where the sum is taken over all sequences satisfying~\eqref{eq:Ti} with terminal graph $T_s$ isomorphic to $Q$.
We define $f_Q(\mathbf{x})\colonequals 0$ if no such sequence exists.
Then $f_Q(\mathbf{x})$ is a polynomial in $x_1, \ldots, x_s$ with nonnegative integer coefficients.
Grouping the terms in Lemma~\ref{le:leaf-hom-2} according to the isomorphism type of $T_s$, we obtain
\begin{equation*}
\hom(T, L_{\mathbf{x}}(G))=\sum\limits_{Q\in \mathcal{Q}_T}f_Q(\mathbf{x})\hom(Q, G).
\end{equation*}
Since no graph has exactly one nonisolated vertex, we have either $p(Q)=0$ or $p(Q)\geq 2$.
By the proofs of Lemmas~\ref{le:leaf-hom-1} and~\ref{le:leaf-hom-2}, each homomorphism from $T$ to $L_{\mathbf{x}}(G)$ determines a unique sequence $(I_1,\ldots,I_s)$ satisfying~\eqref{eq:Ti}.
We partition these homomorphisms according to whether the corresponding terminal graph $T_s$ satisfies $p(T_s)=0$, $p(T_s)=2$, or $p(T_s)\geq 3$:
\begin{equation}\label{eq:hom-3part}
\hom(T, L_{\mathbf{x}}(G))=A_0(\mathbf{x},G)+A_2(\mathbf{x},G)+A_{\geq 3}(\mathbf{x},G),
\end{equation}
where
\begin{align*}
A_0(\mathbf{x},G)\colonequals &~\sum\limits_{Q\in \mathcal{Q}_T,~p(Q)=0}f_Q(\mathbf{x})\hom(Q, G), \\
A_2(\mathbf{x},G)\colonequals &~\sum\limits_{Q\in \mathcal{Q}_T,~p(Q)=2}f_Q(\mathbf{x})\hom(Q, G), \\
A_{\geq 3}(\mathbf{x},G)\colonequals &~\sum\limits_{Q\in \mathcal{Q}_T,~p(Q)\geq 3}f_Q(\mathbf{x})\hom(Q, G).
\end{align*}
Let $n=v(G)$, and define
$$S(\mathbf{x},n) \colonequals \sum\limits_{Q\in \mathcal{Q}_T,~p(Q)=2} 2\cdot f_Q(\mathbf{x})n^{v(Q)-2}.$$
Define
\begin{equation}\label{eq:leaves}
d_s\colonequals v(T)+3, \qquad d_i\colonequals v(T)+3 + v(T)\cdot \sum\nolimits_{j=i+1}^{s} d_j, \qquad x_i\colonequals n^{d_i}
\end{equation}
for every $i=1, \ldots, s$.
The weights $d_i$ are chosen so that if $(|I^{\ast}_1|,\ldots,|I^{\ast}_s|)$ is lexicographically larger than $(|I_1|,\ldots,|I_s|)$, then $\sum_{i=1}^{s}d_i|I^{\ast}_i|> \sum_{i=1}^{s}d_i|I_i|$; see the proof of Lemma~\ref{le:leaf-hom-3} below.

\begin{lemma}\label{le:leaf-hom-3}
For any graph $G$ and bipartite graph $T$, if $\mathbf{x}=(x_1, \ldots, x_s)$ is defined as in \eqref{eq:leaves}, then
$$A_0(\mathbf{x},G)=\sum\limits_{Q\in \mathcal{Q}_T,~p(Q)=0}f_Q(\mathbf{x})n^{v(Q)}, \quad A_2(\mathbf{x},G)=S(\mathbf{x},n)\cdot e(G), \quad A_{\geq 3}(\mathbf{x},G)=S(\mathbf{x},n)\cdot O_T\left(n^{-3}\right).$$
\end{lemma}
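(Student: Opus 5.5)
We prove the three identities separately, starting from the grouped expansion $\hom(T,L_{\mathbf{x}}(G))=\sum_{Q\in\mathcal{Q}_T}f_Q(\mathbf{x})\hom(Q,G)$.

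\emph{The term $A_0$.} If $p(Q)=0$, then $Q$ is edgeless and $\hom(Q,G)=n^{v(Q)}$. This gives the formula for $A_0(\mathbf{x},G)$ immediately.

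\emph{The term $A_2$.} If $p(Q)=2$, then $Q\cong K_2\cup (v(Q)-2)K_1$. Hence $\hom(Q,G)=2e(G)\,n^{v(Q)-2}$: the edge can be mapped onto an edge of $G$ in either orientation, and each isolated vertex can go anywhere. Summing gives $A_2(\mathbf{x},G)=S(\mathbf{x},n)e(G)$ exactly. For later use, note that $S(\mathbf{x},n)$ contains the monomial of the lexicographically maximal sequence $(I_1^*,\dots,I_s^*)$ with $e(T_s)>0$. By Lemma~\ref{le:K2}, that terminal graph is $K_2\cup qK_1$, so
$$S(\mathbf{x},n)\ \ge\ 2\,n^{D^*},\qquad D^*\colonequals\sum_i d_i|I_i^*|.$$

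\emph{The term $A_{\geq 3}$.} This is the main part. Each summand comes from a sequence $(I_1,\dots,I_s)$ with $p(T_s)\ge 3$, so $e(T_s)>0$ and $\hom(T_s,G)\le n^{v(T_s)}\le n^{v(T)}$. The number of sequences is $O_T(1)$, since each $I_i$ is a subset of a set of at most $v(T)$ vertices. Each such sequence contributes at most $n^{\sum_i d_i|I_i|}\cdot n^{v(T)}$. By the maximality in Lemma~\ref{le:K2}, its exponent vector is lexicographically at most $(|I_1^*|,\dots,|I_s^*|)$, and it cannot be equal. Indeed, if the vectors were equal, the same proof (Claim~\ref{cl:K2} and Lemma~\ref{le:pQ}) would give $p(T_s)=2$. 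More precisely, Claim~\ref{cl:K2} only uses that the exponent vector is lex-maximal among sequences with $e(T_s)>0$, so any sequence attaining the maximal vector must have $p(T_s)=2$. So every sequence contributing to $A_{\geq 3}$ has a strictly smaller exponent vector.

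\emph{The weight gap (main obstacle).} It remains to show that a lexicographically smaller vector $(b_i)$ satisfies
$$\sum_i d_ib_i\le D^*-(v(T)+3).$$
Let $j$ be the first index where the vectors differ, so $b_j\le a_j-1$ with $a_i=|I_i^*|$. Then
$$\sum_i d_i a_i-\sum_i d_i b_i\ \ge\ d_j-\sum_{i>j}d_i b_i\ \ge\ d_j-v(T)\sum_{i>j}d_i\ =\ v(T)+3,$$
using $b_i\le v(T)$ and the definition~\eqref{eq:leaves}. Therefore
$$A_{\geq 3}(\mathbf{x},G)\le O_T(1)\,n^{D^*-v(T)-3+v(T)}=O_T(n^{D^*-3})=S(\mathbf{x},n)\,O_T(n^{-3}).$$
The delicate point is justifying that equality of exponent vectors forces $p(T_s)=2$. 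I would handle it by rerunning the argument of Lemma~\ref{le:K2} on any sequence attaining the maximum vector, rather than on one fixed maximizer.
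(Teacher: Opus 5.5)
Your proof is correct and follows the paper's argument essentially step for step. You evaluate $\hom(Q,G)$ for $p(Q)\in\{0,2\}$, compare every sequence with $p(T_s)\ge 3$ against the lexicographically maximal one using the weights $d_i$ to get the gap $v(T)+3$, and absorb the result using $S(\mathbf{x},n)\ge 2n^{\sum_i d_i|I_i^*|}$. The point you flag as delicate needs no rerunning of the argument, and the paper handles it directly. Lemma~\ref{le:K2} applies to \emph{any} sequence attaining the maximal exponent vector among those with $e(T_s)>0$, so a sequence with $p(T_s)\ge 3$ cannot attain that vector.
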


\noindent {\bf Proof.}
If $p(Q)=0$, then $e(Q)=0$ and thus $\hom(Q, G)=v(G)^{v(Q)}=n^{v(Q)}$.
Hence, $$A_0(\mathbf{x},G)=\sum\limits_{Q\in \mathcal{Q}_T,~p(Q)=0}f_Q(\mathbf{x})n^{v(Q)}.$$

If $p(Q)=2$, then $Q$ is isomorphic to $K_2\cup qK_1$ for some $q\geq 0$.
Then $\hom(Q, G)=\hom(K_2\cup qK_1,G)=2e(G)n^{q}=2e(G)n^{v(Q)-2}.$
Hence, $A_2(\mathbf{x},G)=S(\mathbf{x},n)\cdot e(G).$

We now consider the case $p(Q)\geq 3$.
Among all sequences satisfying \eqref{eq:Ti} whose terminal graph contains an edge, fix one sequence $(I^{\ast}_1,\ldots,I^{\ast}_s)$ with the lexicographically largest exponent vector.
Let $T^{\ast}_s$ be its terminal graph and let $Q'\in \mathcal{Q}_T$ satisfy $T^{\ast}_s\cong Q'$.
By Lemma~\ref{le:K2}, we have $p(Q')=2$.
The chosen sequence contributes
$\prod_{i=1}^s x_i^{|I^{\ast}_i|}=\prod_{i=1}^s n^{d_i|I^{\ast}_i|}=n^{\sum\nolimits_{i=1}^{s}d_i|I^{\ast}_i|}$ to $f_{Q'}(\mathbf{x})$, so
\begin{equation}\label{eq:leaf-hom-3-1}
f_{Q'}(\mathbf{x})\geq n^{\sum_{i=1}^s d_i|I^{\ast}_i|}.
\end{equation}
Consider any sequence $(I_1,\ldots,I_s)$ with $p(T_s)\geq 3$.
Its exponent vector is strictly smaller than $(|I^{\ast}_1|, \ldots, |I^{\ast}_s|)$; otherwise it would also be maximal, and then $p(T_s)=2$ by Lemma~\ref{le:K2}, a contradiction.
Let $i_0$ be the smallest index with $|I_{i_0}|\neq |I^{\ast}_{i_0}|$.
Then $|I_{i_0}|\leq |I^{\ast}_{i_0}|-1$ and $|I_i|=|I^{\ast}_i|$ for $1\leq i\leq i_0-1$.
By \eqref{eq:leaves} and since $0\leq |I_i|,|I^{\ast}_i|\leq v(T)$, we have
\begin{align*}
\sum\nolimits_{i=1}^{s}d_i|I_i|-\sum\nolimits_{i=1}^{s}d_i|I^{\ast}_i|= &~d_{i_0}(|I_{i_0}|- |I^{\ast}_{i_0}|)+\sum\nolimits_{i=i_0+1}^{s}d_i(|I_{i}|- |I^{\ast}_{i}|) \\
\leq &~-d_{i_0}+\sum\nolimits_{i=i_0+1}^{s}d_iv(T) =-(v(T)+3).
\end{align*}
Hence, $n^{\sum\nolimits_{i=1}^{s}d_i|I_i|}\leq n^{\sum\nolimits_{i=1}^{s}d_i|I^{\ast}_i|}\cdot n^{-(v(T)+3)}.$
Combining this bound with \eqref{eq:leaf-hom-3-1} and using the fact that the number of sequences depends only on $T$, we obtain that for every $Q\in \mathcal{Q}_T$ with $p(Q)\geq 3$,
$$f_{Q}(\mathbf{x})=\sum_{\substack{\text{\rm $(I_1,\ldots,I_s)$ satisfying~\eqref{eq:Ti},}\\ T_s\cong Q}}n^{\sum\nolimits_{i=1}^{s}d_i|I_i|} =O_T\left(n^{\sum\nolimits_{i=1}^{s}d_i|I^{\ast}_i|}\cdot n^{-(v(T)+3)}\right) \leq O_T\left(f_{Q'}(\mathbf{x})\cdot n^{-(v(T)+3)}\right).$$
Since $\hom(Q,G)\leq n^{v(Q)}\leq n^{v(T)}$ and $|\mathcal{Q}_T|$ depends only on $T$, it follows that
\begin{align*}
A_{\geq 3}(\mathbf{x},G)= &~\sum\limits_{Q\in \mathcal{Q}_T,~p(Q)\geq 3}f_Q(\mathbf{x})\hom(Q, G) \leq \sum\limits_{Q\in \mathcal{Q}_T,~p(Q)\geq 3}f_Q(\mathbf{x})n^{v(T)} \\
\leq &~O_T\left(f_{Q'}(\mathbf{x})\cdot n^{-(v(T)+3)}\right)n^{v(T)} = f_{Q'}(\mathbf{x})\cdot O_T\left(n^{-3}\right) \\
\leq &~\sum\limits_{Q\in \mathcal{Q}_T,~p(Q)=2}2\cdot f_Q(\mathbf{x})n^{v(Q)-2}\cdot O_T\left(n^{-3}\right) = S(\mathbf{x},n)\cdot O_T\left(n^{-3}\right).
\end{align*}
This completes the proof of Lemma~\ref{le:leaf-hom-3}.
\hfill$\blacksquare$

\vspace{0.2cm}

Finally, we show that $S(\mathbf{x},n)$ is an integer greater than $1$.
Indeed, by the definition of $S(\mathbf{x},n)$, we know that it is an integer.
Moreover, Lemma~\ref{le:K2} supplies a sequence $(I_1,\ldots,I_s)$ with $p(T_s)=2$, so at least one $f_Q(\mathbf{x})$ in the definition of $S(\mathbf{x},n)$ is at least $1$.
Since $n\geq 1$ and $v(Q)\geq p(Q)\geq 2$, we have $S(\mathbf{x},n) = \sum\limits_{Q\in \mathcal{Q}_T,~p(Q)=2} 2\cdot f_Q(\mathbf{x})n^{v(Q)-2}\geq 2$.

\subsection{Blow-up operation and the random-transversal argument}
\label{subsec:blow-up}

In Section~\ref{subsec:leaf}, we constructed a graph $L_{\mathbf{x}}(G)$ from $G$ by iterative leaf additions, where $\mathbf{x}=(x_1, \ldots, x_s)$ is given by \eqref{eq:leaves}.
Let $Z(G)\colonequals L_{\mathbf{x}}(G)(B)$ be the $B$-blow-up of $L_{\mathbf{x}}(G)$, where $B$ is a positive integer to be chosen in Section~\ref{subsec:proof_bipar}.
We will simply write $Z$ for $Z(G)$.
Then
\begin{equation}\label{eq:vZ}
v(Z)=v(L_{\mathbf{x}}(G))B= \left(v(G)\cdot \prod\nolimits_{i=1}^{s}\left(x_i+1\right)\right)B=Bn\prod\nolimits_{i=1}^{s}\left(n^{d_i}+1\right)\leq 2^sBn^{1+\sum\nolimits_{i=1}^{s}d_i}.
\end{equation}
The clusters corresponding to $V(G)$ induce a copy of $G(B)$, which we identify with $G(B)$.
For any subgraph $J\subseteq Z$, define a graph $J^{+}$ by
$$V(J^{+})=V(Z) \quad\text{and}\quad E(J^{+})=\big(E(J)\cap E(G(B))\big) \cup\big(E(Z)\setminus E(G(B))\big).$$
Then $J\subseteq J^{+}\subseteq Z$, and thus $\hom(T, J)\leq \hom(T, J^{+})\leq \hom(T, Z)$.

Recall that for each $v\in V(L_{\mathbf{x}}(G))$, $C_v$ is the cluster corresponding to $v$.
Given a subgraph $J\subseteq Z$, choose one vertex $v'\in C_v$ independently and uniformly for each $v\in V(G)$.
The selected vertices form a random transversal $\{v':v\in V(G)\}$ of the clusters corresponding to $V(G)$.
Define a random spanning subgraph $R_J$ of $G$ by
$$V(R_J)\colonequals V(G), \qquad E(R_J)\colonequals \{uv\in E(G)\colon\, u',v'\in V(J),\ u'v'\in E(J)\}.$$
Thus $R_J$ is obtained by retaining the edges of $J$ between the selected vertices and relabelling each selected vertex $v'$ as $v$.
In particular, if $v'\notin V(J)$, then $v$ is isolated in $R_J$.

In the remainder of this section, the vector $\mathbf{x}=(x_1, \ldots, x_s)$ is defined by \eqref{eq:leaves}.
We now provide an upper bound on $\hom(T,J)$ for any subgraph $J\subseteq Z$ in terms of $\mathbb{E}[e(R_J)]$ and the parameters $A_0(\mathbf{x},G)$, $S(\mathbf{x},n)$ defined in Section~\ref{subsec:leaf}.

\begin{lemma}\label{le:leaf-hom-4}
For any graph $J\subseteq Z$, we have
$$\hom(T,J)\leq B^{v(T)}\Big(A_0(\mathbf{x},G)+S(\mathbf{x},n)\cdot \mathbb{E}[e(R_J)]+S(\mathbf{x},n)\cdot O_T\left(n^{-3}\right)\Big).$$
\end{lemma}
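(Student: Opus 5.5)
Since $J\subseteq J^{+}$, it suffices to bound $\hom(T,J^{+})$, and I would do this by projecting homomorphisms from $J^{+}$ to $L_{\mathbf{x}}(G)$. Let $\pi:V(Z)\to V(L_{\mathbf{x}}(G))$ map each vertex to the index of its cluster. For every homomorphism $\varphi:T\to J^{+}$, the composition $\pi\circ\varphi$ is a homomorphism $T\to L_{\mathbf{x}}(G)$. By the proofs of Lemmas~\ref{le:leaf-hom-1} and~\ref{le:leaf-hom-2}, $\pi\circ\varphi$ determines a unique sequence $(I_1,\ldots,I_s)$ and a homomorphism $\psi:T_s\to G$. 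Group the homomorphisms $\varphi$ by the type of $T_s$: $p(T_s)=0$, $p(T_s)=2$, or $p(T_s)\ge 3$.

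For $p(T_s)=0$ and $p(T_s)\ge 3$, use the trivial bound. Each homomorphism into $L_{\mathbf{x}}(G)$ has at most $B^{v(T)}$ lifts into $J^{+}\subseteq Z$. The first class therefore contributes at most $B^{v(T)}A_0(\mathbf{x},G)$. The third contributes at most $B^{v(T)}A_{\ge3}(\mathbf{x},G)=B^{v(T)}S(\mathbf{x},n)\cdot O_T(n^{-3})$ by Lemma~\ref{le:leaf-hom-3}.

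The main step is the $p(T_s)=2$ class, where $T_s\cong K_2\cup qK_1$. In this case exactly one edge $e^{*}$ of $T_s$ is mapped by $\psi$ onto an edge $uv$ of $G$, and every edge of $T$ is mapped by $\pi\circ\varphi$ either to an edge outside $G$ or to the edge $uv$. Every edge of $Z$ outside $G(B)$ lies in $J^{+}$, so the only constraint for lifting is on the edges of $T$ whose projection is $uv$. These are exactly the preimage edges of $e^{*}$.

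I would count lifts as follows. Fix the sequence and $\psi$, and let $X\subseteq V(T)$ be the set of vertices of $T$ mapped to $u$ or $v$. Every vertex of $T$ outside $X$ has $B$ free choices. For vertices in $X$, the lift must send the preimage edges of $e^{*}$ to edges of $J\cap G(B)$ between $C_u$ and $C_v$. The number of such lifts is at most $B^{|X|}$ times the probability that, for a uniformly random choice of images in $C_u\cup C_v$, all these edges lie in $J$. I would bound this by the probability for a single fixed preimage edge $ab$ with $a\mapsto u$, $b\mapsto v$. That probability is $e_J(C_u,C_v)/B^2=\Pr[uv\in E(R_J)]$, since each of $a,b$ independently lands uniformly in its cluster; this is where the random transversal enters.

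Hence the $p(T_s)=2$ class contributes at most
$$B^{v(T)}\sum_{Q:\,p(Q)=2}f_Q(\mathbf{x})\cdot 2\,n^{v(Q)-2}\sum_{uv\in E(G)}\Pr[uv\in E(R_J)]=B^{v(T)}S(\mathbf{x},n)\cdot\mathbb{E}[e(R_J)],$$
where the factor $2n^{v(Q)-2}$ accounts for the orientation of $e^{*}$ and the images of the isolated vertices. Summing the three classes gives the lemma.

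The obstacle I expect is making rigorous that the event "all preimage edges of $e^{*}$ lie in $J$" is dominated by the single-edge event while still keeping the $B^{|X|}$ normalization. This needs a careful check that vertices mapped to $u$ or $v$ are chosen uniformly and independently in the counting. Independence holds because distinct vertices of $T$ are lifted independently in $Z$, which is a blow-up.
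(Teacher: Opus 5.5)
Your proposal is correct and follows essentially the same route as the paper. The paper also projects to $L_{\mathbf{x}}(G)$, bounds the $p(T_s)=0$ and $p(T_s)\ge 3$ classes by $B^{v(T)}$ lifts each, and for $p(T_s)=2$ requires only that one traced-back preimage edge $w_1w_2$ land in $J$ between $C_u$ and $C_v$, giving at most $e_J(C_u,C_v)B^{v(T)-2}$ lifts. The obstacle you flag is not a real one: dropping the constraints on the other preimage edges only enlarges the set of lifts being counted. For the same reason, your observation that every edge of $T$ projecting into $G$ projects onto $uv$ is not needed for this upper bound.
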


\noindent {\bf Proof.}
Since $J\subseteq J^{+}$, we have $\hom(T, J)\leq \hom(T, J^{+})$.
Hence, it suffices to prove that $\hom(T, J^{+})\leq B^{v(T)}\Big(A_0(\mathbf{x},G)+S(\mathbf{x},n)\cdot \mathbb{E}[e(R_J)]+S(\mathbf{x},n)\cdot O_T\left(n^{-3}\right)\Big).$

Let $\pi: V(Z)\to V(L_{\mathbf{x}}(G))$ be the natural projection that maps every vertex of a cluster to the corresponding vertex of $L_{\mathbf{x}}(G)$.
Since $Z$ is a $B$-blow-up of $L_{\mathbf{x}}(G)$, every homomorphism $\varphi: T\to J^{+}$ induces a homomorphism $\pi\circ\varphi: T\to L_{\mathbf{x}}(G)$.
Conversely, every homomorphism $\psi: T\to L_{\mathbf{x}}(G)$ can be lifted to exactly $B^{v(T)}$ homomorphisms from $T$ to $Z$,
since the image of each vertex of $T$ may be chosen independently in the cluster prescribed by $\psi$.
Note that not all of these homomorphisms are necessarily from $T$ to $J^{+}$.
However, since $J^{+}\subseteq Z$, we have $\hom(T, J^{+})\leq \hom(T, Z)$.
Since we only need an upper bound on $\hom(T, J^{+})$, we will count homomorphisms from $T$ to $Z$ when we analyze the $A_0(\mathbf{x},G)$-part and the $A_{\geq 3}(\mathbf{x},G)$-part below.

In the following, we count homomorphisms from $T$ to $J^{+}$ according to the three parts in~\eqref{eq:hom-3part}:
$$\hom(T, L_{\mathbf{x}}(G))=A_0(\mathbf{x},G)+A_2(\mathbf{x},G)+A_{\geq 3}(\mathbf{x},G).$$
For the $A_0(\mathbf{x},G)$-part, recall that every homomorphism $\psi: T\to L_{\mathbf{x}}(G)$ can be lifted to $B^{v(T)}$ homomorphisms from $T$ to $Z$, and hence at most $B^{v(T)}$ homomorphisms from $T$ to $J^{+}$.
Therefore, the contribution to $\hom(T,J^{+})$ from the $A_0(\mathbf{x},G)$-part is at most
\begin{equation}\label{eq:A0}
B^{v(T)}A_0(\mathbf{x},G).
\end{equation}
Similarly, the contribution to $\hom(T,J^{+})$ from the $A_{\geq 3}(\mathbf{x},G)$-part is at most $B^{v(T)}A_{\geq 3}(\mathbf{x},G).$
By Lemma~\ref{le:leaf-hom-3}, we have $A_{\geq 3}(\mathbf{x},G)=S(\mathbf{x},n)\cdot O_T\left(n^{-3}\right)$.
Hence, the contribution is at most
\begin{equation}\label{eq:A3}
B^{v(T)}S(\mathbf{x},n)\cdot O_T\left(n^{-3}\right).
\end{equation}

It remains to consider the $A_2(\mathbf{x},G)$-part.
For every edge $uv\in E(G)$, let $C_u$ and $C_v$ be the corresponding clusters in $G(B)$, and define
$d_J(u,v)\colonequals \frac{e_J(C_u,C_v)}{B^2},$
where $e_J(C_u,C_v)$ denotes the number of edges of $J$ with one endpoint in $C_u\cap V(J)$ and the other in $C_v\cap V(J)$.

\begin{claim}\label{cl:A2}
$\mathbb{E}[e(R_J)]= \sum_{uv\in E(G)}d_J(u,v).$
\end{claim}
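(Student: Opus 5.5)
We prove Claim~\ref{cl:A2} by linearity of expectation over the edges of $G$. By definition,
$$e(R_J)=\sum_{uv\in E(G)}\mathbf{1}\big[u',v'\in V(J),\ u'v'\in E(J)\big],$$
so
$$\mathbb{E}[e(R_J)]=\sum_{uv\in E(G)}\Pr\big[u',v'\in V(J),\ u'v'\in E(J)\big].$$
It therefore suffices to show that this probability equals $d_J(u,v)$ for each fixed edge $uv\in E(G)$.

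Fix $uv\in E(G)$. Since $u\neq v$, the clusters $C_u$ and $C_v$ are distinct and disjoint. The selected vertices $u'$ and $v'$ are chosen independently and uniformly from $C_u$ and $C_v$, so the pair $(u',v')$ is uniform over the $B^2$ pairs in $C_u\times C_v$. The event in question holds exactly when the pair $(u',v')$ lies in the set
$$P\colonequals\{(a,b)\in C_u\times C_v\colon\, a,b\in V(J),\ ab\in E(J)\}.$$
This set is in bijection with the edges of $J$ having one endpoint in $C_u\cap V(J)$ and the other in $C_v\cap V(J)$: each such edge has exactly one endpoint in each cluster, since the clusters are disjoint, so it determines a unique ordered pair. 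Hence $|P|=e_J(C_u,C_v)$, and the probability is $e_J(C_u,C_v)/B^2=d_J(u,v)$. Summing over $uv\in E(G)$ gives the claim.

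There is no real obstacle here. The only care needed is to check that the ordered-pair/edge correspondence has no double counting, which follows from the disjointness of $C_u$ and $C_v$, and to note that only the transversal vertices for $u$ and $v$ matter for this edge's indicator, so the remaining random choices are irrelevant.
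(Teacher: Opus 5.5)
Your proof is correct and is essentially the paper's argument. Both write $e(R_J)$ as a sum of edge indicators, observe that $(u',v')$ is uniform over the $B^2$ pairs in $C_u\times C_v$ (of which exactly $e_J(C_u,C_v)$ span an edge of $J$), and apply linearity of expectation; your remark that disjointness of $C_u$ and $C_v$ makes the pair--edge correspondence bijective just spells out a step the paper leaves implicit.
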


\begin{proof}
For each $uv\in E(G)$, let $\mathbf{1}_{uv}$ be the indicator random variable of the event $uv\in E(R_J)$.
In the construction of $R_J$, one vertex is chosen independently and uniformly from each of $C_u$ and $C_v$.
Thus there are $B^2$ equally likely pairs of chosen vertices, and exactly $e_J(C_u,C_v)$ of these pairs form an edge of $J$.
Hence, $\mathbb{E}[\mathbf{1}_{uv}]=d_J(u,v)$.
Since $e(R_J)=\sum_{uv\in E(G)}\mathbf{1}_{uv}$, we have $\mathbb{E}[e(R_J)]= \sum_{uv\in E(G)}d_J(u,v)$ by linearity of expectation.
\end{proof}

Recall that
\begin{align*}
A_{2}(\mathbf{x},G) = &~\sum\limits_{Q\in \mathcal{Q}_T,~p(Q)=2}f_Q(\mathbf{x})\hom(Q, G) \\
= &~\sum\limits_{Q\in \mathcal{Q}_T,~p(Q)=2}~~\sum_{\substack{\text{\rm $(I_1,\ldots,I_s)$ satisfying~\eqref{eq:Ti},}\\ T_s\cong Q}}\left(\prod_{i=1}^s x_i^{|I_i|}\right)\hom(Q, G).
\end{align*}
Fix $Q\in \mathcal{Q}_T$ with $p(Q)=2$.
Then $Q\cong K_2\cup qK_1$ for some $q\geq0$.
Let $y_1y_2$ be the unique edge of $Q$.
Fix a sequence $(I_1,\ldots,I_s)$ satisfying~\eqref{eq:Ti} for which $T_s\cong Q$, and identify $T_s$ with $Q$.
We also fix a homomorphism $\theta: Q\to G$, and let $u= \theta(y_1)$ and $v= \theta(y_2)$.
Then $uv\in E(G)$.

Note that the edge $y_1y_2$ of $Q=T_s$ can be traced back to an edge $w_1w_2$ of $T$.
Indeed, since $T_i=(T_{i-1})_{I_i}$, every edge of $T_i$ has a preimage edge in $T_{i-1}$.
Starting with $y_1y_2\in E(T_s)$ and tracing backwards through $T_{s-1},\ldots,T_0$, we obtain an edge $w_1w_2\in E(T_0)=E(T)$ whose image after all the quotient operations is $y_1y_2$.

Fix one of the $\prod_{i=1}^s x_i^{|I_i|}$ homomorphisms $\psi: T\to L_{\mathbf{x}}(G)$ corresponding to the sequence $(I_1,\ldots,I_s)$ and the homomorphism $\theta$.
Since $u= \theta(y_1)$, $v= \theta(y_2)$ and the vertices $w_1, w_2$ are traced to $y_1, y_2$, respectively, we have $\psi(w_1)=u$ and $\psi(w_2)=v$,
where $u$ and $v$ are regarded as the corresponding vertices of the original copy of $G$ inside $L_{\mathbf{x}}(G)$.
In a homomorphism $\varphi: T\to J^{+}$ lifted from $\psi$, the images of $w_1$ and $w_2$ must form an edge of $J$ between $C_u$ and $C_v$.
There are at most $e_J(C_u,C_v)$ choices for these two images, and at most $B^{v(T)-2}$ choices for the remaining vertices.
Thus the number of such homomorphisms $\varphi$ lifted from $\psi$ is at most $e_J(C_u,C_v)B^{v(T)-2}=B^{v(T)}d_J(u,v).$

We now sum this bound over all homomorphisms $\theta: Q\to G$.
For each edge $uv\in E(G)$, the edge $y_1y_2$ of $Q$ can be mapped onto $uv$ in two ways, according to its orientation.
Moreover, each of the $q=v(Q)-2$ isolated vertices of $Q$ can be mapped independently to any of the $n$ vertices of $G$.
Therefore, we obtain
\begin{align*}
~ &~\sum_{\theta:Q\to G}e_J(C_{\theta(y_1)},C_{\theta(y_2)})B^{v(T)-2} = \sum_{\theta:Q\to G}B^{v(T)}d_J(\theta(y_1), \theta(y_2)) \\
= &~B^{v(T)}\cdot 2n^{v(Q)-2}\sum_{uv\in E(G)}d_J(u,v) = 2B^{v(T)}n^{v(Q)-2}\mathbb{E}[e(R_J)],
\end{align*}
where the last equality follows from Claim~\ref{cl:A2}.
Hence, for the fixed sequence $(I_1,\ldots,I_s)$, its contribution to $\hom(T,J^{+})$ is at most $\left(\prod_{i=1}^s x_i^{|I_i|}\right)2B^{v(T)}n^{v(Q)-2}\mathbb{E}[e(R_J)].$
Summing over all sequences $(I_1,\ldots,I_s)$ for which $T_s\cong Q$ and using the definition of $f_Q(\mathbf{x})$, we obtain that the contribution corresponding to $Q$ is at most
$2B^{v(T)}f_Q(\mathbf{x})n^{v(Q)-2}\mathbb{E}[e(R_J)].$
Finally, summing over all $Q\in \mathcal{Q}_T$ with $p(Q)=2$ and using the definition of $S(\mathbf{x},n)$, we find that the contribution to $\hom(T,J^{+})$ from the $A_2(\mathbf{x},G)$-part is at most
\begin{equation}\label{eq:A2}
B^{v(T)}S(\mathbf{x},n)\cdot \mathbb{E}[e(R_J)].
\end{equation}

Combining \eqref{eq:A0}, \eqref{eq:A3} and \eqref{eq:A2}, we obtain
$$\hom(T,J^+)\leq B^{v(T)}\Big(A_0(\mathbf{x},G)+S(\mathbf{x},n)\cdot \mathbb{E}[e(R_J)]+S(\mathbf{x},n)\cdot O_T\left(n^{-3}\right)\Big).$$
Since $\hom(T,J)\leq \hom(T,J^{+})$, the desired inequality follows.
\hfill$\blacksquare$

\vspace{0.2cm}

Let $k\colonequals \min_{F\in \mathcal{F}}\chi(F)\geq 3$.
In the remainder of this section, we assume that $G\in \mathcal{G}_k$ is an $n$-vertex graph.
Set $z\colonequals \cut_{k-1}(G)$.
Then Lemma~\ref{le:AlSS-NPhard} implies that
\begin{equation}\label{eq:z}
z=\cut_{k-1}(G)=\ex(G, K_2, K_k)
\end{equation}
and computing $z$ exactly for inputs $G\in\mathcal{G}_k$ is NP-hard.
We next establish the upper and lower bounds needed for the reduction.

\begin{lemma}\label{le:upper}
There exists a constant $\eta=\eta(T,\mathcal{F})>0$ such that every $\mathcal{F}$-free subgraph $J\subseteq Z$ satisfies
$$\hom(T, J)\leq B^{v(T)}\left(A_0(\mathbf{x},G) +S(\mathbf{x},n)\cdot z +S(\mathbf{x},n)\cdot O\left(n^{k-\eta}B^{-\eta} +n^{-3}\right)\right).$$
\end{lemma}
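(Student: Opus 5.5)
The plan is to start from Lemma~\ref{le:leaf-hom-4}, which already gives
$$\hom(T,J)\leq B^{v(T)}\Big(A_0(\mathbf{x},G)+S(\mathbf{x},n)\cdot \mathbb{E}[e(R_J)]+S(\mathbf{x},n)\cdot O_T(n^{-3})\Big)$$
for every subgraph $J\subseteq Z$. The whole task then reduces to proving
$$\mathbb{E}[e(R_J)]\leq z+cn^{k-\eta}B^{-\eta}$$
for every $\mathcal{F}$-free $J$, with constants $c,\eta>0$ depending only on $T,\mathcal{F}$. Substituting this into the displayed bound gives the lemma.

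\textbf{Step 1: pointwise bound on $e(R_J)$.} For each outcome of the random transversal, $R_J$ is a spanning subgraph of $G$. Delete one edge from each copy of $K_k$ in $R_J$. The result is a $K_k$-free subgraph of $G$, so by \eqref{eq:z},
$$e(R_J)\leq \ex(G,K_2,K_k)+\mathcal{N}(R_J,K_k)=z+\mathcal{N}(R_J,K_k).$$
Taking expectations, it suffices to bound $\mathbb{E}[\mathcal{N}(R_J,K_k)]$.

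\textbf{Step 2: relate $K_k$-copies in $R_J$ to those in $J^-$.} Let $J^-$ be the subgraph of $J$ induced on $V(J)\cap V(G(B))$. Then $J^-$ has at most $nB$ vertices and is $\mathcal{F}$-free, being a subgraph of $J$. Every copy of $K_k$ in $R_J$ on vertex set $\{v_1,\dots,v_k\}$ corresponds to the copy of $K_k$ in $J^-$ on the selected vertices $\{v_1',\dots,v_k'\}$, which lie in distinct clusters.

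Conversely, fix a copy of $K_k$ in $J^-$ whose vertices lie in $k$ distinct clusters $C_{v_1},\dots,C_{v_k}$. Note that $G(B)$ has no edges inside a cluster, so every $K_k$ in $J^-$ is of this form. This copy is realized in $R_J$ exactly when its $k$ vertices are all selected, which happens with probability $B^{-k}$. The correspondence is injective, so
$$\mathbb{E}[\mathcal{N}(R_J,K_k)]\leq B^{-k}\mathcal{N}(J^-,K_k).$$

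\textbf{Step 3: supersaturation-type counting.} Fix $F_0\in\mathcal{F}$ with $\chi(F_0)=k$. Then $F_0$ is a subgraph of a blow-up of $K_k$, and $J^-$ is $F_0$-free. Lemma~\ref{le:AlSh16-degen} applied with the pair $(K_k,F_0)$ gives constants $c>0$ and $\eta\in(0,k)$ with
$$\mathcal{N}(J^-,K_k)\leq c\,v(J^-)^{k-\eta}\leq c(nB)^{k-\eta}.$$
Hence
$$\mathbb{E}[\mathcal{N}(R_J,K_k)]\leq cn^{k-\eta}B^{-\eta}.$$
Since $k$ and the choice of $F_0$ depend only on $\mathcal{F}$, so do $c$ and $\eta$.

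The only delicate point is Step 2. One must make sure that each $K_k$ in $R_J$ genuinely comes from a $K_k$ of $J$ inside the $G(B)$-part, rather than from edges added in $J^+$. This holds because $R_J$ is defined using edges of $J$ only. With that settled, combining Steps 1--3 with Lemma~\ref{le:leaf-hom-4} gives the claimed bound.
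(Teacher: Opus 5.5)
Your proposal is correct and follows the paper's proof essentially step by step. You reduce via Lemma~\ref{le:leaf-hom-4} to bounding $\mathbb{E}[e(R_J)]$, then use the pointwise bound $e(R_J)\le z+\mathcal{N}(R_J,K_k)$. Finally you bound $\mathbb{E}[\mathcal{N}(R_J,K_k)]\le B^{-k}\mathcal{N}(J^-,K_k)\le cn^{k-\eta}B^{-\eta}$ using Lemma~\ref{le:AlSh16-degen} for some $F_0\in\mathcal{F}$ with $\chi(F_0)=k$. The paper states the Step~2 correspondence as an equality, but your inequality is all that is needed.
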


\noindent {\bf Proof.}
By Lemma~\ref{le:leaf-hom-4}, it suffices to prove that $\mathbb{E}[e(R_J)]\leq z+O(n^{k-\eta}B^{-\eta})$ whenever $J$ is $\mathcal{F}$-free.
Since deleting one edge from each copy of $K_k$ in $R_J$ yields a $K_k$-free subgraph of $G$, we have $\ex(G, K_2, K_k)\geq e(R_J)-\mathcal{N}(R_J, K_k)$.
Using~\eqref{eq:z}, we obtain
\begin{equation}\label{eq:upper-1}
e(R_J)\leq \ex(G, K_2, K_k)+\mathcal{N}(R_J, K_k)=z+\mathcal{N}(R_J, K_k).
\end{equation}

Let $F_0\in \mathcal{F}$ with $\chi(F_0)=k$.
Let $J^{-}$ be the subgraph of $J$ induced by $V(J)\cap \bigcup_{v\in V(G)}C_v$, that is, we only preserve the vertices and edges within the union of the clusters corresponding to $V(G)$.
Since $J$ is $\mathcal{F}$-free, $J^{-}\subseteq J$ and $F_0\in \mathcal{F}$, we have that $J^{-}$ is $F_0$-free.
Moreover, since $\chi(F_0)=k$, $F_0$ is a subgraph of a blow-up of $K_k$.
Then by Lemma~\ref{le:AlSh16-degen}, there exist constants $c>0$ and $0<\eta<k$ such that
$\mathcal{N}(J^{-},K_k)\leq cv(J^{-})^{k-\eta}\leq c(nB)^{k-\eta}.$
Every copy of $K_k$ in $J^{-}$ meets $k$ distinct clusters, so all its vertices are selected with probability $B^{-k}$.
After relabelling, the selected copies correspond bijectively to the copies of $K_k$ in $R_J$.
Hence,
\begin{equation*}\label{eq:upper-3}
\mathbb{E}[\mathcal{N}(R_J, K_k)]=B^{-k}\mathcal{N}(J^{-},K_k)\leq cB^{-k}(nB)^{k-\eta}=cn^{k-\eta}B^{-\eta}.
\end{equation*}
Taking expectations in \eqref{eq:upper-1}, we have $\mathbb{E}[e(R_J)]\leq z+\mathbb{E}[\mathcal{N}(R_J, K_k)]\leq z+cn^{k-\eta}B^{-\eta}.$
This completes the proof of Lemma~\ref{le:upper}.
\hfill$\blacksquare$


\begin{lemma}\label{le:lower}
For every $n$-vertex graph $H$ and every positive integer $B$, we have
$$\hom(T, L_{\mathbf{x}}(H)(B))\geq B^{v(T)}\Big(A_0(\mathbf{x}, H)+S(\mathbf{x},n)\cdot e(H)\Big).$$
\end{lemma}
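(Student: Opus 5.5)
The plan is to combine the exact lifting of homomorphisms through the blow-up with the expansion \eqref{eq:hom-3part} for the host graph $H$, and then discard the nonnegative $A_{\geq 3}$ term.

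\emph{Step 1: the blow-up.} I will first show that $\hom(T, L_{\mathbf{x}}(H)(B)) = B^{v(T)}\hom(T, L_{\mathbf{x}}(H))$ holds exactly. Let $\pi$ be the natural projection from $L_{\mathbf{x}}(H)(B)$ to $L_{\mathbf{x}}(H)$.

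\begin{itemize}
\item Every homomorphism $\varphi\colon T\to L_{\mathbf{x}}(H)(B)$ projects to the homomorphism $\pi\circ\varphi$. This uses that clusters are independent sets and that each edge of the blow-up lies between clusters of adjacent vertices.
\item Conversely, every $\psi\colon T\to L_{\mathbf{x}}(H)$ has exactly $B^{v(T)}$ lifts. Each vertex $t$ of $T$ may be sent to any vertex of $C_{\psi(t)}$, and adjacencies are preserved because $C_{\psi(t)}$ and $C_{\psi(t')}$ are completely joined whenever $tt'\in E(T)$.
\item Distinct $\psi$ have disjoint sets of lifts, so the identity follows.
\end{itemize}

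\emph{Step 2: the expansion for $H$.} Next I apply the decomposition \eqref{eq:hom-3part}, which comes from Lemma~\ref{le:leaf-hom-2} and is valid for any graph. With $H$ in place of $G$, this gives
$$\hom(T,L_{\mathbf{x}}(H)) = A_0(\mathbf{x},H)+A_2(\mathbf{x},H)+A_{\geq 3}(\mathbf{x},H).$$
Since $v(H)=n$ and $\mathbf{x}$ is defined by \eqref{eq:leaves} from this same $n$, the first two identities of Lemma~\ref{le:leaf-hom-3} apply to $H$. Those two identities are pure computations that need no asymptotics.
\begin{itemize}
\item $A_2(\mathbf{x},H)=S(\mathbf{x},n)\,e(H)$, because $\hom(K_2\cup qK_1,H)=2e(H)n^{q}$.
\item $A_0(\mathbf{x},H)$ is determined by $n$ alone.
\end{itemize}

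\emph{Step 3: drop the remainder.} The term $A_{\geq 3}(\mathbf{x},H)$ is a sum of products $f_Q(\mathbf{x})\hom(Q,H)$. Each $f_Q$ is a polynomial with nonnegative coefficients evaluated at positive integers, and each $\hom(Q,H)\geq 0$, so $A_{\geq 3}(\mathbf{x},H)\geq 0$. Combining the three steps gives
$$\hom(T, L_{\mathbf{x}}(H)(B))\ \ge\ B^{v(T)}\bigl(A_0(\mathbf{x},H)+S(\mathbf{x},n)\,e(H)\bigr).$$

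I expect no real obstacle. The only point needing care is Step 1, namely confirming that the lift count is exact rather than an upper bound. This holds because the blow-up contains all edges between clusters of adjacent vertices. Unlike in Lemma~\ref{le:leaf-hom-4}, no subgraph is being removed here, so no lift is lost.
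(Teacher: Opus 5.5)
Your proposal is correct and matches the paper's proof: it also lifts each homomorphism into $L_{\mathbf{x}}(H)$ to exactly $B^{v(T)}$ homomorphisms into the blow-up, applies \eqref{eq:hom-3part} and the first two identities of Lemma~\ref{le:leaf-hom-3} to $H$, and discards $A_{\geq 3}(\mathbf{x},H)\geq 0$. You spell out the lifting bijection and the nonnegativity in more detail than the paper does.
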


\noindent {\bf Proof.}
Every homomorphism from $T$ to $L_{\mathbf{x}}(H)$ can be lifted to exactly $B^{v(T)}$ homomorphisms from $T$ to $L_{\mathbf{x}}(H)(B)$.
Thus the result follows from \eqref{eq:hom-3part}, Lemma~\ref{le:leaf-hom-3} and $A_{\geq 3}(\mathbf{x},H)\geq 0$.
\hfill$\blacksquare$


\subsection{Finishing the proof}
\label{subsec:proof_bipar}

In this subsection, we complete the proof of Theorem~\ref{thm:bipar}.
In order to apply Corollary~\ref{cor:transfer}, we first deduce a lower bound on $\ex'(Z, T, \mathcal{F})$ and an upper bound on $\ex(Z, T, \mathcal{F})$.
Let $\gamma$ be the number of automorphisms of $T$,
so each copy of $T$ corresponds to exactly $\gamma$ injective homomorphisms.
Then for any $N$-vertex graph $A$, we have
\begin{equation}\label{eq:aut}
0\leq \hom(T, A)-\gamma\mathcal{N}(A, T)\leq v(T)^{2}N^{v(T)-1}.
\end{equation}
To see this, note that every noninjective homomorphism maps two distinct vertices of $T$ to a single vertex of $A$, and there are $N$ choices for their common image.
Once these two vertices have been fixed, there are at most $N^{v(T)-2}$ choices for the images of the remaining vertices.
Thus $\hom(T, A)-\gamma\mathcal{N}(A, T)\leq {v(T)\choose 2}NN^{v(T)-2}\leq v(T)^{2}N^{v(T)-1}.$

Let $H$ be a spanning subgraph of $G$ with $\chi(H)\leq k-1$ and $e(H)=z$.
Using the same added vertices and clusters as in the construction of $Z$, we regard $L_{\mathbf{x}}(H)(B)$ as a spanning subgraph of $Z$.
Since $\chi(H)\leq k-1$, we have $\chi(L_{\mathbf{x}}(H))\leq k-1$ and $\chi(L_{\mathbf{x}}(H)(B))\leq k-1$.
Indeed, we can extend a fixed $(k-1)$-coloring of $H$ to $L_{\mathbf{x}}(H)$ and $L_{\mathbf{x}}(H)(B)$ by assigning each new leaf a color different from that of its anchor, and assigning every vertex of a blow-up cluster the color of its original vertex.
Then $L_{\mathbf{x}}(H)(B)$ is $\mathcal{F}$-free since $\min_{F\in \mathcal{F}}\chi(F)=k$.
Moreover, applying Lemma~\ref{le:lower} to $H$, we obtain
$$\hom(T, L_{\mathbf{x}}(H)(B))\geq B^{v(T)}\Big(A_0(\mathbf{x},H)+S(\mathbf{x},n)\cdot e(H)\Big).$$
By Lemma~\ref{le:leaf-hom-3}, we have $A_0(\mathbf{x},H)=A_0(\mathbf{x},G)$ since $v(H)=v(G)=n$.
Combining this equality and the preceding lower bound with \eqref{eq:aut}, we have
\begin{align}\label{eq:lower}
\ex'(Z, T, \mathcal{F}) \geq &~\mathcal{N}(L_{\mathbf{x}}(H)(B), T) \geq \gamma^{-1}\left(\hom(T, L_{\mathbf{x}}(H)(B))-v(T)^{2}v(L_{\mathbf{x}}(H)(B))^{v(T)-1}\right) \nonumber\\
\geq &~\gamma^{-1}\left(B^{v(T)}\Big(A_0(\mathbf{x},H)+S(\mathbf{x},n)\cdot e(H)\Big)-v(T)^2v(Z)^{v(T)-1}\right) \nonumber\\
\geq &~\gamma^{-1}B^{v(T)}\big(A_0(\mathbf{x},G)+S(\mathbf{x},n)\cdot z\big)-\gamma^{-1}v(T)^2v(Z)^{v(T)-1}.
\end{align}

On the other hand, let $J\subseteq Z$ be an $\mathcal{F}$-free subgraph with $\mathcal{N}(J,T)=\ex(Z, T, \mathcal{F})$.
Then by Lemma~\ref{le:upper} and Inequality~\eqref{eq:aut}, there exist constants $\eta=\eta(T,\mathcal{F})>0$ and $C=C(T,\mathcal{F})>0$ such that
\begin{align}\label{eq:upper}
\ex(Z, T, \mathcal{F}) = &~\mathcal{N}(J,T) \leq \gamma^{-1} \hom(T, J) \nonumber\\
\leq &~\gamma^{-1}B^{v(T)}\left(A_0(\mathbf{x},G) +S(\mathbf{x},n)\cdot z +S(\mathbf{x},n)C\left(n^{k-\eta}B^{-\eta} +n^{-3}\right)\right).
\end{align}

By Inequality~\eqref{eq:vZ}, we have $v(Z)\leq 2^sBn^{1+\sum\nolimits_{i=1}^{s}d_i}.$
Set $K\colonequals 1+\sum\nolimits_{i=1}^{s}d_i$.
Then $v(Z)\leq 2^sBn^{K}.$
Choose a fixed positive integer $\beta$ depending only on $T,\mathcal{F},\varepsilon$ so that
\begin{equation}\label{eq:c}
\beta>K(v(T)-1)+1, \qquad \beta\eta > k-\eta+1, \qquad \beta\varepsilon > K(v(T)-\varepsilon)+1.
\end{equation}
Let $B\colonequals n^{\beta}$.
Define
$$L\colonequals \gamma^{-1}B^{v(T)}A_0(\mathbf{x},G), \qquad S\colonequals \gamma^{-1}B^{v(T)}S(\mathbf{x},n),$$
$$a\colonequals \gamma^{-1}v(T)^2v(Z)^{v(T)-1}, \qquad b\colonequals \gamma^{-1}B^{v(T)}S(\mathbf{x},n) C\left(n^{k-\eta}B^{-\eta} +n^{-3}\right).$$
Then Inequalities~\eqref{eq:lower} and \eqref{eq:upper} yield
$$L+Sz-a \leq \ex'(Z, T, \mathcal{F})\leq \ex(Z, T, \mathcal{F})\leq L+Sz+b.$$
We now show that $a+b+v(Z)^{v(T)-\varepsilon}= o(S).$
Recall that $S(\mathbf{x},n)$ is an integer greater than $1$.
Combining this with $B= n^{\beta}$, $v(Z)\leq 2^sBn^{K}= 2^sn^{\beta+K}$ and \eqref{eq:c}, we have
\begin{align*}
S^{-1}a= &~\gamma B^{-v(T)}S(\mathbf{x},n)^{-1}\gamma^{-1}v(T)^2v(Z)^{v(T)-1} < v(T)^2 B^{-v(T)}(2^sn^{\beta+K})^{v(T)-1} \\
= &~O\left(n^{-\beta v(T)}\cdot n^{(\beta+K)(v(T)-1)}\right) = O\left(n^{K(v(T)-1)-\beta}\right)= O(n^{-1}) = o(1),
\end{align*}
\begin{align*}
S^{-1}b= &~\gamma B^{-v(T)}S(\mathbf{x},n)^{-1}\gamma^{-1}B^{v(T)}S(\mathbf{x},n) C\left(n^{k-\eta}B^{-\eta} +n^{-3}\right) \\
= &~C\left(n^{k-\eta-\beta\eta} + n^{-3}\right)= O(n^{-1}) = o(1),
\end{align*}
\begin{align*}
S^{-1}v(Z)^{v(T)-\varepsilon}\leq &~\gamma B^{-v(T)}S(\mathbf{x},n)^{-1}(2^sn^{\beta +K})^{v(T)-\varepsilon}< \gamma n^{-\beta v(T)}(2^sn^{\beta +K})^{v(T)-\varepsilon} \\
= &~O\left(n^{-\beta\varepsilon + K(v(T)-\varepsilon)}\right) = O(n^{-1}) = o(1).
\end{align*}
Thus
$$a+b+v(Z)^{v(T)-\varepsilon}=o(S).$$

Choose a fixed integer $N^{\ast}\geq n_0(k)$ so that $a+b+v(Z)^{v(T)-\varepsilon}<S/2$ whenever $n\geq N^{\ast}$,
and define $\mathcal{G}_{k,N^{\ast}}\colonequals \{G\in\mathcal G_k\colon\, v(G)\geq N^{\ast}\}.$
Computing $\cut_{k-1}(G)$ remains NP-hard on this class, since a polynomial-time algorithm on this class would extend to all of $\mathcal{G}_k$ by exhaustive search on the instances with fewer than $N^{\ast}$ vertices.
The integers $s, d_1, \ldots, d_s, \beta$ are fixed once $T, \mathcal{F},\varepsilon$ are fixed.
Thus the graph $Z$ can be constructed in polynomial time, and $v(Z)=O(n^{K+\beta})$.
The polynomials $f_Q$ depend only on $T$, and the formulas for $A_0(\mathbf{x},G)$ and $S(\mathbf{x},n)$ therefore show that $L$ and $S$ are polynomial-time computable rational numbers, with $L\geq 0$ and $S>0$.
Now by Corollary~\ref{cor:transfer}, each of the two approximation problems is NP-hard even when the input graph is restricted to $\{Z(G)\colon\, G\in\mathcal{G}_{k,N^{\ast}}\}$, and hence also for arbitrary host graphs.
This completes the proof of Theorem~\ref{thm:bipar}.

\section{Proofs of Theorems~\ref{thm:multipar} and \ref{thm:chromatic+2}}
\label{sec:pf_nonbipar}

In this section, we prove Theorem~\ref{thm:multipar} and establish two further general hardness results.
For an edge $e=uv$ of $T$, let $T/e$ be the graph obtained from $T$ by contracting the edge $e$, that is, deleting the edge $e$ and replacing vertices $u$ and $v$ by a single vertex adjacent precisely to the vertices which were adjacent to either $u$ or $v$ in $T$.
We first prove the following more general result. Theorem~\ref{thm:chromatic+2} is an immediate corollary, since $\chi(T/e)\leq \chi(T)+1$.

\begin{theorem}\label{thm:chromatic}
Let $T$ be a graph with $e(T)>0$, $\mathcal{F}$ be a nonempty family of graphs such that $\min_{F\in \mathcal{F}}\chi(F)=k\geq 3$, and $\varepsilon>0$.
If $T$ contains an edge $e$ satisfying $\chi(T)\leq k-1$ and $\chi(T/e)\leq k-1$, then for each of the two parameters $\ex(G, T, \mathcal{F})$ and $\ex'(G, T, \mathcal{F})$, approximation within an additive error of $n^{v(T)-\varepsilon}$ is NP-hard for an input graph $G$ on $n$ vertices.
\end{theorem}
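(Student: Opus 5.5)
We follow the symmetric rooted extension outlined in Section~\ref{subsec:sketch}. Fix the edge $e=xy$ given by the hypothesis and set $t=v(T)$. For $G\in\mathcal{G}_k$ on $n$ vertices, with $z=\cut_{k-1}(G)=\ex(G,K_2,K_k)$ by Lemma~\ref{le:AlSS-NPhard}, we build $Z=Z_{A,B}(G,T,xy)$. Start from $V(G)$ and, for every ordered pair $(u,v)$, attach a copy of $(T-\{x,y\})(A)$ joined to $u$ and $v$ according to the neighbourhoods of $x$ and $y$. Then add $E(G)$ and take the $B$-blow-up. We choose $A=n^{\alpha}$ and $B=n^{\beta}$ with $\alpha,\beta$ large constants depending on $T,\mathcal{F},\varepsilon$. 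The goal is to produce $L,S=XB^2,a=0,b$ satisfying~\eqref{eq:bounds-sketch}, and then to invoke Corollary~\ref{cor:transfer}.

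\emph{Lower bound.} Let $H\subseteq G$ be spanning with $\chi(H)\le k-1$ and $e(H)=z$, and let $Z_0$ be $Z$ with the edges of $G(B)$ removed. We show that $\chi(Z_0+H(B))\le k-1$. Fix a proper $(k-1)$-colouring $c$ of $H$. For a gadget on $(u,v)$ there are two cases. If $c(u)\neq c(v)$, take a $(k-1)$-colouring of $T$, permute colours so that $x,y$ receive $c(u),c(v)$, and colour the gadget clusters accordingly. If $c(u)=c(v)$, use a $(k-1)$-colouring of $T/e$ and give the contracted vertex the colour $c(u)$. This is exactly where both hypotheses $\chi(T)\le k-1$ and $\chi(T/e)\le k-1$ enter. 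Blow-up clusters inherit colours, so $Z_0+H(B)$ is $\mathcal{F}$-free. We then write $\mathcal{N}(Z_0+H(B),T)=L+X\,e(H(B))+(\text{nonnegative rest})$. Here $L$ counts the copies using no edge of $G(B)$, and $X$ counts the copies of $T$ through a fixed edge $u'v'$ of $G(B)$ in which that edge plays the role of $xy$ (or an automorphic image) and the remaining vertices lie in the two gadgets of $(u,v)$ and $(v,u)$ inside the blow-up. The key requirement is that $L$ and $X$ depend only on $n,A,B$ and not on $G$. This holds because every pair $(u,v)$ carries identical gadgets, so $L$ and $X$ are explicitly computable. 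Copies using several $G(B)$-edges, or mixing gadgets, are dropped in the lower bound.

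\emph{Upper bound.} For an $\mathcal{F}$-free $J\subseteq Z$, pass to $J^+$. We classify copies of $T$ by their number of vertices outside the original clusters $V(G)(B)$. The main term consists of copies with exactly $t-2$ vertices inside a single gadget blow-up, whose two remaining vertices use a $J^-$-edge; these contribute at most $L+X e(J^-)$. Every other copy with $t-2$ gadget vertices uses no $G(B)$-edge and is already counted in $L$. Copies with at most $t-3$ gadget vertices total $O_T(n^{2t-3}A^{t-3}B^{t})$, since the gadget vertices choose among $O(n^2)$ pairs and $O(A)$ clusters. Copies with more than $t-2$ gadget vertices are impossible, because $T$ restricted to one gadget is a subgraph of $T-\{x,y\}$ blown up, and distinct gadgets are joined only through $V(G)$. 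For $e(J^-)$ we apply the random transversal $R_J$ exactly as in Lemma~\ref{le:upper}, using Lemma~\ref{le:AlSh16-degen} with some $F_0$ of chromatic number $k$. This gives $e(J^-)=B^2\mathbb{E}[e(R_J)]\le B^2z+cn^{k-\eta}B^{2-\eta}$, which yields~\eqref{eq:clique-upper-sketch}.

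\emph{Main obstacle.} The main obstacle is the bookkeeping of the upper bound: we must show that every copy with $t-2$ gadget vertices is accounted for by $L+Xe(J^-)$ with the same coefficients as in the lower bound. This requires care with automorphisms of $T$ that move $xy$, and with copies whose two root vertices lie in one cluster or coincide. We would handle this by defining $L$ and $X$ directly as counts in $Z$ itself rather than by formulas, so that matching is automatic. Finally, $X=\Theta(A^{t-2}B^{t-2})$ up to constants, so $b/S=O(n^{k-\eta}B^{-\eta}+n^{2t-3}A^{-1}B^{t-2}\cdot\ldots)$. Choosing $\beta$ large relative to $k$ and $\varepsilon$, and then $\alpha$ large relative to $\beta$, makes $b+v(Z)^{t-\varepsilon}<S/2$ for all large $n$, since $v(Z)=O(n^{2}AB)$ while $S\approx A^{t-2}B^{t}$.
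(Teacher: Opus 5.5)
Your construction, the two-case colouring for the lower bound, and the random-transversal bound on $e(J^-)$ all match the paper. However, the counting step you identified as the main obstacle is wrong as written, and as stated it does not give the same coefficient of $z$ in the two bounds.

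\textbf{The definition of $X$ is too narrow.} You restrict $X$ to copies in which $u'v'$ plays $xy$ up to automorphism and the other $t-2$ vertices lie in the gadgets at $(u,v)$ and $(v,u)$. You then assert that every other copy with $t-2$ auxiliary vertices avoids $E(G(B))$. This is false, for two reasons.
\begin{itemize}
\item A component of $T-\{x,y\}$ that is not adjacent to both $x$ and $y$ can be realised in gadgets at other pairs containing $u$ or $v$, or at any pair at all if it is adjacent to neither.
\item The edge $u'v'$ can play an edge of $T$ outside the orbit of $xy$. For example, for $T=K_4^-$ with $x$ of degree $2$, it can play the edge joining the two degree-$3$ vertices.
\end{itemize}

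A concrete counterexample to the first point: take $T=P_4$ with $e$ its middle edge, which satisfies the hypotheses for every $k\ge 3$. For an edge $u'v'$ of $G(B)$, the vertex $u'$ has $2(n-1)AB$ auxiliary neighbours, coming from the gadgets at $(u,w)$ and $(w,u)$. The same holds for $v'$, and $u'$ and $v'$ have no auxiliary neighbour in common. So $u'v'$ lies in $4(n-1)^2A^2B^2$ copies of $P_4$ with exactly two source vertices, but only $4A^2B^2$ of these are confined to the gadgets at $(u,v)$ and $(v,u)$. The mixed copies you discard in the lower bound are all present in the $\mathcal{F}$-free graph $J=Z_0+H(B)$. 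Since $e(H)=z=\Theta(n^2)$, they number $\Theta(n^4A^2B^4)$, far beyond your error term $O(n^5AB^4)$ once $A\gg n$.

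\textbf{The fix.} Your instinct to define $X$ as a raw count is the right repair, but it must be the paper's count. Let $X$ be the number of \emph{all} copies of $T$ in $Z_0+e$ that contain $e$ and have exactly two source vertices, with no restriction on roles or gadgets.
\begin{itemize}
\item $Z_0$ does not depend on $E(G)$ and is invariant under all permutations of $V(G)$ and within clusters, so $X$ does not depend on $e$.
\item A copy in $Z_0+H$ with exactly two source vertices contains at most one edge of $H$.
\item Copies with at least three source vertices number $O(n^{2t-3}A^{t-3}B^{t})$.
\item Copies with at most one source vertex lie in $L$. In particular, copies with more than $t-2$ auxiliary vertices are harmless, although they are not impossible as you claim: for $T=C_4$ a single gadget already contains $C_4$.
\end{itemize}
This gives $L+Xe(H)\le\mathcal{N}(Z_0+H,T)\le L+Xe(H)+O(n^{2t-3}A^{t-3}B^{t})$ for every spanning $H\subseteq G(B)$. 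Apply it to $H(B)$ for the lower bound and to $J^-$ for the upper bound. Only $X\ge(AB)^{t-2}$ is needed; your claim $X=\Theta(A^{t-2}B^{t-2})$ fails in general.

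\textbf{The parameter choice is backwards.} Dividing the second error term by $S\ge(AB)^{t-2}B^2$ gives $n^{2t-3}A^{t-3}B^{t}/(A^{t-2}B^{t})=n^{2t-3}/A$, with no power of $B$. So this term only requires $A=n^{\alpha}$ with $\alpha>2t-3$. On the other hand, $v(Z)\ge n^2AB/2$ and $S=XB^2=O((n^2AB)^{t-2}B^2)$. Hence $v(Z)^{t-\varepsilon}<S/2$ forces $B^{\varepsilon}$ to exceed a constant multiple of $n^{4-2\varepsilon}A^{2-\varepsilon}$. So $\beta$ must be large relative to $\alpha$, and choosing $\alpha$ large relative to $\beta$, as you propose, makes $v(Z)^{t-\varepsilon}/S\to\infty$. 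The paper fixes $A=n^{2t-2}$ first. It then picks $\beta$ with $\beta\eta>k-\eta+1$ and $\beta\varepsilon>2(t-\varepsilon)+(2t-2)(2-\varepsilon)+1$.
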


We prove Theorems~\ref{thm:multipar} and \ref{thm:chromatic} via Corollary~\ref{cor:transfer}.
The constructions differ from the one used in the proof of Theorem~\ref{thm:bipar}.
The symmetric rooted construction also yields the following general hardness result for forbidden families whose members each contain $K_k$.
In particular, Conjecture~\ref{conj:AlSh22} for $\ex(G, T, K_k)$ follows by taking $\mathcal{F}=\{K_k\}$.

\begin{theorem}\label{thm:clique}
Let $T$ be a graph with $e(T)>0$, $\mathcal{F}$ be a nonempty family of graphs such that $\min_{F\in \mathcal{F}}\chi(F)=k> \omega(T)$ and every $F\in \mathcal{F}$ contains $K_k$ as a subgraph, and $\varepsilon>0$.
Then for each of the two parameters $\ex(G, T, \mathcal{F})$ and $\ex'(G, T, \mathcal{F})$, approximation within an additive error of $n^{v(T)-\varepsilon}$ is NP-hard for an input graph $G$ on $n$ vertices.
\end{theorem}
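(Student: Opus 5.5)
We prove Theorem~\ref{thm:clique} with the symmetric rooted extension $Z=Z_{A,B}(G,T,xy)$ from Section~\ref{subsec:sketch}, and then apply Corollary~\ref{cor:transfer} on the class $\mathcal{G}_k$. The construction needs an edge $xy\in E(T)$; we fix an arbitrary one. We take $G\in\mathcal{G}_k$ and $z=\cut_{k-1}(G)=\ex(G,K_2,K_k)$, which is valid by Lemma~\ref{le:AlSS-NPhard}.

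\textbf{Upper bound.} This part mirrors the proof of Theorem~\ref{thm:bipar} and uses only that some $F_0\in\mathcal{F}$ has $\chi(F_0)=k$. Let $J\subseteq Z$ be $\mathcal{F}$-free, and define $J^{-}$, $J^{+}$ and $R_J$ as before. We count copies of $T$ in $J^{+}$ according to how they sit relative to the original clusters.
\begin{itemize}
\item Copies using no edge of $G(B)$ depend only on $n,A,B$; they contribute a computable term $L$.
\item Copies using exactly one edge of $G(B)$, with the remaining vertices spread over one gadget $T_{u,v}(A)$ as the roles of $T-\{x,y\}$ dictate, contribute $X\,e(J^{-})$. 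Here $X$ collects the (polynomial in $A,B,n$) number of completions, summed over the edges of $T$ that could play the role of the $G$-edge.
\item All other configurations use at least two root clusters or fewer gadget vertices. Since $A$ and $n$ are large, they contribute $O(n^{2v(T)-3}A^{v(T)-3}B^{v(T)})$.
\end{itemize}
To bound $e(J^{-})$, note that $e(J^{-})=B^2\,\mathbb{E}[e(R_J)]$. Then $e(R_J)\le z+\mathcal{N}(R_J,K_k)$, and Lemma~\ref{le:AlSh16-degen} applied to the $F_0$-free graph $J^{-}$ gives $e(J^{-})\le B^2z+cn^{k-\eta}B^{2-\eta}$.

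\textbf{Lower bound.} This is where the hypotheses of Theorem~\ref{thm:clique} enter, and it is the step I expect to be the main obstacle. We take $H\subseteq G$ spanning with $\chi(H)\le k-1$ and $e(H)=z$, and must show that $Z_0+H(B)$ is $\mathcal{F}$-free. Coloring is not available here, since $T$ may have chromatic number $\ge k$. Instead we use that every $F\in\mathcal{F}$ contains $K_k$, so it suffices to show that $Z_0+H(B)$ is $K_k$-free. Blowing up creates no new cliques, so it is enough to check $K_k$-freeness of the graph $W$ before step~(5). Let $K$ be a clique in $W$.
\begin{itemize}
\item Gadget vertices of $T_{u,v}(A)$ are adjacent only to $u$, to $v$, and to vertices of the same gadget. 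So if $K$ contains a gadget vertex, then $K\subseteq V(T_{u,v}(A))\cup\{u,v\}$.
\item Within a gadget, $A$-blow-ups create no new cliques, and $u,v$ are attached exactly as $x,y$ are in $T$. Moreover $uv\in E(W)$ only if $uv\in E(H)$. Hence such a $K$ maps injectively to a clique of $T$ (possibly using the edge $xy$). Therefore $|K|\le\omega(T)<k$.
\item If $K$ contains no gadget vertex, then $K\subseteq V(H)$, so $|K|\le\omega(H)\le k-1$.
\end{itemize}
The point to verify carefully is that cliques mixing $u$, $v$ and gadget vertices never exceed $\omega(T)$. This holds because the gadget together with $\{u,v\}$ is a blow-up of $T$ or of $T-xy$, both of which have clique number at most $\omega(T)$. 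Consequently $\ex'(Z,T,\mathcal{F})\ge\mathcal{N}(Z_0+H(B),T)\ge L+XB^2z$.

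\textbf{Parameters.} We set $S=XB^2$, $a=0$, and let $b$ be the error term above. We then choose $A=n^{\alpha}$ and $B=n^{\beta}$ with $\alpha,\beta$ large constants, with $\alpha$ large enough that the $A$-power gain makes $n^{2v(T)-3}A^{v(T)-3}B^{v(T)}=o(XB^2)$. We also need $n^{k-\eta}B^{-\eta}=o(1)$ and $v(Z)^{v(T)-\varepsilon}=o(XB^2)$; these parallel the conditions in~\eqref{eq:c}. Here $X\asymp n^{v(T)-2}A^{v(T)-2}B^{v(T)-2}$ and $v(Z)\asymp n^2AB$. Once these hold, Corollary~\ref{cor:transfer} finishes the proof for both $\ex$ and $\ex'$.
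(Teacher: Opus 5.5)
Your proposal is correct and follows the paper's proof essentially step for step. It uses the same rooted construction $Z_{A,B}(G,T,xy)$ with an arbitrary edge $xy$, and the same decomposition $\mathcal{N}(J,T)\le\mathcal{N}(Z_0+H,T)\le L+Xe(J^-)+O_T(n^{2v(T)-3}A^{v(T)-3}B^{v(T)})$ combined with $e(J^-)=B^2\,\mathbb{E}[e(R_J)]\le B^2z+cn^{k-\eta}B^{2-\eta}$. It also uses the same $K_k$-freeness argument for $Z_0+H(B)$: a clique through an auxiliary vertex rooted at $(u,v)$ lies in a blow-up of $T$.

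The one thing to fix is your claim $X\asymp n^{v(T)-2}(AB)^{v(T)-2}$, which is false in general. For $T=K_3$ one gets $X=2AB$, while for disconnected $T$ it can be much larger. Also, $X$ must include copies through $e$ that use several gadgets at $u$ or $v$, since these have only two source vertices and are not covered by your error term. You should therefore rely only on $X\ge (AB)^{v(T)-2}$. This forces $A\gg n^{2v(T)-3}$ (the paper takes $A=n^{2v(T)-2}$), with $\beta$ then chosen in terms of $A$ and $\varepsilon$.
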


In the remainder of this section, we prove Theorems~\ref{thm:multipar}, \ref{thm:chromatic} and \ref{thm:clique}.
In Section~\ref{subsec:root}, we first introduce the symmetric rooted construction and establish the estimates needed for Theorems~\ref{thm:chromatic} and \ref{thm:clique}, which we prove in Section~\ref{subsec:pf_cliquer+chromatic}.
In Section~\ref{subsec:pf_multipar}, we adapt the construction to complete multipartite graphs and prove Theorem~\ref{thm:multipar}.

\subsection{Symmetric rooted extensions}
\label{subsec:root}

For Theorems~\ref{thm:chromatic} and \ref{thm:clique}, the case $v(T)=2$ (i.e., $T=K_2$) follows from the work of Alon, Shapira and Sudakov~\cite{AlSS}.
Thus we may assume that $v(T)\geq 3$ throughout this section.
Fix a graph $T$ with $v(T)\geq 3$, and fix an oriented edge $xy\in E(T)$.
Let $G$ be an $n$-vertex graph with $e(G)>0$, and let $A$ and $B$ be positive integers.
For convenience, we restate the construction of $Z_{A,B}(G, T, xy)$ given in Section~\ref{subsec:sketch}.
We construct the graph $Z_{A,B}(G, T, xy)$ from $G$ in the following steps (see Figure~\ref{fig:root} for an illustration):
\begin{enumerate}
\item[(1)] We begin with the empty graph on $V(G)$. For every ordered pair $(u,v)$ of distinct vertices of $G$, take a copy $T_{u,v}$ of the graph $T-\{x,y\}$, where all these copies are pairwise disjoint and are disjoint from $V(G)$.
\item[(2)] For each ordered pair $(u,v)$, replace $T_{u,v}$ by its $A$-blow-up $T_{u,v}(A)$.
\item[(3)] For each ordered pair $(u,v)$, regard $u$ and $v$ as playing the roles of $x$ and $y$, respectively, and add edges as follows: for each vertex $w\in V(T-\{x,y\})$,
    \begin{itemize}
        \item if $xw\in E(T)$, add all edges between $u$ and the cluster corresponding to $w$ in $T_{u,v}(A)$;
        \item if $yw\in E(T)$, add all edges between $v$ and the cluster corresponding to $w$ in $T_{u,v}(A)$.
    \end{itemize}
    No edge between $u$ and $v$ is added at this step.
\item[(4)] Add all edges of $G$ among the original vertices of $G$.
\item[(5)] Take the $B$-blow-up of the resulting graph.
\end{enumerate}
We simply denote $Z_{A,B}(G, T, xy)$ by $Z$ when the parameters are clear.
Recall that for each $u\in V(G)$, $C_u$ is the cluster corresponding to $u$, so $|C_u|=B$.
We call the vertices arising from $V(G)$ the {\it source vertices}, and the other vertices of $Z$ the {\it auxiliary vertices}.
Hence, $Z$ has exactly $nB$ source vertices and $n(n-1)(v(T)-2)AB$ auxiliary vertices, so
\begin{equation}\label{eq:symmetric-rooted-vZ}
v(Z)=nB+n(n-1)(v(T)-2)AB \leq v(T)n^2AB.
\end{equation}
Moreover, the subgraph of $Z$ induced by the source vertices is precisely the $B$-blow-up $G(B)$ of $G$.
Let $Z_0$ be the spanning subgraph of $Z$ obtained by deleting all edges of $G(B)$, and let $$L\colonequals \mathcal{N}(Z_0,T).$$
For an edge $e\in E(G(B))$, let $X$ be the number of copies of $T$ in $Z_0+e$ that contain $e$ and have exactly two source vertices, namely the endpoints of $e$.
By symmetry, $X$ is independent of the choice of $e$.
To obtain copies counted by $X$, fix an orientation of $e$ and use its endpoints for $x$ and $y$, respectively.
Then choose one vertex from each of the $v(T)-2$ auxiliary sets rooted at the corresponding ordered pair of vertices of $G$.
Each of these sets has size $AB$, and distinct choices give distinct copies.
Consequently,
\begin{equation}\label{eq:symmetric-rooted-X}
X\geq (AB)^{v(T)-2}>0.
\end{equation}

\begin{lemma}\label{lem:symmetric-rooted-Z0+H}
For every spanning subgraph $H\subseteq G(B)$, we have $$L+Xe(H)\leq \mathcal{N}(Z_0+H,T)=L+Xe(H)+O_T\left(n^{2v(T)-3}A^{v(T)-3}B^{v(T)}\right).$$
\end{lemma}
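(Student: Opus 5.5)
We classify each copy of $T$ in $Z_0+H$ by how many edges of $H$ it uses and by how many source vertices it contains. Copies using no edge of $H$ are exactly the copies in $Z_0$, so they contribute exactly $L$.

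The main term comes from copies that use exactly one edge $e\in E(H)$ and contain exactly two source vertices, namely the endpoints of $e$. By definition of $X$, for each fixed $e$ there are exactly $X$ copies in $Z_0+e$ containing $e$ with exactly two source vertices. Each such copy also lies in $Z_0+H$. Copies attached to distinct edges $e$ are distinct, since $e$ is recovered as the unique edge of the copy between source vertices. This gives the lower bound $\mathcal{N}(Z_0+H,T)\ge L+Xe(H)$, and it also shows that these copies contribute exactly $Xe(H)$ to the count.

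All remaining copies use at least one edge of $H$ and either contain at least three source vertices, or contain exactly two source vertices but use edges of $H$ in some other way. The second case is impossible: with only two source vertices there is at most one source--source edge, so the copy uses exactly one $H$-edge and was already counted. Hence it remains to bound the number of copies of $T$ in $Z$ that contain at least three source vertices and at least one edge of $G(B)$. This is the main obstacle.

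The key structural fact concerns auxiliary vertices. An auxiliary vertex in the blow-up of $T_{u,v}(A)$ is adjacent only to other auxiliary vertices of the same gadget $(u,v)$ and to source vertices in $C_u\cup C_v$.

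Fix a copy with $t\ge 3$ source vertices, lying in $m\ge 2$ distinct source clusters, since it contains a $G(B)$-edge and $G(B)$ has no edges inside a cluster. Partition its $v(T)-t$ auxiliary vertices into the gadgets they belong to, and let $g$ be the number of gadgets used. Each gadget used is indexed by an ordered pair of source clusters. A gadget neither of whose indices is a cluster of the copy forms a union of connected components of the copy; call it detached.

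Counting choices, the source vertices give at most $n^mB^t$ options. Each gadget gives at most $n^{2}$ options, or $n^{j}$ with $j<2$ if $j$ of its indices are forced. Each auxiliary vertex has at most $AB$ options. I will argue the total is
\[
O_T\bigl(n^{2v(T)-3}A^{v(T)-3}B^{v(T)}\bigr),
\]
by a case analysis on how the gadgets connect.

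The crude count, with every auxiliary vertex in its own free gadget, gives $n^{m+2(v(T)-t)}(AB)^{v(T)-t}B^{t}$. With $m\le t$ and $t\ge 3$, the exponent of $n$ is at most $2v(T)-t\le 2v(T)-3$, the exponent of $A$ is at most $v(T)-3$, and $B$ appears to at most $B^{v(T)}$. Merging auxiliary vertices into a common gadget only reduces the $n$-count, and reusing indices likewise reduces it.

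The delicate point is detached gadgets. A component of the copy made entirely of auxiliary vertices costs $n^2(AB)^{c}$ for $c$ vertices, which is a ratio of $n^2$ per $c$ vertices and exceeds $n$ per vertex only when $c=1$. So I will rely on the fact that a single auxiliary vertex contributes $n^2AB$ while a source vertex contributes $nB$, and compare the totals directly. Since the exponent of $n$ is always at most $2v(T)-t$, the exponent of $A$ is at most $v(T)-t$, and the exponent of $B$ is at most $v(T)$, the three parameters are bounded separately. This is also the reason the target bound has the product form it does.

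Summing over the finitely many configurations, whose number depends only on $T$, gives the stated error term. Together with the exact count $L+Xe(H)$ from the first two classes, this yields $L+Xe(H)\le\mathcal{N}(Z_0+H,T)\le L+Xe(H)+O_T\bigl(n^{2v(T)-3}A^{v(T)-3}B^{v(T)}\bigr)$.
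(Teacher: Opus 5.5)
Your proposal is correct and follows the paper's proof. You use the same three classes of copies: those with no $H$-edge (exactly $L$), those with exactly two source vertices and one $H$-edge (exactly $Xe(H)$), and those with $t\ge 3$ source vertices, bounded by $(nB)^t(n^2AB)^{v(T)-t}=n^{2v(T)-t}A^{v(T)-t}B^{v(T)}$ and summed over $t$.

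The gadget and detached-component analysis is unnecessary. Choosing each auxiliary vertex freely among the $O_T(n^2AB)$ auxiliary vertices already bounds every configuration, and that is exactly what your ``crude count'' and the paper do.
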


\noindent {\bf Proof.}
The copies of $T$ in $Z_0+H$ using no edge of $H$ are precisely those counted by $L$.
A copy using an edge of $H$ and containing exactly two source vertices contains exactly one edge of $H$.
For each edge of $H$, there are exactly $X$ such copies, giving the contribution $Xe(H)$.

For $j\in \{3, \ldots, v(T)\}$, the number of copies of $T$ containing exactly $j$ source vertices is at most
$O_T\left((nB)^{j}(n^2AB)^{v(T)-j}\right)=O_T\left(n^{2v(T)-j}A^{v(T)-j}B^{v(T)}\right)$.
Thus the number of copies of $T$ containing at least three source vertices is at most
$O_T\left(\sum_{j=3}^{v(T)}n^{2v(T)-j}A^{v(T)-j}B^{v(T)}\right)=O_T\left(n^{2v(T)-3}A^{v(T)-3}B^{v(T)}\right).$
The result follows.
\hfill$\blacksquare$

\vspace{0.2cm}

Let $k\geq 3$.
Throughout this section, we assume that $G$ is an $n$-vertex graph in $\mathcal{G}_k$.
Set $z\colonequals \cut_{k-1}(G)$.
Then Lemma~\ref{le:AlSS-NPhard} implies that
\begin{equation}\label{eq:symmetric-rooted-z}
z=\cut_{k-1}(G)=\ex(G, K_2, K_k),
\end{equation}
and computing $\cut_{k-1}(G)$ is NP-hard for inputs $G\in \mathcal{G}_k$.
We next obtain an upper bound by adapting the arguments of Lemmas~\ref{le:leaf-hom-4} and \ref{le:upper}.

\begin{lemma}\label{lem:symmetric-rooted-NJT}
Let $\mathcal{F}$ be a fixed nonempty family of graphs with $\min_{F\in \mathcal{F}}\chi(F)=k\geq 3$.
There exists a constant $\eta=\eta(T,\mathcal{F})>0$ such that every $\mathcal{F}$-free subgraph $J\subseteq Z$ satisfies
$$\mathcal{N}(J,T) \leq L+XB^2z +O_{T, \mathcal{F}}\left(Xn^{k-\eta}B^{2-\eta} + n^{2v(T)-3}A^{v(T)-3}B^{v(T)}\right).$$
\end{lemma}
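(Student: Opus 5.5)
Let $J\subseteq Z$ be $\mathcal{F}$-free, and let $J^{-}\colonequals J\cap G(B)$ be the subgraph of $J$ whose vertices and edges lie among the source vertices. We first pass to $J^{+}\colonequals Z_0+J^{-}$, where the edges of $J^{-}$ are viewed as a (not necessarily spanning) edge set inside $G(B)$. Every edge of $J$ lies either in $Z_0$ or in $J^{-}$, so $J\subseteq J^{+}$ and $\mathcal{N}(J,T)\leq \mathcal{N}(J^{+},T)$. Adding the missing source vertices of $G(B)$ as isolated vertices, we may regard the edge set of $J^{-}$ as a spanning subgraph $H\subseteq G(B)$. Lemma~\ref{lem:symmetric-rooted-Z0+H} then gives
$$\mathcal{N}(J,T)\leq L+Xe(J^{-})+O_T\left(n^{2v(T)-3}A^{v(T)-3}B^{v(T)}\right).$$
It therefore remains to prove
$$e(J^{-})\leq B^2z+c\,n^{k-\eta}B^{2-\eta}$$
for constants $c,\eta>0$ depending only on $T,\mathcal{F}$. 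Multiplying this by $X$ yields the stated bound.

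For this we reuse the random transversal from Section~\ref{subsec:blow-up}. For each $v\in V(G)$, pick $v'\in C_v$ independently and uniformly, and define the random spanning subgraph $R_J\subseteq G$ by keeping $uv\in E(G)$ exactly when $u'v'\in E(J^{-})$. This is the same computation as in Claim~\ref{cl:A2}: edges of $J^{-}$ only join distinct clusters $C_u,C_v$ with $uv\in E(G)$, so each edge of $J^{-}$ is selected with probability $B^{-2}$, and $\mathbb{E}[e(R_J)]=B^{-2}e(J^{-})$. Deleting one edge from each copy of $K_k$ in $R_J$ leaves a $K_k$-free subgraph of $G$. By \eqref{eq:symmetric-rooted-z} this gives $e(R_J)\leq z+\mathcal{N}(R_J,K_k)$.

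The remaining step is to control $\mathbb{E}[\mathcal{N}(R_J,K_k)]$. Fix $F_0\in\mathcal{F}$ with $\chi(F_0)=k$. Then $F_0$ is a subgraph of a blow-up of $K_k$, and $J^{-}\subseteq J$ is $F_0$-free on at most $nB$ vertices. Lemma~\ref{le:AlSh16-degen} gives constants $c>0$ and $\eta\in(0,k)$ with $\mathcal{N}(J^{-},K_k)\leq c(nB)^{k-\eta}$. Each copy of $K_k$ in $J^{-}$ uses $k$ distinct clusters, because clusters are independent sets. Hence it survives in $R_J$ with probability exactly $B^{-k}$, and
$$\mathbb{E}[\mathcal{N}(R_J,K_k)]\leq c\,n^{k-\eta}B^{-\eta}.$$
Taking expectations gives $B^{-2}e(J^{-})\leq z+cn^{k-\eta}B^{-\eta}$, which is the desired inequality.

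The only point needing care is the first reduction. We must check that $J^{+}$ really contains $J$ and has the form $Z_0+H$ required by Lemma~\ref{lem:symmetric-rooted-Z0+H}. We must also check that the lemma's upper bound, stated for spanning $H\subseteq G(B)$, applies after padding with isolated vertices; this is harmless, since adding isolated vertices does not change $e(H)$ and only enlarges the host.
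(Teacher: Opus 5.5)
Your proposal is correct and follows the paper's proof essentially step for step. You pass to $J^{+}=Z_0+H$ with $E(H)=E(J)\cap E(G(B))$ and apply Lemma~\ref{lem:symmetric-rooted-Z0+H}, then bound $e(J^{-})$ exactly as the paper does, via the random transversal $R_J$, the identity $\mathbb{E}[e(R_J)]=B^{-2}e(J^{-})$, and Lemma~\ref{le:AlSh16-degen} applied to the $F_0$-free graph $J^{-}$ (the paper also makes explicit the bijection between copies of $K_k$ in $R_J$ and fully selected copies in $J^{-}$, which your "survives" phrasing uses implicitly).
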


\noindent {\bf Proof.}
Let $J^{-}$ be the subgraph of $J$ induced by $V(J)\cap V(G(B))$, so $E(J^{-})= E(J)\cap E(G(B))$.
Let $H$ be the spanning subgraph of $G(B)$ with $E(H)=E(J^-)$, and let $J^{+}\colonequals Z_0+H$.
Then $J\subseteq J^{+}\subseteq Z$, and by Lemma~\ref{lem:symmetric-rooted-Z0+H}, we have
$$\mathcal{N}(J,T) \leq ~\mathcal{N}(J^{+},T) \leq L+Xe(J^{-})+O_T\left(n^{2v(T)-3}A^{v(T)-3}B^{v(T)}\right).$$
Now it suffices to show that $e(J^{-})\leq B^2z+cn^{k-\eta}B^{2-\eta}$ for some constants $c>0$ and $\eta>0$ depending only on $T$ and $\mathcal{F}$.

Let $F_0\in \mathcal{F}$ with $\chi(F_0)=k$.
Since $J$ is $\mathcal{F}$-free and $J^{-}\subseteq J$, we have that $J^{-}$ is $F_0$-free.
Moreover, $F_0$ is a subgraph of a blow-up of $K_k$ since $\chi(F_0)=k$.
Then by Lemma~\ref{le:AlSh16-degen}, there exist constants $c>0$ and $0<\eta<k$ such that
$\mathcal{N}(J^{-}, K_k)\leq c(nB)^{k-\eta}.$

Define a random spanning subgraph $R_J$ of $G$ as follows.
We choose one vertex $v'\in C_v$ independently and uniformly for each $v\in V(G)$.
For each $v \in V(G)$, if $v'\notin V(J)$, then $v$ is isolated in $R_J$.
For every $uv\in E(G)$, we put $uv\in E(R_J)$ if and only if $u', v'\in V(J)$ and $u'v'\in E(J)$.
Note that for every edge in $J^{-}$, the probability that its corresponding edge is present in $R_J$ is $B^{-2}$,
so $\mathbb{E}[e(R_J)]=B^{-2}e(J^{-})$.
Moreover, each copy of $K_k$ in $J^-$ meets $k$ distinct source clusters, since every cluster is independent.
Its vertices are all selected with probability $B^{-k}$.
The map sending the selected vertex in $C_v$ to $v$ gives a bijection between the copies of $K_k$ in $J^-$ whose vertices are all selected and the copies of $K_k$ in $R_J$.
Hence, $\mathbb{E}[\mathcal{N}(R_J,K_k)] = B^{-k}\mathcal{N}(J^{-},K_k)\leq cB^{-k}(nB)^{k-\eta}.$

Since deleting one edge from each copy of $K_k$ in $R_J$ yields a $K_k$-free subgraph of $G$, we have $\ex(G, K_2, K_k)\geq e(R_J)-\mathcal{N}(R_J, K_k)$.
Combining this inequality with \eqref{eq:symmetric-rooted-z}, we obtain
$e(R_J)\leq \ex(G, K_2, K_k)+\mathcal{N}(R_J, K_k)=z+\mathcal{N}(R_J, K_k).$
Then $$e(J^{-})=B^{2}\mathbb{E}[e(R_J)]\leq B^{2}z+ B^{2}\mathbb{E}[\mathcal{N}(R_J, K_k)] \leq B^{2}z+ B^{2}cB^{-k}(nB)^{k-\eta} = B^2z+cn^{k-\eta}B^{2-\eta}.$$
This completes the proof of Lemma~\ref{lem:symmetric-rooted-NJT}.
\hfill$\blacksquare$

\vspace{0.2cm}

The following result will be used in our proofs of Theorems~\ref{thm:chromatic} and \ref{thm:clique}.

\begin{lemma}\label{lem:symmetric-rooted-transfer}
Fix a graph $T$ with $v(T)\geq 3$, a nonempty family $\mathcal{F}$ with $\min_{F\in\mathcal{F}}\chi(F)=k\geq 3$, and an oriented edge $xy\in E(T)$.
Suppose that for every $G\in\mathcal G_k$ and all positive integers $A,B$, there is a spanning subgraph $H\subseteq G$ with $\chi(H)\leq k-1$ and $e(H)=\cut_{k-1}(G)$ such that $Z_0+H(B)$ is $\mathcal{F}$-free.
Then for every $\varepsilon>0$, there exists a positive integer $\beta=\beta(T, \mathcal{F},\varepsilon)$ such that the following holds.
For each $G\in\mathcal G_k$, write $n=v(G)$ and set $Z(G)\colonequals Z_{n^{2v(T)-2},\,n^\beta}(G,T,xy)$.
It is NP-hard to approximate $\ex(Z,T,\mathcal{F})$ or $\ex'(Z,T,\mathcal{F})$ within an additive error of $v(Z)^{v(T)-\varepsilon}$ for input graphs $Z\in\{Z(G)\colon\, G\in \mathcal{G}_k\}$.
\end{lemma}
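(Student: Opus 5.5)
We apply Corollary~\ref{cor:transfer} with $\mathcal{G}=\mathcal{G}_k$ (restricted to $n\ge N^\ast$ at the end) and $Z=Z(G)=Z_{A,B}(G,T,xy)$, where $A=n^{2v(T)-2}$ and $B=n^\beta$. For the lower bound, take the $H\subseteq G$ provided by the hypothesis. Then $Z_0+H(B)$ is an $\mathcal{F}$-free spanning subgraph of $Z$. By Lemma~\ref{lem:symmetric-rooted-Z0+H}, applied to the spanning subgraph $H(B)\subseteq G(B)$ with $e(H(B))=B^2e(H)=B^2z$, we get
\[
\ex'(Z,T,\mathcal{F})\ge \mathcal{N}(Z_0+H(B),T)\ge L+XB^2z .
\]
For the upper bound, Lemma~\ref{lem:symmetric-rooted-NJT} applied to an optimal $\mathcal{F}$-free $J\subseteq Z$ gives
\[
\ex(Z,T,\mathcal{F})\le L+XB^2z+b, \qquad b=C\bigl(Xn^{k-\eta}B^{2-\eta}+n^{2v(T)-3}A^{v(T)-3}B^{v(T)}\bigr).
\]
So we set $S=XB^2$ and $a=0$. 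We must check that $L=\mathcal{N}(Z_0,T)$ and $X$ are polynomial-time computable, and that $Z$ has polynomial size. The size bound follows from \eqref{eq:symmetric-rooted-vZ}. $X$ depends only on $T,A,B$ and can be computed by counting copies through a fixed edge in a bounded-size configuration. $L$ depends only on $n,A,B,T$, since $Z_0$ does not depend on $E(G)$. Alternatively, both can be computed by brute force: counting copies of the fixed graph $T$ in a polynomial-size graph takes time $v(Z)^{v(T)}$, which is polynomial.

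The main step is verifying $b+v(Z)^{v(T)-\varepsilon}<S/2$ for large $n$, using $X\ge (AB)^{v(T)-2}$ from \eqref{eq:symmetric-rooted-X}. The first part of $b$, divided by $S=XB^2$, gives $Cn^{k-\eta}B^{-\eta}=Cn^{k-\eta-\beta\eta}$, which is $o(1)$ once $\beta\eta>k$. For the second part,
\[
\frac{n^{2v(T)-3}A^{v(T)-3}B^{v(T)}}{XB^2}\le \frac{n^{2v(T)-3}A^{v(T)-3}B^{v(T)}}{A^{v(T)-2}B^{v(T)}}=\frac{n^{2v(T)-3}}{A}=n^{-1},
\]
and this is exactly why $A=n^{2v(T)-2}$ is chosen. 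For the last term, \eqref{eq:symmetric-rooted-vZ} gives $v(Z)\le v(T)n^2AB=v(T)n^{2v(T)}B$. Then
\[
\frac{v(Z)^{v(T)-\varepsilon}}{S}\le \frac{(v(T)n^{2v(T)}B)^{v(T)-\varepsilon}}{A^{v(T)-2}B^{v(T)}}=O\bigl(n^{2v(T)(v(T)-\varepsilon)}B^{-\varepsilon}\bigr),
\]
which is $o(1)$ provided $\beta\varepsilon>2v(T)^2+1$. So we choose an integer $\beta$ satisfying both conditions; it depends only on $T,\mathcal{F},\varepsilon$ through $\eta$ and $k$.

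Finally, fix $N^\ast\ge n_0(k)$ such that the inequality $b+v(Z)^{v(T)-\varepsilon}<S/2$ holds for $n\ge N^\ast$. As in the proof of Theorem~\ref{thm:bipar}, $\cut_{k-1}$ remains NP-hard on $\{G\in\mathcal{G}_k: v(G)\ge N^\ast\}$, because small instances can be solved by exhaustive search. Corollary~\ref{cor:transfer} then gives hardness on $\{Z(G)\}$. The instances with $n<N^\ast$ are finitely many and only enlarge the input class, so hardness persists on all of $\{Z(G)\colon G\in\mathcal{G}_k\}$. I expect no real obstacle beyond the exponent bookkeeping. The only point needing care is confirming that $L$ is computable without knowing $z$, which holds because $Z_0$ is independent of $E(G)$.
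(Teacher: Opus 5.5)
Your proof is correct and follows the paper's argument essentially step for step. It uses the same choice $A=n^{2v(T)-2}$, $S=XB^2$, $a=0$, the lower bound from Lemma~\ref{lem:symmetric-rooted-Z0+H} applied to $H(B)$, the upper bound from Lemma~\ref{lem:symmetric-rooted-NJT}, and Corollary~\ref{cor:transfer} on $\{G\in\mathcal{G}_k\colon v(G)\geq N^{\ast}\}$. Your conditions on $\beta$ are cruder, since you discard the factor $A^{v(T)-2}$, but they are still sufficient.

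One minor slip: $X$ can depend on $n$, because a copy may use auxiliary vertices rooted at other pairs $(u,w)$. For example, for $T=P_3$ with $x$ the middle vertex one gets $X=2(n-1)AB$. So the ``bounded-size configuration'' remark is inaccurate, but your brute-force enumeration alternative, which is exactly what the paper does, settles computability.
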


\noindent {\bf Proof.}
By Lemma~\ref{lem:symmetric-rooted-NJT}, we have
\begin{equation}\label{eq:clique-upper}
\ex(Z, T, \mathcal{F})\leq L+XB^2z +CXn^{k-\eta}B^{2-\eta} + Cn^{2v(T)-3}A^{v(T)-3}B^{v(T)}
\end{equation}
for some constants $\eta=\eta(T,\mathcal{F})>0$ and $C=C(T,\mathcal{F})>0$.
On the other hand, note that $Z_0+H(B)$ is a spanning subgraph of $Z$ and $H(B)$ is a spanning subgraph of $G(B)$.
Since $Z_0+H(B)$ is $\mathcal{F}$-free, we have
\begin{equation}\label{eq:clique-lower}
\ex'(Z, T, \mathcal{F})\geq \mathcal{N}(Z_0+H(B),T)\geq L+Xe(H(B)) = L+XB^2z
\end{equation}
by applying Lemma~\ref{lem:symmetric-rooted-Z0+H} to $H(B)$.

We now choose the parameters.
Let $A\colonequals n^{2v(T)-2}$ and let $B\colonequals n^{\beta}$, where $\beta$ is a fixed positive integer depending only on $T, \mathcal{F}, \varepsilon$, chosen so that
\begin{equation}\label{eq:clique-parameters}
    \beta\eta>k-\eta+1, \qquad \beta\varepsilon > 2(v(T)-\varepsilon)+(2v(T)-2)(2-\varepsilon)+1.
\end{equation}
Define
$$S\colonequals XB^2, \qquad a\colonequals 0, \qquad b\colonequals CXn^{k-\eta}B^{2-\eta} + Cn^{2v(T)-3}A^{v(T)-3}B^{v(T)}.$$
Then Inequalities~\eqref{eq:clique-upper} and \eqref{eq:clique-lower} yield
$$L+Sz-a \leq \ex'(Z, T, \mathcal{F})\leq \ex(Z, T, \mathcal{F})\leq L+Sz+b.$$
Using the definitions of $S$ and $b$, together with \eqref{eq:symmetric-rooted-vZ}, \eqref{eq:symmetric-rooted-X} and \eqref{eq:clique-parameters}, we have
\begin{align*}
S^{-1}b= &~X^{-1}B^{-2}\left(CXn^{k-\eta}B^{2-\eta} + Cn^{2v(T)-3}A^{v(T)-3}B^{v(T)}\right) \\
\leq &~Cn^{k-\eta}B^{-\eta} + C(AB)^{-(v(T)-2)}n^{2v(T)-3}A^{v(T)-3}B^{v(T)-2} \\
= &~Cn^{k-\eta}B^{-\eta} + Cn^{2v(T)-3}A^{-1} = Cn^{k-\eta-\beta\eta} + Cn^{2v(T)-3-(2v(T)-2)} = O(n^{-1}) = o(1),
\end{align*}
\begin{align*}
S^{-1}v(Z)^{v(T)-\varepsilon}\leq &~ X^{-1}B^{-2}(v(T)n^2AB)^{v(T)-\varepsilon}\leq (AB)^{-(v(T)-2)}B^{-2}(v(T)n^2AB)^{v(T)-\varepsilon} \\
= &~O\left(n^{2(v(T)-\varepsilon)}A^{2-\varepsilon}B^{-\varepsilon}\right) =O\left(n^{2(v(T)-\varepsilon)+(2v(T)-2)(2-\varepsilon)-\beta\varepsilon}\right) = O(n^{-1}) = o(1).
\end{align*}
Thus
$$a+b+v(Z)^{v(T)-\varepsilon}=o(S).$$

Choose a fixed integer $N^{\ast}\geq n_0(k)$ so that $a+b+v(Z)^{v(T)-\varepsilon}<S/2$ whenever $n\geq N^{\ast}$, and define $\mathcal{G}_{k,N^{\ast}}\colonequals \{G\in\mathcal G_k\colon\, v(G)\geq N^{\ast}\}.$
Computing $\cut_{k-1}(G)$ remains NP-hard on this class, since an exact polynomial-time algorithm on this class would extend to all of $\mathcal G_k$ by exhaustive search on the instances with fewer than $N^{\ast}$ vertices.
Since $v(Z)\leq v(T)n^2AB$$, A=n^{2v(T)-2}$ and $B=n^{\beta}$, where $\beta$ is fixed once $T, \mathcal{F},\varepsilon$ are fixed, we have $v(Z)=O(n^{2v(T)+\beta})$.
Hence, $Z$ can be constructed in polynomial time.
Since $T$ is fixed, $L=\mathcal{N}(Z_0,T)$ and $X$ can be computed in polynomial time by enumerating copies of $T$ in $Z_0$ and $Z_0+e$ and checking the conditions defining $X$.
Thus $L\geq 0$ and $S=XB^2>0$ are polynomial-time computable integers.
Now by Corollary~\ref{cor:transfer}, each of the two approximation problems is NP-hard even when the input graph is restricted to $\{Z(G)\colon\, G\in \mathcal{G}_{k,N^{\ast}}\}$.
Consequently, the same holds on $\{Z(G)\colon\, G\in \mathcal{G}_k\}$ and on arbitrary host graphs.
This completes the proof of Lemma~\ref{lem:symmetric-rooted-transfer}.
\hfill$\blacksquare$


\subsection{Proofs of Theorems~\ref{thm:chromatic} and \ref{thm:clique}}
\label{subsec:pf_cliquer+chromatic}

In this subsection, we prove Theorems~\ref{thm:chromatic} and \ref{thm:clique} using Lemma~\ref{lem:symmetric-rooted-transfer}.
\vspace{0.2cm}

\noindent {\bf Proof of Theorem~\ref{thm:chromatic}.}
By assumption, we have $\min_{F\in \mathcal{F}}\chi(F)=k\geq 3$, and $T$ contains an edge $e=xy$ satisfying $\chi(T)\leq k-1$ and $\chi(T/e)\leq k-1$.
Orient $e$ as $xy$.
Let $G\in\mathcal G_k$, let $A,B$ be arbitrary positive integers, and set $Z=Z_{A,B}(G,T,xy)$.
Let $H$ be a spanning subgraph of $G$ with $\chi(H)\leq k-1$ and $e(H)=z$.
By Lemma~\ref{lem:symmetric-rooted-transfer}, it suffices to show that $Z_0+H(B)$ is $\mathcal{F}$-free.
To prove this, we first show that $\chi(Z_0+H(B))\leq k-1$.
Fix a proper $(k-1)$-coloring $\chi_{H}$ of $H$.
For each vertex $v\in V(H)=V(G)$, we assign every source vertex in the cluster $C_v$ the color $\chi_H(v)$ of $v$.
It remains to extend this coloring to each auxiliary vertex.
Consider the $(v(T)-2)AB$ auxiliary vertices rooted at an ordered pair $(u,v)$.
If $u$ and $v$ receive distinct colors in $\chi_H$, then take a proper $(k-1)$-coloring $\chi_T$ of $T$.
Since $x$ and $y$ are adjacent in $T$, they receive distinct colors in $\chi_T$.
By renaming the colors if necessary, we may assume that the colors of $u$ and $v$ are precisely the colors of $x$ and $y$, respectively.
Then for each vertex $w\in V(T-\{x,y\})$, we assign color $\chi_T(w)$ to every auxiliary vertex corresponding to $w$.
If $u$ and $v$ receive the same color, then take a proper $(k-1)$-coloring $\chi_{T/e}$ of $T/e$.
By renaming the colors if necessary, we may assume that the vertex obtained by contracting $e=xy$ receives the same color as $u$ and $v$.
Then for each vertex $w\in V(T-\{x,y\})$, we assign color $\chi_{T/e}(w)$ to every auxiliary vertex corresponding to $w$.
By the definition of $Z_{A,B}(G, T, xy)$, the sets of auxiliary vertices rooted at different ordered pairs are disjoint, and there are no edges between these sets.
Therefore, this gives a proper $(k-1)$-coloring of $Z_0+H(B)$.
Since $\min_{F\in \mathcal{F}}\chi(F)=k$, the graph $Z_0+H(B)$ is $\mathcal{F}$-free.
This completes the proof.
\hfill$\blacksquare$

\vspace{0.2cm}

\noindent {\bf Proof of Theorem~\ref{thm:clique}.}
Fix any oriented edge $xy\in E(T)$.
Let $G\in\mathcal G_k$, let $A,B$ be arbitrary positive integers, and set $Z=Z_{A,B}(G,T,xy)$.
Choose a spanning subgraph $H\subseteq G$ with $\chi(H)\leq k-1$ and $e(H)=z$.
By Lemma~\ref{lem:symmetric-rooted-transfer}, it suffices to prove that $Z_0+H(B)$ is $\mathcal{F}$-free.
To prove this, we first show that $Z_0+H(B)$ is $K_k$-free.
Recall that $Z_0$ contains no edge of $G(B)$.
Moreover, $H(B)$ contains no $K_k$ since $\chi(H)\leq k-1$.
Hence, if $Z_0+H(B)$ contains a $K_k$, then such a $K_k$ must contain an auxiliary vertex, say one rooted at $(u,v)$.
By the definition of $Z_{A,B}(G, T, xy)$, the sets of auxiliary vertices rooted at different ordered pairs are disjoint, and there are no edges between these sets.
Thus the copy of $K_k$ lies in $C_u\cup C_v$ together with the auxiliary vertices rooted at $(u,v)$.
The graph induced on these vertices is a subgraph of a blow-up of $T$, whose clique number is $\omega(T)< k$.
This is a contradiction.
Therefore, $Z_0+H(B)$ is $K_k$-free.
By the assumption, every member of $\mathcal{F}$ contains $K_k$, so the graph $Z_0+H(B)$ is $\mathcal{F}$-free.
This completes the proof.
\hfill$\blacksquare$

\subsection{Proof of Theorem~\ref{thm:multipar}}
\label{subsec:pf_multipar}

Let $T=K_{s_1, \ldots, s_r}$ with $r\geq 3$ and $s_r\geq \cdots \geq s_1\geq 1$.
Put $p\colonequals s_1+s_2$ and $q\colonequals s_3+\cdots +s_r$.
Since $K_r\subseteq T$ and $\chi(T)=r$, a graph $F$ is a subgraph of a blow-up of $T$ if and only if $\chi(F)\leq r$.
Thus the assumption that no $F\in \mathcal{F}$ is a subgraph of a blow-up of $T$ is equivalent to $\min_{F\in \mathcal{F}}\chi(F)>r$.
Let $k\colonequals \min_{F\in \mathcal{F}}\chi(F)$, so $k>r$.
Let $G\in \mathcal{G}_k$ be an $n$-vertex graph, and set $z\colonequals \cut_{k-1}(G)$.
Then Lemma~\ref{le:AlSS-NPhard} implies that
\begin{equation}\label{eq:multipar-z}
z=\cut_{k-1}(G)=\ex(G, K_2, K_k)
\end{equation}
and it is NP-hard to compute $z$ exactly.

We first construct the graph $Z$.
Let $A$ and $B$ be positive integers with $B\geq s_r$.
Start with $G$.
For every edge $uv\in E(G)$, introduce $r-2$ pairwise disjoint independent sets $V_{uv}^3, \ldots, V_{uv}^r,$ each of size $A$, where all these sets are disjoint from $V(G)$ and from all sets introduced for other edges.
Add all edges so that $\{u\}, \{v\}, V_{uv}^3, \ldots, V_{uv}^r$ form a blow-up of $K_r$.
No other edges incident with the new vertices are added.
Finally, let $Z$ be the $B$-blow-up of the resulting graph.

As before, the vertices arising from $V(G)$ are called {\it source vertices}.
For each $u\in V(G)$, let $C_u$ be the corresponding cluster, so $|C_u|=B$.
After the final blow-up, each set $V_{uv}^i$ gives rise to an independent set $C_{uv}^{i}$ of size $AB$.
Let $C_{uv}=\bigcup_{3\leq i\leq r}C_{uv}^{i}$.
We call the vertices in $C_{uv}$ auxiliary vertices corresponding to the edge $uv$.
Note that auxiliary vertices corresponding to distinct edges of $G$ are nonadjacent.
Since $e(G)\leq n^2/2$, we have
\begin{equation*}
    v(Z) = nB+e(G)(r-2)AB = O_r(n^2AB).
\end{equation*}

Let $\mathcal{P}$ be the set of distinct permutations $\mathbf{a}=(a_1,\ldots,a_r)$ of the multiset $\{s_1,\ldots,s_r\}$ satisfying $a_1+a_2=p$.
Define $$S\colonequals \sum_{\mathbf{a}\in \mathcal{P}}{B\choose a_1}{B\choose a_2}\prod_{i=3}^{r}{AB \choose a_i}.$$
For each $\mathbf{a} \in \mathcal{P}$, we have $\sum_{i=3}^{r} a_i=q$.
Since $\mathcal{P}$ is a fixed nonempty set and $B\geq s_r$, we have $S=\Theta_T\left(A^{q}B^{v(T)}\right).$
We next state and prove three claims concerning the number of copies of $T$ in $Z$.

\begin{claim}\label{cl:multipar-T-1}
Every copy of $T$ in $Z$ contains at most $q$ auxiliary vertices.
\end{claim}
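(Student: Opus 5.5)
Fix a copy $T'$ of $T=K_{s_1,\ldots,s_r}$ in $Z$. The plan is to first localize the auxiliary vertices of $T'$ to a single edge $uv$ of $G$, and then show that at least $p=s_1+s_2$ of its vertices must be source vertices.

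\textbf{Localization.} Suppose $T'$ contains auxiliary vertices from two different edges, say $w\in C_{uv}$ and $w'\in C_{u'v'}$ with $uv\neq u'v'$. Since $r\ge 3$, the graph $T$ has diameter at most $2$: two vertices in different parts are adjacent, and two vertices in the same part share a neighbour in another part. So $w$ and $w'$ are either adjacent, or have a common neighbour in $T'$. They cannot be adjacent, because auxiliary vertices of distinct edges are nonadjacent. A common neighbour must therefore be a source vertex lying in $C_u\cup C_v$ and also in $C_{u'}\cup C_{v'}$. Such a source vertex exists only if the edges $uv$ and $u'v'$ share an endpoint. This common-neighbour argument alone does not rule out the two edges sharing an endpoint, so I will not rely on it to finish the localization. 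Instead I bound the number of auxiliary vertices directly from the partition structure, as follows.

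\textbf{Coloring.} Color the auxiliary vertices of $C^i_{e}$ with color $i$, for $3\le i\le r$ and every edge $e$ of $G$. Call a part of $T'$ \emph{auxiliary} if it contains an auxiliary vertex. The key observation is that two auxiliary vertices in different parts of $T'$ are adjacent. Auxiliary vertices of distinct edges are nonadjacent, and within $C_e$ each class $C^i_e$ is independent. Hence any two auxiliary vertices in different parts of $T'$ belong to the same $C_e$ and to different classes $C^i_e$.

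\textbf{Case analysis.} Let the auxiliary parts be $P_1,\ldots,P_t$.
\begin{itemize}
\item If $t\ge 2$, the observation shows that all auxiliary vertices of $T'$ lie in a single $C_e$ with $e=uv$. Moreover, each auxiliary part uses exactly one class $C_e^{i}$, and distinct auxiliary parts use distinct classes.
\item If $t=1$, the auxiliary vertices of the single auxiliary part $P_1$ are adjacent to all vertices of the other $r-1\ge 2$ parts, and all of those vertices are source vertices. These are adjacent to the auxiliary vertices of $P_1$, so they lie in $C_u\cup C_v$, where $uv$ is the edge of one auxiliary vertex in $P_1$. Any two auxiliary vertices of $P_1$ are then both adjacent to the whole source set $\bigcup_{j\ne 1}P_j$. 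That set meets at least two parts and is complete to $P_1$. An auxiliary vertex adjacent to vertices of both $C_u$ and $C_v$ must come from an edge containing both $u$ and $v$, hence from the edge $uv$. If instead all the other parts lie inside $C_u$ alone, they would be independent; but the parts of $T'$ are pairwise complete to one another, which is a contradiction since there are at least two of them.
\end{itemize}
So in every case, all auxiliary vertices of $T'$ lie in $C_{uv}$ for one edge $uv$. Each part contains vertices from at most one class: in the $t\ge2$ case this was shown above, and in the $t=1$ case $P_1$ must likewise sit in a single class, since $P_1$ is complete to the remaining parts, which lie in $C_u\cup C_v$.

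\textbf{Counting.} The source vertices of $T'$ lie in $C_u\cup C_v$, because every source vertex of $T'$ is adjacent to some auxiliary vertex, or else all parts are source-only. Now $Z[C_u\cup C_v\cup C_{uv}]$ is a blow-up of $K_r$ with classes $C_u, C_v, C^3_{uv},\ldots,C^r_{uv}$. A complete $r$-partite graph embedded in a blow-up of $K_r$ must map its parts injectively into the classes, since two distinct parts cannot share a class (classes are independent while parts are complete to each other). Hence the two parts mapped to $C_u$ and $C_v$ are source-only, and the auxiliary vertices come from at most $r-2$ parts. Since $s_1\le s_2\le\cdots\le s_r$, the total size of any $r-2$ parts is at most $s_3+\cdots+s_r=q$.

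\textbf{Remaining cases.} If $T'$ has no auxiliary vertex the claim is trivial. If some part mixes source and auxiliary vertices, the argument still works by assigning each vertex to its class. The only real obstacle is the localization step, which I expect to handle by the adjacency/common-neighbour reasoning above.
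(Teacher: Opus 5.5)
Your localization is sound. For $t\ge 2$ it follows from your observation. For $t=1$, the $r-1\ge 2$ source-only parts are pairwise complete, so they meet both $C_u$ and $C_v$, and this forces every auxiliary vertex of $P_1$ into $C_{uv}$.

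The step that fails is the first sentence of \textbf{Counting}: the claim that every source vertex of $T'$ lies in $C_u\cup C_v$ because it is adjacent to some auxiliary vertex. When $t=1$, a source vertex lying in the auxiliary part $P_1$ itself is adjacent in $T'$ only to vertices of the other parts, and all of those are source vertices. So it need only lie in a cluster $C_y$ with $y$ adjacent to both $u$ and $v$ in $G$.

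For instance, take $r=3$, $T=K_{1,1,2}$, and a triangle $uvy$ in $G$. Choose $u'\in C_u$, $v'\in C_v$, $y'\in C_y$ and $a\in C^3_{uv}$. Then the parts $\{u'\}$, $\{v'\}$, $\{a,y'\}$ span a copy of $T$ in $Z$ that is not contained in $Z[C_u\cup C_v\cup C_{uv}]$. Your ``complete $r$-partite graph in a blow-up of $K_r$'' argument therefore does not apply to such copies. The remark about ``assigning each vertex to its class'' does not cover them either, since $y'$ lies in none of $C_u,C_v,C^3_{uv},\ldots,C^r_{uv}$. Your justification that $P_1$ ``sits in a single class'' also does not work: every $C^i_{uv}$ is complete to $C_u\cup C_v$, and $P_1$ may contain source vertices as above.

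The repair is one line, and it is what the paper does: you never need to know where the source vertices are. Once all auxiliary vertices lie in a single $C_{uv}$, each class $C^i_{uv}$ is independent and so meets at most one part of $T'$. Hence the auxiliary vertices occupy at most $r-2$ parts and number at most $s_3+\cdots+s_r=q$. Alternatively, in your case $t=1$ all auxiliary vertices sit in one part, so there are at most $s_r\le q$ of them because $r\ge 3$.

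The paper splits into two cases. Either the auxiliary vertices come from at least two edges of $G$, in which case they all lie in one part and there are at most $s_r\le q$ of them. Or they all lie in one $C_{uv}$, and the class-counting bound applies. Your $t=1$ argument in fact shows that the first case never occurs. That is a correct strengthening, but it is not needed for the claim.
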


\begin{proof}
Let $T'$ be a copy of $T$ in $Z$, and assume that $T'$ contains an auxiliary vertex.
First, assume that $T'$ contains auxiliary vertices corresponding to at least two edges of $G$.
Since auxiliary vertices corresponding to distinct edges of $G$ are nonadjacent, all auxiliary vertices of $T'$ must belong to the same partite set of $T'$.
Thus the number of such vertices is at most $s_r\leq q$.
Next, assume that all auxiliary vertices of $T'$ belong to one $C_{uv}$.
Since every $C_{uv}^{i}$ is an independent set, it meets at most one partite set of $T'$.
Consequently, the auxiliary vertices lie in at most $r-2$ partite sets of $T'$.
Therefore, the number of auxiliary vertices of $T'$ is at most $s_3+\cdots +s_r=q$.
\end{proof}

\begin{claim}\label{cl:multipar-T-2}
For each copy of $T$ containing exactly $q$ auxiliary vertices, all its auxiliary vertices belong to one $C_{uv}$.
Moreover, for each $uv\in E(G)$, the number of copies of $T$ containing exactly $q$ auxiliary vertices of $C_{uv}$ is $S$.
\end{claim}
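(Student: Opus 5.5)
The plan is to argue structurally about how a copy $T'$ of $T$ can sit inside $Z$, using two facts. First, every cluster $C_w$ and every $C_{uv}^i$ is an independent set of $Z$. Second, an auxiliary vertex of $C_{uv}$ has all its neighbours in $C_u\cup C_v\cup C_{uv}$. Since $T'$ is a subgraph of $Z$, the vertices of $T'$ lying in one independent set of $Z$ are pairwise nonadjacent in $T'$. Hence they lie in a single partite set of $T'$, because two vertices in distinct partite sets of a complete multipartite graph are adjacent.

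\emph{First assertion.} Let $T'$ have exactly $q$ auxiliary vertices, and suppose they come from at least two edges $uv\neq u'v'$. As in the proof of Claim~\ref{cl:multipar-T-1}, all auxiliary vertices then lie in one partite set $P$ of $T'$. Every vertex in the other $r-1\geq 2$ partite sets is therefore a source vertex adjacent to all of $P$. In particular it is adjacent to an auxiliary vertex of $C_{uv}$ and to one of $C_{u'v'}$, so it lies in $(C_u\cup C_v)\cap(C_{u'}\cup C_{v'})$. Since distinct edges share at most one endpoint, this set is contained in a single cluster $C_w$. So two different partite sets of $T'$ lie in the independent set $C_w$, contradicting that they are joined completely. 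This argument does not use $s_r<q$, which is important since $q=s_r$ when $r=3$. It shows that all auxiliary vertices lie in one $C_{uv}$.

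\emph{Structure when all $q$ auxiliary vertices lie in $C_{uv}$.} Each $C_{uv}^i$ meets at most one partite set of $T'$, so the auxiliary vertices are covered by at most $r-2$ partite sets. Their total size is at most the sum of the $r-2$ largest part sizes, which is $q$ by the ordering $s_1\leq\cdots\leq s_r$. Equality then forces the following:
\begin{itemize}
\item exactly $r-2$ partite sets consist entirely of auxiliary vertices;
\item these $r-2$ sets correspond bijectively to $C_{uv}^3,\ldots,C_{uv}^r$, since they are pairwise completely joined and each $C_{uv}^i$ meets only one of them;
\item the remaining two partite sets have sizes $a_1,a_2$ with $a_1+a_2=p$ and consist of source vertices adjacent to auxiliary vertices of $C_{uv}$, hence lie in $C_u\cup C_v$.
\end{itemize}
Since $C_u$ and $C_v$ are independent and these two partite sets are completely joined, neither can meet both clusters. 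Indeed, a vertex of the other set in $C_u$, say, would be nonadjacent to the part of the first set lying in $C_u$. So one of them lies in $C_u$ and the other in $C_v$.

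\emph{Counting.} Record, for such a copy, its parts in $C_u$, $C_v$, $C_{uv}^3,\ldots,C_{uv}^r$. Their sizes form a vector $\mathbf{a}\in\mathcal{P}$, with $C_u$ assigned $a_1$ and $C_v$ assigned $a_2$. Conversely, any choice of $\mathbf{a}\in\mathcal{P}$ and of subsets of sizes $a_1,a_2,a_3,\ldots,a_r$ in $C_u,C_v,C_{uv}^3,\ldots,C_{uv}^r$ spans a copy of $T$ in $Z$, because all pairs across these sets are edges of $Z$. A copy of a complete multipartite graph is determined by its vertex partition into parts, and the recorded data recovers exactly that partition. So the correspondence is a bijection, and the number of copies equals $\sum_{\mathbf{a}\in\mathcal{P}}\binom{B}{a_1}\binom{B}{a_2}\prod_{i\ge3}\binom{AB}{a_i}=S$.

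The main obstacle is the first assertion in the case $r=3$, where the bound $s_r\leq q$ from Claim~\ref{cl:multipar-T-1} is not strict. It is handled by the argument that the shared neighbourhood of two distinct edges' auxiliary sets is confined to a single independent cluster.
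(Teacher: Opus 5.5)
Your proof is correct and follows the paper's argument. The structural analysis matches: each $C_{uv}^i$ meets at most one partite set, the $r-2$ auxiliary parts are forced to be entire parts, and the remaining two parts split between $C_u$ and $C_v$. The bijective count giving $S$ is also the same. The only difference is in the first assertion. The paper handles $r\geq 4$ by the counting bound $s_r<q$ and uses the common-neighbourhood argument (distinct edges share at most one endpoint, so the remaining parts would lie in one independent cluster) only for $r=3$. You observe that this argument works uniformly for all $r\geq 3$, which removes the case split.
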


\begin{proof}
Let $T'$ be a copy of $T$ containing exactly $q$ auxiliary vertices in $Z$.
We first show that all its auxiliary vertices belong to one $C_{uv}$.
Suppose for a contradiction that $T'$ contains auxiliary vertices corresponding to at least two edges of $G$.
Then, by the same argument as in the proof of Claim~\ref{cl:multipar-T-1}, all $q$ auxiliary vertices of $T'$ must belong to the same partite set of $T'$.
For $r\geq4$, the number of such vertices is at most $s_r= q-(s_3+\cdots+s_{r-1})< q$, a contradiction.
For $r=3$, we have $q=s_3$, so all $q$ auxiliary vertices of $T'$ must belong to the partite set of size $s_3$.
Let $u_1v_1$ and $u_2v_2$ be two edges of $G$ such that $T'$ contains auxiliary vertices from $C_{u_1v_1}$ and $C_{u_2v_2}$.
Every vertex in the other two partite sets is a source vertex adjacent to these auxiliary vertices, so it must belong to a cluster corresponding to a vertex of $\{u_1,v_1\}\cap\{u_2,v_2\}$.
Since the two edges $u_1v_1$ and $u_2v_2$ are distinct, there is at most one vertex in $\{u_1,v_1\}\cap\{u_2,v_2\}$.
However, every cluster is an independent set, so it cannot contain vertices from two partite sets of $T'$, a contradiction.
Hence, all auxiliary vertices of $T'$ belong to one $C_{uv}$.

Fix $uv\in E(G)$, and suppose that all $q$ auxiliary vertices of $T'$ lie in $C_{uv}$.
Since each independent set $C_{uv}^i$ meets at most one partite set of $T'$, these vertices lie in at most $r-2$ partite sets, whose total size is at most $s_3+ \cdots +s_r=q$.
Thus they constitute exactly $r-2$ entire partite sets of sizes $s_3, \ldots, s_r$, one contained in each of $C_{uv}^3, \ldots, C_{uv}^r$.
The remaining two partite sets consist of source vertices adjacent to all these auxiliary vertices, so they lie in $C_u\cup C_v$.
Since $C_u$ and $C_v$ are independent sets, one partite set lies entirely in $C_u$ and the other entirely in $C_v$.
Consequently, every such copy is obtained uniquely by choosing subsets of sizes $a_1, \ldots, a_r$ in $C_u, C_v, C_{uv}^3, \ldots, C_{uv}^r$, respectively, for some $(a_1, \ldots, a_r)\in\mathcal{P}$.
Conversely, every such choice gives a copy of $T$.
Therefore, the number of copies of $T$ containing exactly $q$ auxiliary vertices of $C_{uv}$ is
$$\sum_{\mathbf{a}\in \mathcal{P}}{|C_u|\choose a_1}{|C_v|\choose a_2}\prod_{i=3}^{r}{|C_{uv}^{i}| \choose a_i}=\sum_{\mathbf{a}\in \mathcal{P}}{B\choose a_1}{B\choose a_2}\prod_{i=3}^{r}{AB \choose a_i}=S.$$
This completes the proof.
\end{proof}

\begin{claim}\label{cl:multipar-T-3}
The number of copies of $T$ containing at most $q-1$ auxiliary vertices in $Z$ is at most $$O_T\left(n^{v(T)+q-1}A^{q-1}B^{v(T)}\right).$$
\end{claim}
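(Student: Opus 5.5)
We bound the copies by the number of auxiliary vertices they use: for $j\in\{0,1,\ldots,q-1\}$, we count copies of $T$ in $Z$ with exactly $j$ auxiliary vertices and $v(T)-j$ source vertices. The cruder count used in Lemma~\ref{lem:symmetric-rooted-Z0+H} is the right model. There are $nB$ source vertices in total, so the source part of such a copy can be chosen in at most $(nB)^{v(T)-j}$ ways.

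The auxiliary part needs a little more care. If we bounded it by $v(Z)^j=O((n^2AB)^j)$, we would get $n^{v(T)+j}A^jB^{v(T)}$, which for $j=q-1$ is exactly the claimed bound $n^{v(T)+q-1}A^{q-1}B^{v(T)}$. The general term is $n^{v(T)+j}A^jB^{v(T)}\le n^{v(T)+q-1}A^{q-1}B^{v(T)}$ for every $j\le q-1$, since $n,A\ge1$.

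So we first fix $j$. We then choose the $v(T)-j$ source vertices in at most $(nB)^{v(T)-j}$ ways, and the $j$ auxiliary vertices among the $e(G)(r-2)AB\le r n^2AB$ auxiliary vertices in at most $(rn^2AB)^j$ ways. Finally we choose which bijection with $V(T)$ is used, or equivalently just bound the number of subgraphs on a given vertex set by $2^{\binom{v(T)}{2}}$, a constant depending only on $T$.

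Summing over $j$ gives
$$\sum_{j=0}^{q-1}O_T\left((nB)^{v(T)-j}(n^2AB)^{j}\right)=\sum_{j=0}^{q-1}O_T\left(n^{v(T)+j}A^{j}B^{v(T)}\right)=O_T\left(n^{v(T)+q-1}A^{q-1}B^{v(T)}\right),$$
where the last step uses that there are only $q\le v(T)$ terms, each dominated by the $j=q-1$ term. This gives the claim.

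There is no real obstacle. The only point to watch is that the count of copies of $T$ on a fixed labelled vertex set is bounded by a constant depending on $T$, so the implied constant stays $O_T(\cdot)$. This parallels the $j\ge3$ source-vertex count in Lemma~\ref{lem:symmetric-rooted-Z0+H}.
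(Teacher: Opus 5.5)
Your proposal is correct and follows essentially the same route as the paper's proof: you split by the number $j\le q-1$ of auxiliary vertices, bound the choices by $(nB)^{v(T)-j}(n^2AB)^j$ up to a $T$-dependent constant, and note that each term is dominated by the $j=q-1$ term. The paper does exactly this; your extra bookkeeping on the number of auxiliary vertices ($e(G)(r-2)AB\le rn^2AB$) and on the constant number of copies of $T$ on a fixed vertex set only makes explicit what the paper leaves implicit.
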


\begin{proof}
The graph $Z$ has $nB$ source vertices and $O_r(n^2AB)$ auxiliary vertices.
For each $j\in \{0, 1, \ldots, q-1\}$, the number of copies of $T$ containing exactly $j$ auxiliary vertices is at most $O_T\left((nB)^{v(T)-j}(n^2AB)^{j}\right)=O_T\left(n^{v(T)+j}A^{j}B^{v(T)}\right)= O_T\left(n^{v(T)+q-1}A^{q-1}B^{v(T)}\right).$
Summing over $j\in \{0, 1, \ldots, q-1\}$, we obtain that the number of copies of $T$ containing at most $q-1$ auxiliary vertices is at most $O_T\left(n^{v(T)+q-1}A^{q-1}B^{v(T)}\right).$
\end{proof}

We next establish an upper bound on $\ex(Z, T, \mathcal{F})$ and a lower bound on $\ex'(Z, T, \mathcal{F})$.

\begin{claim}\label{cl:multipar-upper}
There exist constants $c=c(T,\mathcal{F})>0$, $C=C(T,\mathcal{F})>0$ and $\eta=\eta(T,\mathcal{F})>0$ such that
$$\ex(Z, T, \mathcal{F})\leq Sz +S \frac{cv(Z)^{w-\eta}}{A^{w-k}B^{w}} +Cn^{v(T)+q-1}A^{q-1}B^{v(T)},$$ where $w=k+(r-2){k\choose 2}$.
\end{claim}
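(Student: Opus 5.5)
Fix an $\mathcal{F}$-free subgraph $J\subseteq Z$ with $\mathcal{N}(J,T)=\ex(Z,T,\mathcal{F})$. I split the copies of $T$ in $J$ by their number of auxiliary vertices. By Claim~\ref{cl:multipar-T-1} this number is at most $q$. Copies with at most $q-1$ auxiliary vertices number at most $Cn^{v(T)+q-1}A^{q-1}B^{v(T)}$ by Claim~\ref{cl:multipar-T-3}, which gives the last term. By Claim~\ref{cl:multipar-T-2}, every copy with exactly $q$ auxiliary vertices lives inside $C_u\cup C_v\cup C_{uv}$ for a unique edge $uv\in E(G)$. Its partite sets of sizes $a_1,\ldots,a_r$ sit in $C_u,C_v,C_{uv}^3,\ldots,C_{uv}^r$, respectively, for some $\mathbf{a}\in\mathcal{P}$. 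So it remains to bound, for each edge $uv$ and each $\mathbf{a}$, the number $N_{uv}^{\mathbf{a}}$ of such copies inside $J$.

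The key local step is a double count relating these copies to transversal copies of $K_r$. Fix $uv$ and $\mathbf{a}$, and write $N_1=|C_u|$, $N_2=|C_v|$, $N_i=|C_{uv}^i|$ for $i\geq3$. Let $K_{uv}$ be the number of $r$-tuples $(y_1,\ldots,y_r)\in C_u\times C_v\times C_{uv}^3\times\cdots\times C_{uv}^r$ spanning a $K_r$ in $J$. Each copy of $T$ of this shape contains exactly $\prod_i a_i$ such transversal $K_r$'s. Each transversal $K_r$ lies in at most $\prod_i\binom{N_i-1}{a_i-1}$ copies of that shape. Hence
$$N_{uv}^{\mathbf{a}}\leq K_{uv}\prod_{i=1}^r\frac{\binom{N_i-1}{a_i-1}}{a_i}=\frac{K_{uv}}{\prod_i N_i}\prod_{i=1}^r\binom{N_i}{a_i}.$$
Now let $d_J(uv)\colonequals K_{uv}/\prod_iN_i$. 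Summing over $\mathbf{a}\in\mathcal{P}$ gives at most $S\cdot d_J(uv)$ copies for the edge $uv$.

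Next comes the random transversal. Independently choose $u'\in C_u$ uniformly for each $u\in V(G)$, and $w_{uv}^i\in C_{uv}^i$ uniformly for each edge $uv$ and each $i$. Let $R_J$ be the spanning subgraph of $G$ containing $uv$ if and only if $u',v',w_{uv}^3,\ldots,w_{uv}^r$ span a $K_r$ in $J$. Then $\Pr[uv\in E(R_J)]=d_J(uv)$. So the copies with $q$ auxiliary vertices number at most $S\,\mathbb{E}[e(R_J)]$. As in Lemma~\ref{le:upper}, \eqref{eq:multipar-z} gives $e(R_J)\leq z+\mathcal{N}(R_J,K_k)$.

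It remains to show $\mathbb{E}[\mathcal{N}(R_J,K_k)]\leq c\,v(Z)^{w-\eta}/(A^{w-k}B^w)$. Let $W$ be the graph obtained from $K_k$ by attaching, for each of its $\binom{k}{2}$ edges, $r-2$ new vertices forming a $K_r$ with that edge, so $v(W)=w$. A copy of $K_k$ in $R_J$ together with the selected auxiliary vertices of its edges yields a copy of $W$ in $J$. This map is injective up to a constant factor depending on $W$. Each copy of $W$ of this form is fully selected with probability $B^{-k}(AB)^{-(w-k)}$. Since $k-2\geq r-2$, the attached vertices can be coloured avoiding the two colours of their edge, so $\chi(W)=k$ and $K_k\subseteq W$. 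Hence a fixed $F_0\in\mathcal{F}$ with $\chi(F_0)=k$ admits a homomorphism into $K_k\subseteq W$, i.e.\ $F_0$ is a subgraph of a blow-up of $W$. Since $J$ is $F_0$-free, Lemma~\ref{le:AlSh16-degen} gives $\mathcal{N}(J,W)\leq c\,v(Z)^{w-\eta}$, and multiplying by the selection probability yields the middle term.

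I expect the main obstacle to be the double-counting inequality converting copies of $T$ into transversal $K_r$-density with the exact normalising constant $S$, including the sum over $\mathcal{P}$ when some $s_i$ coincide. The rest mirrors Lemma~\ref{le:upper}.
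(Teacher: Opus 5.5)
Your proposal is correct and follows the paper's proof step for step. You split off the copies with at most $q-1$ auxiliary vertices via Claim~\ref{cl:multipar-T-3}, bound the remaining copies by $S\,\mathbb{E}[e(R_J)]$ using the same random transversal $R_J$, and control $\mathbb{E}[\mathcal{N}(R_J,K_k)]$ through the same auxiliary graph $W$ and Lemma~\ref{le:AlSh16-degen}. The only difference is cosmetic: the paper proves $|Y_{uv,\mathbf{a}}(J)|\le \binom{B}{a_1}\binom{B}{a_2}\prod_{i\ge 3}\binom{AB}{a_i}\Pr[uv\in E(R_J)]$ by choosing random $a_i$-subsets and a uniform representative from each, which is exactly the probabilistic form of your double count using $\binom{N-1}{a-1}/a=\binom{N}{a}/N$.
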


\begin{proof}
Let $J$ be an $\mathcal{F}$-free subgraph of $Z$ with $\mathcal{N}(J,T)=\ex(Z, T, \mathcal{F})$.
Let $Y(J)$ denote the set of copies of $T$ in $J$ containing exactly $q$ auxiliary vertices.
By Claim~\ref{cl:multipar-T-3}, it suffices to show that for some constants $c=c(T,\mathcal{F})>0$ and $\eta=\eta(T,\mathcal{F})>0$, we have
\begin{equation}\label{eq:multipar-YJ}
    |Y(J)| \leq Sz +S \frac{cv(Z)^{w-\eta}}{A^{w-k}B^{w}},
\end{equation}
where $w=k+(r-2){k\choose 2}$.

Define a random spanning subgraph $R_J$ of $G$ on the vertex set $V(G)$ as follows.
For every $u\in V(G)$, choose a vertex $u'\in C_u$ uniformly at random.
For every $uv\in E(G)$ and every $i\in\{3,\ldots,r\}$, choose a vertex $w_{uv}^i\in C_{uv}^i$ uniformly at random, with all choices mutually independent.
Put $uv\in E(R_J)$ if and only if $u', v', w_{uv}^3, \ldots, w_{uv}^r$ are vertices of $J$ and induce a copy of $K_r$ in $J$.

For $uv\in E(G)$ and $\mathbf a=(a_1,\ldots,a_r)\in\mathcal P$, let $Y_{uv,\mathbf a}(J)$ consist of the copies in $Y(J)$ having $a_1$ vertices in $C_u$, $a_2$ vertices in $C_v$, and $a_i$ vertices in $C_{uv}^i$ for each $3\leq i\leq r$.
By Claim~\ref{cl:multipar-T-2}, we have $|Y(J)|=\sum_{uv\in E(G)}\sum_{\mathbf{a}\in \mathcal{P}}|Y_{uv,\mathbf{a}}(J)|$.
We now prove an upper bound on $|Y_{uv,\mathbf{a}}(J)|$.
Choose uniformly and independently an $a_1$-subset of $C_u$, an $a_2$-subset of $C_v$, and an $a_i$-subset of $C_{uv}^{i}$ for every $i\in\{3,\ldots,r\}$.
Then from each chosen subset, choose one representative vertex uniformly at random.
For any one of the sets $C_u,C_v,C_{uv}^3,\ldots,C_{uv}^r$, if its size is $M$ and the chosen subset has size $a$, then each vertex is the representative with probability $\frac{a}{M}\cdot\frac{1}{a}=\frac{1}{M}$.
Thus the representatives are independent and uniformly distributed over $C_u, C_v, C_{uv}^3, \ldots, C_{uv}^r$, respectively, exactly as in the definition of $R_J$.
Moreover, if the chosen subsets form a copy of $T$ in $J$, then the $r$ representative vertices form a copy of $K_r$ in $J$.
Therefore, we have
$|Y_{uv,\mathbf{a}}(J)| \leq {B\choose a_1}{B\choose a_2}\prod_{i=3}^{r}{AB \choose a_i} \Pr[uv\in E(R_J)],$
and thus
$$|Y(J)|=\sum_{uv\in E(G)}\sum_{\mathbf{a}\in \mathcal{P}}|Y_{uv,\mathbf{a}}(J)| \leq \sum_{uv\in E(G)}\sum_{\mathbf{a}\in \mathcal{P}}{B\choose a_1}{B\choose a_2}\prod_{i=3}^{r}{AB \choose a_i} \Pr[uv\in E(R_J)] = S\cdot \mathbb{E}[e(R_J)].$$
Now to prove Inequality~\eqref{eq:multipar-YJ}, it suffices to show that
\begin{equation}\label{eq:multipar-eRJ}
    \mathbb{E}[e(R_J)] \leq z +\frac{cv(Z)^{w-\eta}}{A^{w-k}B^{w}}.
\end{equation}

Since deleting one edge from each copy of $K_k$ in $R_J$ yields a $K_k$-free subgraph of $G$, we have $\ex(G, K_2, K_k)\geq e(R_J)-\mathcal{N}(R_J, K_k)$.
Combining this inequality with \eqref{eq:multipar-z}, we obtain
$e(R_J)\leq \ex(G, K_2, K_k)+\mathcal{N}(R_J, K_k)=z+\mathcal{N}(R_J, K_k).$
Taking expectations, we have
\begin{equation}\label{eq:multipar-eRJ+}
    \mathbb{E}[e(R_J)] \leq z+\mathbb{E}[\mathcal{N}(R_J, K_k)].
\end{equation}
It remains to show that $\mathbb{E}[\mathcal{N}(R_J, K_k)]\leq  \frac{cv(Z)^{w-\eta}}{A^{w-k}B^{w}}.$

Let $F_0\in \mathcal{F}$ with $\chi(F_0)=k$.
Then $J$ is $F_0$-free.
Let $W$ be the graph obtained from $K_k$ by adding, for every edge of $K_k$, $r-2$ new vertices and adding all edges so that these $r-2$ vertices together with the two endvertices of the edge form a copy of $K_r$.
All the new vertices are distinct, and no edges are added between vertices corresponding to distinct edges of $K_k$.
Then $v(W)=k+(r-2)\binom{k}{2}=w.$
Moreover, since $\chi(F_0)=k$ and $K_k\subseteq W$, $F_0$ is a subgraph of a sufficiently large blow-up of $W$.
By Lemma~\ref{le:AlSh16-degen}, there exist constants $c_1>0$ and $0<\eta<w$ such that $\mathcal{N}(J,W)\leq c_1v(Z)^{w-\eta}.$

For every copy $K$ of $K_k$ in $G$, let $\mathbf{1}_K$ be the indicator random variable of the event that $K$ is a subgraph of $R_J$.
Then $\mathcal{N}(R_J,K_k)=\sum_{K}\mathbf{1}_K,$ where the sum is taken over all copies $K$ of $K_k$ in $G$.
Hence, we have $$\mathbb{E}[\mathcal{N}(R_J,K_k)]=\sum\nolimits_{K}\Pr[K\subseteq R_J]$$ by linearity of expectation.
Fix a copy $K$ of $K_k$ in $G$.
By the definition of $R_J$, in order that $K$ be contained in $R_J$, we need to choose one vertex from $C_u$ for every $u\in V(K)$, and for every $uv\in E(K)$, we need to choose one vertex from each of $C_{uv}^3, \ldots, C_{uv}^r$. The total number of possible choices of these $k+(r-2)\binom{k}{2}=w$ vertices is $B^k(AB)^{w-k}=A^{w-k}B^w.$
All these choices occur with the same probability $\frac{1}{A^{w-k}B^w}$.
If $K\subseteq R_J$, then the $w$ chosen vertices form a copy of $W$ in $J$.
Let $M_K$ be the number of choices of these $w$ vertices for which $K\subseteq R_J$.
Then $\Pr[K\subseteq R_J]=\frac{M_K}{A^{w-k}B^w},$ so
$$\mathbb{E}[\mathcal{N}(R_J,K_k)]=\sum\nolimits_{K}\Pr[K\subseteq R_J]=\sum\nolimits_{K}\frac{M_K}{A^{w-k}B^w}=\frac{1}{A^{w-k}B^w}\sum\nolimits_{K} M_K.$$

Since $W$ is a fixed graph depending only on $r$ and $k$, there is a constant $c_{r,k}>0$ such that every copy of $W$ in $J$ can arise from at most $c_{r,k}$ choices of the $w$ vertices counted in $\sum_K M_K$.
Thus $\sum_K M_K \leq c_{r,k}\mathcal{N}(J,W).$
Combining this bound with $\mathcal{N}(J,W)\leq c_1v(Z)^{w-\eta}$, we have
$$\mathbb{E}[\mathcal{N}(R_J,K_k)]=\frac{1}{A^{w-k}B^w}\sum\nolimits_{K} M_K\leq \frac{1}{A^{w-k}B^w}c_{r,k} \mathcal{N}(J,W)\leq \frac{c_{r,k}c_1v(Z)^{w-\eta}}{A^{w-k}B^{w}} = \frac{cv(Z)^{w-\eta}}{A^{w-k}B^{w}},$$
where $c=c_{r,k}c_1$ depends only on $T$ and $\mathcal{F}$.
This together with \eqref{eq:multipar-eRJ+} implies \eqref{eq:multipar-eRJ}, and hence \eqref{eq:multipar-YJ}.
The proof of Claim~\ref{cl:multipar-upper} is complete.
\end{proof}

\begin{claim}\label{cl:multipar-lower}
$\ex'(Z, T, \mathcal{F})\geq Sz.$
\end{claim}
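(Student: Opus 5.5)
We exhibit an explicit $\mathcal{F}$-free spanning subgraph of $Z$ and count its copies of $T$ via Claim~\ref{cl:multipar-T-2}. Let $H$ be a spanning subgraph of $G$ with $\chi(H)\leq k-1$ and $e(H)=z$. Let $Z_H$ be the spanning subgraph of $Z$ obtained by keeping all of $H(B)$ on the source vertices, and, for each $uv\in E(H)$, all edges of $Z$ incident with $C_{uv}$. For $uv\in E(G)\setminus E(H)$ we delete every edge of $G(B)$ between $C_u$ and $C_v$ and every edge incident with $C_{uv}$, leaving those auxiliary vertices isolated. Then $Z_H$ is spanning, so it suffices to show that $Z_H$ is $\mathcal{F}$-free and that $\mathcal{N}(Z_H,T)\geq Sz$.

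For freeness, the natural approach is a proper $(k-1)$-coloring. Fix a proper $(k-1)$-coloring $\chi_H$ of $H$, and give every vertex of $C_u$ the color $\chi_H(u)$. For $uv\in E(H)$ we have $\chi_H(u)\neq\chi_H(v)$. Since $r\leq k-1$, there are $k-3\geq r-2$ colors other than these two. Assign the sets $C_{uv}^3,\ldots,C_{uv}^r$ pairwise distinct such colors. The auxiliary vertices of a given edge are adjacent only to $C_u\cup C_v$ and to each other, and auxiliary vertices of distinct edges are nonadjacent. Hence this coloring is proper, since it is a proper coloring of each blown-up $K_r$. Isolated auxiliary vertices get arbitrary colors. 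Thus $\chi(Z_H)\leq k-1<\min_{F\in\mathcal{F}}\chi(F)$, so $Z_H$ is $\mathcal{F}$-free.

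For the count, fix $uv\in E(H)$. All vertices of $C_u\cup C_v\cup C_{uv}$ and all edges of $Z$ among them are present in $Z_H$: the $C_u$--$C_v$ edges come from $H(B)$, and all the other edges are incident with $C_{uv}$. By Claim~\ref{cl:multipar-T-2}, the copies of $T$ in $Z$ with exactly $q$ auxiliary vertices all in $C_{uv}$ are exactly $S$ in number, and each lies inside $C_u\cup C_v\cup C_{uv}$. Therefore each is a copy in $Z_H$. Copies for distinct edges $uv$ are distinct, because their auxiliary vertices lie in disjoint sets $C_{uv}$. Summing over $E(H)$ gives $\mathcal{N}(Z_H,T)\geq S\,e(H)=Sz$, and hence $\ex'(Z,T,\mathcal{F})\geq \mathcal{N}(Z_H,T)\geq Sz$.

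The main point to check is the color count: we need at least $r-2$ colors besides those of $u$ and $v$, that is, $k-1\geq r$. This holds because $k>r$. Everything else follows directly from Claim~\ref{cl:multipar-T-2}.
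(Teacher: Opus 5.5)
Your proof is correct and follows essentially the same route as the paper. Both extend a proper $(k-1)$-coloring of an optimal $H$ to a spanning subgraph of $Z$, then count $S$ copies of $T$ per edge of $H$ via Claim~\ref{cl:multipar-T-2}. The only difference is the choice of subgraph. The paper uses $Z_0+H(B)$, which keeps the auxiliary sets of every edge of $G$; these remain colorable because at least $k-3\geq r-2$ colors avoid those of $u$ and $v$ even when the two coincide. You instead isolate the auxiliary vertices of edges outside $H$, which changes nothing essential.
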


\begin{proof}
Let $H$ be a spanning subgraph of $G$ satisfying $\chi(H)\leq k-1$ and $e(H)=z$.
Let $Z_0$ be the spanning subgraph of $Z$ obtained by deleting all edges of $G(B)$.
Then $Z_0+H(B)$ is also a spanning subgraph of $Z$.
We shall prove that $Z_0+H(B)$ is $\mathcal{F}$-free.

We first show that $\chi(Z_0+H(B))\leq k-1$.
Fix a proper $(k-1)$-coloring of $H$, and assign every vertex of $C_u$ the color of $u$.
For each edge $uv\in E(G)$, whether the colors of $u$ and $v$ are distinct or equal, there are at least $k-1-2\geq r-2$ available colors.
So we can assign pairwise distinct colors, different from the colors of $u$ and $v$, to $C_{uv}^3, \ldots, C_{uv}^r$.
Since auxiliary vertices corresponding to distinct edges of $G$ are nonadjacent, this yields a proper $(k-1)$-coloring of $Z_0+H(B)$.
Hence, $\chi(Z_0+H(B))\leq k-1$.

Since $\min_{F\in \mathcal{F}}\chi(F)=k$, the graph $Z_0+H(B)$ is $\mathcal{F}$-free, so $\ex'(Z, T, \mathcal{F})\geq \mathcal{N}(Z_0+H(B),T).$
Moreover, for every edge $uv\in E(H)$, the sets $C_u, C_v, C_{uv}^3, \ldots, C_{uv}^r$ form a complete $r$-partite graph, and thus contain exactly $S$ copies of $T$ with $q$ auxiliary vertices by Claim~\ref{cl:multipar-T-2}.
Therefore,
$\ex'(Z, T, \mathcal{F})\geq \mathcal{N}(Z_0+H(B),T)\geq Se(H)=Sz.$
The proof is complete.
\end{proof}

Let $w=k+(r-2){k\choose 2}$, and $c, C, \eta$ be the constants given by Claim~\ref{cl:multipar-upper}.
Let $A\colonequals n^{v(T)+q}$ and $B\colonequals n^{\beta}$, where $\beta$ is a fixed positive integer depending only on $T, \mathcal{F}, \varepsilon$, chosen so that
\begin{equation}\label{eq:multipar-parameters}
     \beta\eta> 2(w-\eta)+(v(T)+q)(k-\eta)+1, \qquad \beta\varepsilon> 2(v(T)-\varepsilon)+(v(T)+q)(v(T)-\varepsilon-q)+1.
\end{equation}
Define
$$L\colonequals 0, \qquad a\colonequals 0, \qquad b\colonequals S \frac{cv(Z)^{w-\eta}}{A^{w-k}B^{w}} +Cn^{v(T)+q-1}A^{q-1}B^{v(T)}.$$
Then by Claims~\ref{cl:multipar-upper} and \ref{cl:multipar-lower}, we have
$$L+Sz-a \leq \ex'(Z, T, \mathcal{F})\leq \ex(Z, T, \mathcal{F})\leq L+Sz+b.$$
Recall that $v(Z) = O_r(n^2AB)$ and $S=\Theta_T\left(A^{q}B^{v(T)}\right).$
Combining these estimates with \eqref{eq:multipar-parameters}, we have
\begin{align*}
S^{-1}b= &~S^{-1}S \frac{cv(Z)^{w-\eta}}{A^{w-k}B^{w}} +CS^{-1}n^{v(T)+q-1}A^{q-1}B^{v(T)} \\
= &~O\left((n^2AB)^{w-\eta}A^{k-w}B^{-w}\right) +O\left(A^{-q}B^{-v(T)}n^{v(T)+q-1}A^{q-1}B^{v(T)}\right) \\
= &~O\left(n^{2(w-\eta)}A^{k-\eta}B^{-\eta}\right) +O\left(n^{v(T)+q-1}A^{-1}\right) \\
= &~O\left(n^{2(w-\eta)+(v(T)+q)(k-\eta)-\beta\eta}\right) +O\left(n^{v(T)+q-1-(v(T)+q)}\right)= O(n^{-1}) = o(1),
\end{align*}
\begin{align*}
S^{-1}v(Z)^{v(T)-\varepsilon}= &~ O\left(A^{-q}B^{-v(T)}(n^2AB)^{v(T)-\varepsilon}\right) = O\left(n^{2(v(T)-\varepsilon)}A^{v(T)-\varepsilon-q}B^{-\varepsilon}\right)\\
= &~O\left(n^{2(v(T)-\varepsilon)+(v(T)+q)(v(T)-\varepsilon-q)-\beta\varepsilon}\right) = O(n^{-1}) = o(1).
\end{align*}
Thus
$$a+b+v(Z)^{v(T)-\varepsilon}=o(S).$$

Choose a fixed integer $N^{\ast}\geq n_0(k)$ so that $B\geq s_r$ and $a+b+v(Z)^{v(T)-\varepsilon}<S/2$ whenever $n\geq N^{\ast}$, and define $\mathcal{G}_{k,N^{\ast}} \colonequals \{G\in\mathcal G_k\colon\, v(G)\geq N^{\ast}\}.$
Computing $\cut_{k-1}(G)$ remains NP-hard on this class, since an exact polynomial-time algorithm on this class would extend to all of $\mathcal{G}_k$ by exhaustive search on the instances with fewer than $N^{\ast}$ vertices.
Since $v(Z) = O_r(n^2AB)$, $A=n^{v(T)+q}$ and $B=n^{\beta}$, where $\beta$ is fixed once $T, \mathcal{F}, \varepsilon$ are fixed, we have $v(Z)=O(n^{v(T)+q+\beta+2})$.
Hence, $Z$ can be constructed in polynomial time.
Since $\mathcal{P}$ depends only on $T$ and $B\geq s_r$, the formula $S= \sum_{\mathbf{a}\in \mathcal{P}}{B\choose a_1}{B\choose a_2}\prod_{i=3}^{r}{AB \choose a_i}$ shows that $S$ is a polynomial-time computable positive integer.
Now by Corollary~\ref{cor:transfer}, each of the two approximation problems is NP-hard even when the input graph is restricted to $\{Z(G)\colon\, G\in\mathcal G_{k,N^{\ast}}\}$, and hence also for arbitrary host graphs.
This completes the proof of Theorem~\ref{thm:multipar}.

\section{Concluding remarks}
\label{sec:conclu}

In this paper, we have established Conjecture~\ref{conj:AlSh22} for every bipartite graph $T$ and every complete multipartite graph $T$, and for the additional pairs covered by Theorems~\ref{thm:chromatic+2}, \ref{thm:chromatic} and \ref{thm:clique}.
In particular, the conjectured hardness holds for an arbitrary graph $T$ with $e(T)>0$ whenever $\min_{F\in \mathcal{F}}\chi(F)\geq \chi(T)+2$.
Our results apply to both $\ex(G,T,\mathcal{F})$ and $\ex'(G,T,\mathcal{F})$,
and extend the results of Alon, Shapira and Sudakov~\cite{AlSS} for $\ex(G,K_2,\mathcal{F})$ and of Alon and Shikhelman~\cite{AlSh22} for $\ex(G,K_m,K_k)$.

The proofs encode $\cut_{k-1}(G)$ in the maximum number of copies of $T$ in an $\mathcal{F}$-free subgraph of an auxiliary graph $Z=Z(G)$, where $k=\min_{F\in \mathcal{F}}\chi(F)$.
The constructions must satisfy two properties simultaneously:
\begin{itemize}
\item[(1)] An optimal $(k-1)$-colorable spanning subgraph of $G$ must yield an $\mathcal{F}$-free spanning subgraph of $Z$ that contains enough copies of $T$ to establish the desired lower bound on $\ex'(Z, T, \mathcal{F})$.
\item[(2)] Every $\mathcal{F}$-free subgraph $J\subseteq Z$ must satisfy a corresponding upper bound on the number of copies of $T$, thereby giving an upper bound on $\ex(Z, T, \mathcal{F})$.
\end{itemize}
Iterative leaf additions and symmetric rooted extensions, together with suitable blow-ups, provide these properties in the cases studied here.
The random-transversal argument is a key ingredient in establishing the upper bounds.

A natural next question is whether the chromatic hypothesis in Theorem~\ref{thm:chromatic+2} can be weakened to $\min_{F\in \mathcal{F}}\chi(F)\geq \chi(T)+1$ without a condition on edge contraction.
More generally, extending this approach to the full conjecture would require constructions that exploit the absence of a homomorphism $F\to T$ for every $F\in \mathcal{F}$, including cases in which $\min_{F\in \mathcal{F}}\chi(F)\leq \chi(T)$.

\section*{Acknowledgements}

This work was supported by the National Key Research and Development Program of China (No. 2023YFA1010201),
the National Natural Science Foundation of China (No. 12125106, No. 12501492, No. 12501474 and No. 12601663),
the Science and Technology Commission of Shanghai Municipality (No. 22DZ2229014),
the Shaanxi Fundamental Science Research Project for Mathematics and Physics (No. 25JSQ043),
and the Fundamental Research Funds for the Central Universities (No. GK202506024).

\section*{Declaration of competing interests}

The authors declare no competing interests.

\section*{Data availability}

No data were used for the research described in the article.

\section*{AI methodology}

We used ChatGPT to assist with exploring possible graph constructions, checking mathematical arguments, and improving the presentation.
The tool materially affected the iterative leaf addition outlined in Section~\ref{subsec:sketch} and developed in Section~\ref{sec:pf_bipar}.
The authors developed and formalized the constructions and proofs, independently verified the mathematical arguments and references, and prepared the manuscript.
The authors take full responsibility for all statements, conclusions, proofs, and citations presented in this manuscript.

\addcontentsline{toc}{section}{References}
\bibliographystyle{alpha}
\bibliography{Alg_Turan_References}

\end{document}